\def \R{\mathbb R}
\def \N{\mathbb N}
\newcommand{\sset}[1]{\left\{ #1\right\}}
\newcommand{\ssets}[1]{\{ #1\}}
\newcommand{\fwh}[1]{\; \left| \; #1 \right.}
\newcommand{\union}{\cup}
\newcommand{\map}{\longrightarrow}
\newcommand{\inters}{\cap}   
\newcommand{\then}{\Longrightarrow}
\newcommand{\vecc}{\bm}
\DeclareMathOperator*{\expectation}{\mathbb E}
\newcommand{\expect}[2][]{\expectation_{#1}\nolimits\left[#2\right]}
\renewcommand\vec{\bm}
\newcommand{\noparticipation}{\ensuremath{\varnothing}}
\newcommand{\revenue}{\ensuremath{\mathsf{Rev}}}
\newcommand{\welfare}{\ensuremath{\mathsf{SW}}}
\newcommand{\unitmatrix}{\ensuremath{\bm{1}}}
\newcommand{\zeromatrix}{\ensuremath{\bm{0}}}
\newcommand\downward[1]{{}^{\downharpoonleft}\negthinspace #1}
\newcommand\upward[1]{{}^{\upharpoonleft}\negthinspace #1}
\theoremstyle{acmdefinition}
\newtheorem{remark}[theorem]{Remark}}
\begin{document}

\title{Discrete Single-Parameter Optimal Auction Design}
\titlenote{A preliminary version of this paper, not including any proofs, the application~\cref{sec:application}, and the generalization of~\cref{lemma:integral-optimality} to TDI systems, appeared in the proceedings of SAGT 2024~\parencite{gh2024_sagt}.}

\author{Yiannis Giannakopoulos}
\affiliation{
  \institution{University of Glasgow}
  \country{United Kingdom}}
\email{yiannis.giannakopoulos@glasgow.ac.uk}
\orcid{0000-0003-2382-1779}

\author{Johannes Hahn}
\affiliation{
  \institution{University of Technology Nuremberg}
  \country{Germany}}
\email{johannes.hahn@utn.de}
\orcid{0009-0005-3016-2996}

\begin{abstract}
We study the classic single-item auction setting of Myerson, but under the
assumption that the buyers' values for the item are distributed over
\emph{finite} supports. Using strong LP duality and polyhedral theory, we
rederive various key results regarding the revenue-maximizing auction, including
the characterization through virtual welfare maximization and the optimality of
deterministic mechanisms, as well as a novel, generic equivalence between
dominant-strategy and Bayesian incentive compatibility. 

Inspired by this, we abstract our approach to handle more general auction
settings, where the feasibility space can be given by arbitrary convex
constraints, and the objective is a linear combination of revenue and social
welfare. We characterize the optimal auctions of such systems as generalized
virtual welfare maximizers, by making use of their KKT conditions, and we
present an analogue of Myerson's payment formula for general discrete
single-parameter auction settings. Additionally, we prove that total
unimodularity of the feasibility space is a sufficient condition to guarantee
the optimality of auctions with integral allocation rules.

Finally, we demonstrate this KKT approach by applying it to a setting where
bidders are interested in buying feasible flows on trees with capacity
constraints, and provide a combinatorial description of the (randomized, in
general) optimal auction.
\end{abstract}

\maketitle

\newpage
\section{Introduction}
The design of optimal auctions~\parencite{Krishna2009a,Milgrom:2004aa} that
maximize the seller's revenue is a cornerstone of the field of mechanism design
(see, e.g., \parencite[Ch.~9]{Jehle2001a} and \parencite{Hartline2007a}),
established into prominence by the highly-influential work
of~\textcite{Myerson1981a}, and traced back to the seminal work
of~\textcite{Vickrey1961a}.

In its most classical form~\parencite{Myerson1981a}, which is the basis for the
setting we are studying in our paper as well, there is a single item to sell and
the problem is modelled as a Bayesian game. The seller has only incomplete
information about the bidders' true valuations of the item, in the form of
independent (but not necessarily identical) probability distributions; these
distributions are assumed to be public knowledge across all participants in the
auction. The players/bidders submit bids to the auctioneer/seller and the seller
decides (a) who gets the item, and with what probability (since lotteries are
allowed), and (b) how much the winning bidders are charged for this transaction.

In this game formulation, the strategies of the players are the different bids
they can submit, and it could well be the case that bidders misreport their true
valuations, if this can result in maximizing their own personal utility.
Therefore, a desirable feature of mechanism design in such settings is the
implementation of auctions which provably guarantee that truth-telling is an
\emph{equilibrium} of the game; such auctions are called \emph{truthful} (or
\emph{incentive compatible (IC)}). Perhaps surprisingly, the celebrated
Revelation Principle of \textcite{Myerson1981a} ensures that restricting our
attention within the class of such well-behaved selling mechanisms is without
loss for our purposes.

The seminal work of~\textcite{Myerson1981a} provides a complete and
mathematically satisfying characterization of revenue-maximizing truthful
auctions in the aforementioned single-item setting, under the assumption that
valuation/bidding spaces are \emph{continuous}. It explicitly constructs an
optimal auction that (a) is deterministic, i.e.\ the item is allocated to a
single bidder (with full probability), or not sold at all, (b) satisfies
truthfulness in a very strong sense, namely under dominant-strategy equilibrium,
and not just in-expectation (see~\cref{sec:truthfulness} for more details), and
(c) has a very elegant description, enabled via the well-known \emph{virtual
valuation} ``trick'' (see~\eqref{definition:virtual_values}); this casts the
problem into the domain of welfare-maximization, simplifying it significantly by
stripping away the game-theoretic incentives components, and transforming it to
a ``purely algorithmic'' optimization problem --- resembling the familiar, to
any computer scientist, notion of a reduction (a formalization of this
connection, even for more general environments, can be found in the work
of~\textcite{Cai2012b,Cai2013a}).

Still, the assumption of continuity may be considered as too strong for many
practical, and theoretical, purposes. Any conceivable instantiation of an
auction on a computing system will require some kind of discretization; not only
as a trivial, unavoidable consequence of the fundamentally discrete nature of
computation (i.e., ``bits''), but also for practical reasons: bids are usually
expected to be submitted as increments of some common denomination (e.g.,
``cents''). And any implementation of optimal auction design as an optimization
problem, would need to be determined by finitely many parameters and variables,
to be passed, e.g., to some solver. Furthermore, although many of the key
properties and results for the continuous setting can be derived as a limiting
case of a sequence of discrete ones, in general the opposite is not true: most
of the techniques used in traditional auction theory rely on real analysis and
continuous probability, thus breaking down when called to be applied to discrete
spaces.

The above reasons highlight the importance of deriving a clear and robust theory
of optimal auction design, under the assumption of \emph{finite} value spaces.
In other words, a discrete analogue of Myerson's~\parencite{Myerson1981a}
theory. During the last couple of decades, various papers within the field of
algorithmic game theory have dealt with this task; see \cref{sec:related-work}
for a more detailed overview. 
Our goal in this paper is to first rederive existing key results, in a unified
way, with an emphasis on clarity, simplicity, and rigorousness; and, do this via
purely discrete optimization tools (namely, LP duality and polyhedral
combinatorics), ``agnostically'', rather than trying to mimic and discretize
Myerson's~\parencite{Myerson1981a} approach for the continuous setting.
Secondly, this comprehensiveness and transparency allows us to lift our approach
up to handle quite general single-parameter mechanism design environments, by
concisely formulating our problem as an elegant KKT system.

\subsection{Related Work}
\label{sec:related-work}

To the best of our knowledge, the first to explicitly study optimal auction
design in a discrete setting were~\textcite{Bergemann2007} and
\textcite{Elkind2007}; the latter offers a more complete treatment, providing a
natural discretization of Myerson's~\parencite{Myerson1981a} techniques,
including ``ironing'' of non-regular distributions (see~\cref{sec:ironing}). A
limitation of~\parencite{Elkind2007} is that it establishes that the discrete
analogue of Myerson's auction is optimal within the more restrictive class of
dominant-strategy incentive compatible (DSIC) mechanisms, instead of using the
standard, weaker notion of Bayesian incentive compatibility (BIC).

In a discussion paper, \textcite{Malakhov2004} study discrete auction
environments with identical bidders under BIC, providing a simpler, equivalent
characterization of truthfulness, through a set of local constraints. We will
make critical use of this characterization, appropriately adapted to our
general, non-symmetric setting of our paper (see~\cref{sec:IC-locality}). The
treatment of \parencite{Malakhov2004}  puts emphasis on linear programming (LP)
formulations, and derive an interesting, flow-based description of optimality
for general, multi-dimensional mechanism design settings; the monograph
of~\textcite{Vohra2011a} provides a comprehensive treatment of this approach.

All aforementioned approaches work, essentially, by adapting the key steps of
Myerson's derivations, from the continuous to the discrete setting.
\Textcite{Cai2019} provide a totally different, and very powerful, approach
based on Lagrangian duality. Conceptually, their paper is clearly the closest to
ours. \Parencite{Cai2019} followed a line of work, where duality proved very
useful in designing optimal multiple-item auctions in the continuous case (see,
e.g., \parencite{Daskalakis:2017aa,gk2014}). Although the duality framework of
\parencite{Cai2019} is fundamentally discrete, it was also designed for
multi-dimensional revenue-maximization, a notoriously difficult and complex
problem. Therefore, its instantiation for a single-parameter Myersonian setting
(see~\parencite[Sec.~4]{Cai2019}) results, arguably, in a rather involved
presentation. One of the goals of our paper is exactly to demystify duality for
single-item domains, by making use of classical LP duality,
particularly tailored for our problem, instead of the more obscure Lagrangian
flows interpretation in~\parencite{Cai2019}, resulting in greater transparency
and a wider spectrum of questions that we can attack (see~\cref{sec:Myerson}).

\subsection{Our Results}
\label{sec:results}

We begin our presentation by introducing our \emph{single-parameter} auction
design setting, and fixing some overarching notation, in~\cref{sec:prelims}. Our
model formulation is deliberately general, allowing for arbitrary feasibility
domains $\mathcal{A}$ for the auction's allocation; we will specialize this to
the standard distributional simplex when studying the classical Myersonian
single-item setting in~\cref{sec:Myerson}, however we want to be able to capture
the abstract convex environments we study later in~\cref{sec:KKT}.
Importantly, in~\cref{sec:truthfulness} we discuss in detail the two different
notions of truthfulness used for our problem, and in~\cref{sec:IC-locality} we
provide a local characterization of truthfulness, essentially proved
in~\textcite{Malakhov2004}, which we will extensively use in our optimization
formulation throughout our paper.

\Cref{sec:Myerson} includes our rederivation of the key components of
Myerson's~\parencite{Myerson1981a} theory for single-item revenue-maximization,
but for finite-support distributions, as well as some novel results. They all
arise, in a unified way, through a chain of traditional LP duality, presented
in~\cref{sec:duality-chain-myerson} (see~\cref{fig:chain_tikz} for a concise
pictorial view). The resulting revenue-maximizing auction, together with some
key results characterizing optimality, are given in the ``master''
~\cref{th:discrete-optimal-myerson}: in a nutshell, the optimal auction first
transforms the submitted bids to \emph{virtual} bids and then \emph{irons} them,
finally allocating the item to the highest non-negative (virtual, ironed) bidder.
Similar to the classical results of~\parencite{Myerson1981a} for continuous
domains, this auction turns out to be deterministic and truthful in the
strongest DSIC sense, ``for free'', although we are optimizing within the much
wider space of lotteries under BIC. To the best of our knowledge,
Point~\ref{item:optimal-single_item-BICvsDIC} where we formalize the equivalence
of DSIC and BIC, under revenue-maximization, as a more fundamental and general
consequence of the polyhedral structure of our feasibility space, rather than
just a feature of the particular optimal auction solution format, is novel. The
remaining subsections~\ref{sec:determinism-vs-randomization-Myerson},
\ref{sec:DSIC-vs-BIC-Myerson}, and \ref{sec:ironing}, are dedicated to
elaborating and formally proving the various components
of~\cref{th:discrete-optimal-myerson}. A point worth noting is that our virtual
value~\eqref{eq:virtual_values-def} and ironing~\eqref{eq:ironed_virtual_values}
transformations are not ``guessed'' and then proven to impose optimality, as is
the case with prior work in the area, but rather arise organically as a
necessity of our strong LP duality technique.

Inspired by the transparency of our duality framework in~\cref{sec:Myerson}, we
try to generalize our approach to a more general single-parameter mechanism
design setting, where the feasibility space $\mathcal A$ is given by arbitrary
convex constraints, and the optimization objective is a linear combination of
revenue and social welfare; see~\cref{sec:notation-KKT}. Our results are
summarized in master~\cref{th:discrete-optimal-general}, which is essentially
the analogue of~\cref{th:discrete-optimal-myerson}. Given the generality of our
model in this section, we have to depart from our basic LP duality tools
of~\cref{sec:Myerson}, and make use of the more general KKT conditions
framework, including duality and complementary slackness; our KKT formulation is
discussed in~\cref{sec:KKT-model}. 
The abstraction of our model allows for a
very concise description of the optimal auction's allocation and
payment rules (see~\cref{sec:general-virtual-welfare-maximization}). Similarly
to the single-item setting of~\cref{sec:Myerson}, we can again show that
optimizing under the more restrictive notion of \ref{eq:DSIC-def} truthfulness
is without loss for our optimization objective. Furthermore, we investigate
under what structural conditions of our underlying feasibility space we can
``generically'' guarantee that there exists an optimal auction that does not
need to allocate fractionally/randomly, i.e.\ it is integral; it turns out, that
\emph{total unimodularity} is such a sufficient condition
(see~\cref{sec:integral-auctions-KKT} for more details and definitions).

It is important to point out here that, in principle, one could derive the main
results of~\cref{sec:Myerson} for the single-item case by making use of the more
general KKT setting of~\cref{sec:KKT}. In other words,
\cref{th:discrete-optimal-myerson} can be viewed as a special case
of~\cref{th:discrete-optimal-general}. Nevertheless, we deliberately choose in
our paper to first, independently develop the special, single-item theory
of~\cref{sec:Myerson}, not as much as a warm-up for the conceptually more
demanding and abstract presentation of~\cref{sec:KKT}, but also for essential
technical reasons: many components of our proofs from~\cref{sec:Myerson} are
needed in order to keep the technical difficulty of~\cref{sec:KKT} manageable,
i.e., our paper is built in a modular way so that we do not unnecessarily repeat
technical parts from the single-item to the general case, but at the same time
those parts are key components for the way our proofs are presented for the
general case in~\cref{sec:KKT}. Additionally, if one was to actually rederive
our results and presentation for the single-item case as a special-case
instantiation of the more general framework in~\cref{sec:KKT}, this would result
in a very hard to penetrate presentation, obscuring from the key insights and
clarity provided by the traditional LP tools used
in~\cref{sec:Myerson}.  

Finally, in~\cref{sec:application} we demonstrate the transparency and strength
of our framework, by applying it to a capacitated tree setting, inspired by
real-life gas network structures~\parencite{GasLib}, where each bidder wants to
send flow between a fixed origin-destination pair. In addition to the KKT
formulation of the problem, we ``unravel'' its optimal solution (as dictated by
Point~\ref{item:optimal-single_parameter-general-explicit} of
\cref{th:discrete-optimal-general}) to derive a purely combinatorial,
algorithmic description of the allocation and payment rules
(see~\cref{sec:combinatorial-algo-tree-flows}), that reveals an interesting
economics interpretation of \emph{edge pricing}.

\section{Preliminaries}
\label{sec:prelims}

\subsection{Model and Notation}
\label{sec:model-notation}
We use $\R$, $\R_+$, and $\N$, for the set of reals, non-negative reals, and
non-negative integers, respectively. For any positive integer $k$ we denote
$[k]\coloneqq\ssets{1,2,\dots,k}$. 

\paragraph{Single-parameter settings} In a (Bayesian) single-parameter auction
design setting there are $n\geq 1$ bidders, and each bidder $i\in[n]$ has a
value $v_i\in\R_+$ for being allocated a single ``unit'' of some ``service''.
Each value $v_i$ is drawn independently from a distribution (with cdf) $F_i$
with support $V_i\subseteq\R_+$, called the \emph{prior} of bidder $i$. We will
use $f_i$ to denote the probability mass function (pmf) of $F_i$. These
distributions are public knowledge, however the realization $v_i$ is private
information of bidder $i$ only. In this paper we only study \emph{discrete}
auction settings, where the prior supports $V_i$ are \emph{finite}. For
notational convenience, we denote the corresponding product distribution of the
\emph{value profiles} $\vec{v}=(v_1,v_2,\dots,v_n)\in \vec{V}\coloneqq
\times_{i=1}^n V_i$ by $\vec{F}\coloneqq\times_{i=1}^n F_i$, and we also use
$\vec{V}_{-i}\coloneqq\times_{j\in[n]\setminus{i}} V_j$ and
$\vec{F}_{-i}\coloneqq\times_{j\in[n]\setminus{i}} F_j$.

There is also a set of feasible outcomes $\mathcal A\subseteq\R_+^n$, each
outcome $\vec{a}=(a_1,a_2,\dots,a_n)\in\mathcal A$ corresponding to bidder $i$
being allocated a ``quantity'' $a_i$. Throughout this paper we assume that
$\mathcal A$ is \emph{convex}. A canonical example is the classical single-item
auction setting (which we study in \cref{sec:Myerson}), where $a_i$ can be
interpreted as the probability of a lottery assigning the item to bidder $i$, in
which case the feasibility set $\mathcal{A}$ is the $n$-dimensional simplex
$\mathcal S_n\coloneqq\sset{\vec{a}\in\R^n_+\fwh{\sum_{i=1}^n a_i \leq 1}}$.

\paragraph{Auctions} An auction $M=(\vec{a},\vec{p})$ consists of an allocation
rule $\vec{a}:\vec{V}\map\mathcal A$ and a payment rule
$\vec{p}:\vec{V}\map\R^n$ that, given as input a vector of bids $\vec{b}\in
\vec{V}$, dictates that each bidder $i$ should get allocated quantity $a_i(\vec
b)$ and submit a payment of $p_i(\vec{b})$ to the auctioneer. 

Given such an auction $M$, the (ex-post) utility of a bidder $i$, when their
true value is $v_i\in V_i$ and bidders submit bids $\vec{b}\in\vec{V}$, is
\begin{equation}
\label{eq:utility_ex-post_definition}
   u_i^M(\vec b; v_i) = u_i(\vec b; v_i) \coloneqq a_i(\vec b)\cdot v_i - p_i(\vec b). 
\end{equation}
Using the distributional priors $F_i$ to capture the uncertainty about other
bidders' behaviour, we can also define the \emph{interim} utility of a bidder,
when having true value $v_i\in V_i$ and bidding $b_i\in V_i$ as
\begin{equation*}
\label{eq:utility_interim_definition}
    U_i(b_i; v_i) 
    \coloneqq \expect[\vec{b}_{-i}\sim \vec{F}_{-i}]{u_i(b_i,\vec{b}_{-i};v_i)}
    = A_i(b_i)\cdot v_i - P_i(b_i), 
\end{equation*}
where
\[
A_i(b_i) \coloneqq \expect[\vec{b}_{-i}\sim \vec{F}_{-i}]{a_i(b_i,\vec{b}_{-i})}
\qquad \text{and} \qquad
P_i(b_i) \coloneqq \expect[\vec{b}_{-i}\sim \vec{F}_{-i}]{p_i(b_i,\vec{b}_{-i})}
\]
are the interim versions of the allocation and payment rules of the mechanism,
respectively.

An auction whose allocations lie in the $n$-simplex, i.e.\
$\vec{a}(\vec{v})\in\mathcal S_n$ for all $\vec{v}\in\vec{V}$, will be called a
\emph{lottery}, since its fractional allocations $a_i\in[0,1]$ can be
equivalently interpreted as the probability of assigning $1$ unit of service to
bidder $i$, given the linearity of the
utilities~\eqref{eq:utility_ex-post_definition}. In particular, lotteries with
only integral $0$-$1$ allocations, i.e.\ $\vec{a}\in\mathcal S_n\inters
\ssets{0,1}^n$ will be called \emph{deterministic auctions}. More generally, any
auction with allocation rule $\vec{a}\in\N^n$ will be called
\emph{integral}\label{page:integral-auction}.

\subsection{Incentive Compatibility}
\label{sec:truthfulness}

From the perspective of each bidder $i$, the goal is to bid so that they can
maximize their own utility. In particular, this means that bidders can lie and
misreport $b_i\neq v_i$. Therefore, one of the goals of mechanism design is to
construct auctions that avoid this pitfall, and which \emph{provably} guarantee
that truthful participation is to each bidder's best interest. From a
game-theoretic perspective, this can be formalized by demanding that truthful
bidding $b_i=v_i$ is an equilibrium of the induced Bayesian game.

This gives rise to the following constraints, known as \emph{dominant-strategy
incentive compatibility (DSIC)}: for any bidder $i$, any true value $v_i\in
V_i$, and any bidding profile $\vec{b}\in \vec{V}$, it holds that
\begin{equation}
    \label{eq:DSIC-def}
u_i(v_i,\vec{b}_{-i};v_i) \geq u_i(b_i,\vec{b}_{-i};v_i),
    \tag{\text{DSIC}}
\end{equation}
and its more relaxed version of \emph{Bayesian incentive compatibility (BIC)},
involving the interim utilities:
\begin{equation}
    \label{eq:BIC-def}
U_i(v_i;v_i) \geq U_i(b_i;v_i),
    \tag{\text{BIC}}
\end{equation}
for any bidder $i$, true value $v_i\in V_i$ and bid $b_i\in V_i$.

\paragraph{Individual rationality} Another desired property of our mechanisms is
that no bidder should harm themselves by truthfully participating in our
auction, known as \emph{individual rationality (IR)}. Similarly to the
truthfulness conditions \eqref{eq:DSIC-def} and \eqref{eq:BIC-def}, this can be
formalized both in an ex-post and interim way: $u_i(v_i,\vec{b}_{-i})\geq 0$ and
$U_i(v_i;v_i)\geq 0$, respectively, for all bidders $i$, true values $v_i\in
V_i$ and other bidders' bid profile $\vec{b}_{-i}\in \vec{V}_{-i}$,
respectively.

One elegant way to merge the (IR) constraints into truthfulness, is to extend
the bidding space of bidder $i$ in \eqref{eq:DSIC-def} and \eqref{eq:BIC-def}
from $V_i$ to $\bar{V}_i\coloneqq V_i\union\ssets{\noparticipation}$ and define 
\begin{equation} 
    \label{eq:no-participation-border}
    a_i(\noparticipation,\vec{b}_{-i})=p_i(\noparticipation,\vec{b}_{-i})=0
\end{equation}
for all bidders $i$ and other bidders' bids $\vec{b}_{-i}\in \vec{V}_{-i}$.
Then, bidding $\noparticipation$ can be interpreted as an option to ``abstain''
from the auction for a utility of
$u_i(\noparticipation,\vec{b}_{-i};v_i)=U_i(\noparticipation;v_i)=0$. From now
on we will assume that our truthfulness conditions \eqref{eq:DSIC-def} and
\eqref{eq:BIC-def} are indeed extended in that way to $\bar{V_i}$, thus
including the (IR) constraints. An auction will be called DSIC (resp.\ BIC) if
it satisfies those (extended) \eqref{eq:DSIC-def} (resp.\ \eqref{eq:BIC-def})
constraints. Observe that, since
$\eqref{eq:DSIC-def}\subseteq\eqref{eq:BIC-def}$, any DSIC auction is also BIC.

\paragraph{Optimal auctions} 
The main focus of our paper is the design of \emph{optimal auctions}, for
discrete value domains. That is, maximize the seller's \emph{revenue} within the
space of all feasible \emph{truthful} auctions. Formally, if for a given auction
$M=(\vec{a},\vec{p})$ we denote its expected revenue, with respect to the value
priors $\vecc F$, by 
\begin{equation}
\label{eq:revenue-def}
    \revenue(M)\coloneqq \expect[\vec v\sim\vec{F}]{\sum_{i=1}^n p_i(\vecc v)},
\end{equation}
then our optimization problem can be stated as $\sup_{M: \mathcal{A} \land
\eqref{eq:DSIC-def}} \revenue(M)$, or $\sup_{ M: \mathcal{A} \land
\eqref{eq:BIC-def}} \revenue(M)$, depending on whether we choose the notion
dominant-strategy, or Bayesian truthfulness. An optimal solution to the former
problem will be called \emph{optimal DSIC} auction, and to the latter,
\emph{optimal BIC} auction. Following the standard convention in the field (see,
e.g., \textcite{Krishna2009a} and \textcite{Myerson1981a}), the term
\emph{optimal auction} that does not explicitly specify the underlying
truthfulness notion, will refer to the optimal BIC auction. 
Notice that, since $\eqref{eq:DSIC-def}\subseteq\eqref{eq:BIC-def}$, for an
optimal DSIC auction $M$ and an optimal BIC auction $M'$ it must be that
$\revenue(M) \leq \revenue(M')$.

Nevertheless, as we demonstrate in~\cref{sec:KKT}, our general duality approach
provides for greater flexibility with respect to the optimization objective. For
example, this will allow us to instantiate our framework for a linear
combination of revenue and another important objective in auction theory, that
of \emph{social welfare}:
\begin{equation}
\label{eq:welfare-def}
    \welfare(M)\coloneqq \expect[\vec v\sim\vec{F}]{\sum_{i=1}^n a_i(\vecc v)v_i}.
\end{equation}

\subsection{Locality of Truthfulness}
\label{sec:IC-locality}
It turns out our truthfulness constraints can be simplified, and expressed
through a set of constraints that are ``local'' in nature, in the sense that
they only involve deviations between adjacent values. To formalize this, recall
that our value spaces $V_i$ are finite, so we can define the notion of
\emph{predecessor} and \emph{successor} values for a given bidder $i$ and a value
$v_i\in V_i$:
$$
v_i^+ \coloneqq \min\sset{v\in V_i\fwh{v>v_i}}
\quad\text{and}\quad
v_i^- \coloneqq \max\sset{v\in V_i\fwh{v<v_i}},
$$
if the above sets are non-empty, otherwise we define $v_i^+\coloneqq
\noparticipation$ for $v_i=\max V_i$ and $v_i^-\coloneqq \noparticipation$ for
$v_i=\min V_i$.

Now we can state the local characterization of truthfulness, first for
\eqref{eq:DSIC-def}, but a totally analogous lemma holds for \eqref{eq:BIC-def}
as well -- see~\cref{append:local-IC}. This result can be essentially derived
by the work of~\parencite[Theorems~1 and~2]{Malakhov2004}; for reasons of clarity and
compatibility with our model and notation, we also present a proof
in~\cref{append:local-IC}.

\begin{lemma}[{\textcite{Malakhov2004}}]
\label{lemma:local-DSIC}
	For any discrete, single-dimensional auction $(\vec a,\vec p)$, the
    \eqref{eq:DSIC-def} condition is \emph{equivalent} to the following set of
    constraints:
    \begin{align}
         u_i(\vec v;v_i) &\geq u_i(v_i^-,\vec{v}_{-i};v_i) \label{eq:local-IC-down}\\
	     u_i(\vec v;v_i) &\geq u_i(v_i^+,\vec{v}_{-i};v_i)\label{eq:local-IC-up},
    \end{align}
    for all bidders $i\in[n]$ and any value profile $\vec v\in\vec{V}$. 
    Furthermore, conditions \eqref{eq:local-IC-down} and \eqref{eq:local-IC-up} imply
     \begin{equation}
         a_i(\vec{v}) \geq a_i(v_i^{-},\vec{v}_{-i}) \label{eq:local-IC-monotonicity},
     \end{equation}
     for all $i\in[n]$ and $\vec v\in\vec{V}$.
\end{lemma}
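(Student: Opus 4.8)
The plan is to prove the equivalence in two directions, for a fixed bidder $i$ and fixed $\vec v_{-i}$, so that the problem reduces to a one-dimensional statement about the functions $v\mapsto a_i(v,\vec v_{-i})$ and $v\mapsto p_i(v,\vec v_{-i})$ on the chain $\bar V_i$. Write $g(v) \coloneqq u_i(v,\vec v_{-i}; v)$ for the ``truthful'' utility, and more generally $u_i(b,\vec v_{-i};v) = a_i(b,\vec v_{-i})\,v - p_i(b,\vec v_{-i})$ for reporting $b$ at true value $v$. The ``only if'' direction is immediate: \eqref{eq:DSIC-def} ranges over all deviations $b_i$, so in particular it holds for $b_i = v_i^-$ and $b_i = v_i^+$, which is exactly \eqref{eq:local-IC-down}--\eqref{eq:local-IC-up}. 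The substance is in the ``if'' direction.

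For the ``if'' direction, first I would extract monotonicity \eqref{eq:local-IC-monotonicity}, since it is needed for the telescoping argument and is also claimed in the statement. Adding the two local inequalities at adjacent values $v_i$ and $v_i^-$ — i.e., \eqref{eq:local-IC-up} written at true value $v_i^-$ (deviation to $v_i$) together with \eqref{eq:local-IC-down} written at true value $v_i$ (deviation to $v_i^-$) — gives, after cancellation of the payment terms, $a_i(v_i,\vec v_{-i})(v_i - v_i^-) \ge a_i(v_i^-,\vec v_{-i})(v_i - v_i^-)$; since $v_i > v_i^-$, this yields \eqref{eq:local-IC-monotonicity}. (At the boundary, where $v_i^- = \noparticipation$, the convention \eqref{eq:no-participation-border} makes $a_i(\noparticipation,\vec v_{-i}) = 0$, so monotonicity degenerates to $a_i(v_i,\vec v_{-i}) \ge 0$, which holds since $\mathcal A \subseteq \R_+^n$.)

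Next I would upgrade the local constraints to global ones by a telescoping/chaining argument along the chain $V_i$. To show $u_i(v,\vec v_{-i};v) \ge u_i(w,\vec v_{-i};v)$ for an arbitrary report $w \in \bar V_i$, suppose first $w < v$ and let $v = v^{(0)} > v^{(1)} > \cdots > v^{(k)} = w$ be the consecutive values of $V_i$ between them. The local downward constraint \eqref{eq:local-IC-down} at true value $v^{(j)}$ gives $a_i(v^{(j)})v^{(j)} - p_i(v^{(j)}) \ge a_i(v^{(j+1)})v^{(j)} - p_i(v^{(j+1)})$ (suppressing $\vec v_{-i}$). I want to deduce the analogous inequality with the true value held at $v = v^{(0)}$ rather than $v^{(j)}$; the gap between what \eqref{eq:local-IC-down} gives and what I need is $(a_i(v^{(j)}) - a_i(v^{(j+1)}))\,(v^{(0)} - v^{(j)}) \ge 0$, which is exactly monotonicity \eqref{eq:local-IC-monotonicity} times the nonnegative quantity $v^{(0)} - v^{(j)} \ge 0$. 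Summing the resulting chain of inequalities telescopes to $u_i(v,\vec v_{-i};v) \ge u_i(w,\vec v_{-i};v)$. The case $w > v$ is symmetric, using \eqref{eq:local-IC-up} and monotonicity in the same fashion; the case $w = \noparticipation$ is handled by taking $w = \min V_i$ in the downward chain and then applying \eqref{eq:local-IC-down} once more with successor $\noparticipation$, invoking \eqref{eq:no-participation-border}. Since $i$ and $\vec v_{-i}$ were arbitrary, this establishes \eqref{eq:DSIC-def}.

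The main obstacle — really the only non-mechanical point — is getting the telescoping bookkeeping right: the naive sum of local inequalities compares each $u_i(v^{(j)}, \cdot; v^{(j)})$ to $u_i(v^{(j+1)},\cdot;v^{(j)})$ with a \emph{moving} true value $v^{(j)}$, whereas the target inequality needs a \emph{fixed} true value $v$. Bridging this requires inserting the monotonicity correction term at each step before summing, and one must be careful that the sign of $v - v^{(j)}$ (resp.\ $v^{(j)} - v$ in the upward case) is the one that makes the correction go the right way. Once the direction of each inequality is tracked correctly, everything collapses by cancellation. I would present this telescoping step carefully and treat the boundary/no-participation cases explicitly, since that is where the $\noparticipation$ conventions \eqref{eq:no-participation-border} do the work of folding in individual rationality.
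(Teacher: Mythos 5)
Your proposal is correct and takes essentially the same route as the paper's proof: the ``only if'' direction is noted as trivial, monotonicity \eqref{eq:local-IC-monotonicity} is obtained by summing the two adjacent local constraints, and global deviations are then recovered from the local ones with monotonicity supplying the nonnegative correction term at each step (the paper carries this out explicitly for the next-but-one deviation and states the extension, whereas you spell out the full telescoping induction, including the $\noparticipation$ boundary case).
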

Conditions \eqref{eq:local-IC-down} and \eqref{eq:local-IC-up} are called
\emph{downwards} and \emph{upwards} DSIC constraints, respectively, and
\eqref{eq:local-IC-monotonicity} are called \emph{monotonicity} constraints.

\section{The Discrete Myerson Auction: an LP Duality Approach}
\label{sec:Myerson}

In this section we begin our study of optimal single-parameter auctions, by
considering the canonical single-item setting of~\textcite{Myerson1981a}, but
under discrete values. That is, the feasibility set for our allocations is the
simplex $\mathcal{S}_n$ (see~\cref{sec:model-notation}), giving rise to the
following \emph{feasibility constraints}:
\begin{equation}\label{eq:feasibility_unity-simplex}
    \sum_{i=1}^{n} a_i( \vec{v}) \leq 1,
    \qquad\text{for all}\;\; \vec{v} \in \vec{V}.
\end{equation}
Our results of this section are summarized in the following main theorem:

\begin{mdframed}[backgroundcolor=gray!30]
    \vspace{-8pt}
\begin{theorem}[Optimal Discrete Single-Item Auction]
    \label{th:discrete-optimal-myerson}
    For any discrete, single-item auction setting, the following hold for revenue maximization:
    \begin{enumerate}
        \item\label{item:optimal-single_item-determinism} There always exists an
        optimal auction which is deterministic.
        \item \label{item:optimal-single_item-BICvsDIC} \emph{Any} optimal DSIC
        auction is an optimal BIC auction.
        \item\label{item:optimal-myerson-discrete-explicit} The following
        deterministic DSIC auction is optimal (even within the class of
        randomized BIC auctions):
        \begin{itemize}
        \item Allocate (fully) the item to the bidder with the highest
        non-negative ironed virtual value~\eqref{eq:ironed_virtual_values},
        breaking ties arbitrarily.\footnotemark
        \item Collect from the winning bidder a payment equal to their critical
        bid~\eqref{eq:first_payment_rule}.
        \end{itemize}
    \end{enumerate}
\end{theorem}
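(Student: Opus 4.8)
The plan is to cast the Bayesian problem $\sup_{M:\mathcal S_n\land\eqref{eq:BIC-def}}\revenue(M)$ as a (large but finite) linear program in the ex-post variables $a_i(\vec v),p_i(\vec v)$, and then push it through a short chain of classical LP-duality transformations, ending at exactly the ironed-virtual-welfare maximizer of Point~\ref{item:optimal-myerson-discrete-explicit} (the picture in~\cref{fig:chain_tikz}). First, by the \eqref{eq:BIC-def} analogue of \cref{lemma:local-DSIC} we replace the truthfulness constraints (with (IR) already folded in via the $\noparticipation$-border) by the finitely many local inequalities between adjacent values, so the problem is a finite LP. Since $\revenue$ and all these constraints see $(\vec a,\vec p)$ only through the interim $A_i,P_i$, and since the local inequalities bound only the increments $P_i(v_i)-P_i(v_i^-)$ in terms of the $A_i$'s while the $\noparticipation$-border (IR) at $\min V_i$ pins down the base level, the revenue-maximizing payments are forced as a function of the $A_i$ alone --- this is the critical-bid formula~\eqref{eq:first_payment_rule}. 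Substituting it into~\eqref{eq:revenue-def} and resumming (a discrete summation by parts) rewrites the objective as the virtual welfare $\sum_i\expect[v_i\sim F_i]{\phi_i(v_i)A_i(v_i)}$, with $\phi_i$ the discrete virtual value~\eqref{eq:virtual_values-def}; the virtual transformation is thus not guessed but emerges from this manipulation. The feasible region is now: all ex-post $\vec a:\vec V\to\mathcal S_n$ whose interim rules obey the monotonicity constraints $A_i(v_i)\ge A_i(v_i^-)$ (cf.~\eqref{eq:local-IC-monotonicity}).

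Next I would take the LP dual of this reduced program. The multipliers $\lambda_i(v_i)\ge 0$ on the monotonicity constraints turn the objective, pointwise in each profile $\vec v$, into a corrected score $\psi_i(v_i)=\phi_i(v_i)+\bigl(\lambda_i(v_i^+)-\lambda_i(v_i)\bigr)/f_i(v_i)$, and the Lagrangian is then maximized by handing the item to the bidder with the largest positive $\psi_i$ (and not selling if all are negative). By strong LP duality it suffices to exhibit one dual-feasible $\lambda$ for which the induced greedy allocation is primal-feasible (each $A_i$ monotone) and complementary slackness holds ($\lambda_i(v_i)>0\Rightarrow A_i(v_i)=A_i(v_i^-)$). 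The right choice makes $\psi_i\equiv\bar\phi_i$, the ironed virtual value~\eqref{eq:ironed_virtual_values}: defining ironing through the concave majorant $\bar G_i$ of the cumulative virtual-value curve $G_i(v_i)=\sum_{w\le v_i}f_i(w)\phi_i(w)$, one takes $\lambda_i(v_i)=\bar G_i(v_i^-)-G_i(v_i^-)\ge 0$ (so $\bar\phi_i$ is the slope of $\bar G_i$, non-decreasing), which is supported exactly on the ironed intervals where $\bar G_i$ is affine but strictly above $G_i$; there $\bar\phi_i$ is constant, so for every fixed $\vec v_{-i}$ the winner does not change as $v_i$ ranges over the interval, hence $A_i$ is flat and complementary slackness is met. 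I expect this to be the \emph{technical heart} of the argument: getting the index-shift bookkeeping in the Lagrangian to line up with the summation-by-parts definitions of $\phi_i$ and $\bar\phi_i$, verifying $\lambda\ge 0$, and checking primal feasibility (monotonicity of the greedy-by-$\bar\phi_i$ rule) and the tightness of exactly the right monotonicity constraints on the ironed-interval endpoints.

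It then remains to read off the three claims. Strong duality says the greedy-by-$\bar\phi_i$ auction (breaking ties by any fixed priority over bidders) attains the BIC optimum while being deterministic, which is Point~\ref{item:optimal-single_item-determinism} (an optimal deterministic auction exists) and Point~\ref{item:optimal-myerson-discrete-explicit} (the explicit rule is optimal even among randomized BIC mechanisms). For Point~\ref{item:optimal-single_item-BICvsDIC}: since $\bar\phi_i$ is non-decreasing, for every fixed $\vec v_{-i}$ the allocation $a_i(v_i,\vec v_{-i})$ is monotone in $v_i$, and with the critical-bid payments~\eqref{eq:first_payment_rule} this auction satisfies~\eqref{eq:local-IC-down}--\eqref{eq:local-IC-up} (and (IR) at $\noparticipation$), hence \eqref{eq:DSIC-def} by \cref{lemma:local-DSIC}; therefore the optimal DSIC and optimal BIC revenues coincide. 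Since $\eqref{eq:DSIC-def}\subseteq\eqref{eq:BIC-def}$, any optimal DSIC auction is a BIC auction attaining this common value, so it is an optimal BIC auction, which is exactly Point~\ref{item:optimal-single_item-BICvsDIC}.
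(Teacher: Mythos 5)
Your overall route is sound and, at heart, it is the same one the paper takes: reduce, via the local characterization of \cref{lemma:local-DSIC} and the payment identity \eqref{eq:first_payment_rule}, to maximizing discrete virtual welfare \eqref{eq:virtual_values-def} over monotone feasible allocations, and then certify the greedy-by-ironed-value rule by an explicit choice of multipliers on the monotonicity constraints (your $\lambda_i$ play exactly the role of the paper's $\tau$ variables) together with complementary slackness; your derivation of Point~\ref{item:optimal-single_item-BICvsDIC} as a corollary of Point~\ref{item:optimal-myerson-discrete-explicit} (the explicit auction is DSIC and attains the BIC optimum, hence the two optimal values coincide and every optimal DSIC auction is an optimal BIC auction) is logically valid, even though the paper proves it differently, by transferring dual certificates between the DSIC and BIC programs.

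The genuine problem is in your ironing construction, which as written fails. First, you take the \emph{concave majorant} $\bar G_i$ of $G_i(v_i)=\sum_{w\le v_i}f_i(w)\varphi_i(w)$ and claim its slope is non-decreasing; slopes of a concave function are non-increasing, so your $\bar\varphi_i$ cannot be non-decreasing except trivially. Concretely, with $\lambda_i(v_i)=\bar G_i(v_i^-)-G_i(v_i^-)$ and your score $\psi_i(v_i)=\varphi_i(v_i)+\bigl(\lambda_i(v_i^+)-\lambda_i(v_i)\bigr)/f_i(v_i)$ one gets $f_i(v_{i,k})\,\psi_i(v_{i,k})=\bar G_i(v_{i,k})-\bar G_i(v_{i,k-1})$, and these increments do not form a non-decreasing per-probability sequence; then the greedy rule is not monotone in a bidder's own value, DSIC fails, and the complementary-slackness argument (flat $A_i$ wherever $\lambda_i>0$) does not close. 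You need the \emph{lower convex envelope} with $\lambda_i$ equal to curve minus envelope (equivalently, fix the sign bookkeeping to $\varphi_i(v_i)+\bigl(\lambda_i(v_i)-\lambda_i(v_i^+)\bigr)/f_i(v_i)$, matching \eqref{eq:ironed_virtual_values}), so that $\lambda_i\ge 0$ and the resulting slopes are non-decreasing. Second, the envelope must be taken with the cumulative probability as abscissa, i.e.\ of the points $\bigl(F_i(v),\,\sum_{w\le v}f_i(w)\varphi_i(w)\bigr)$ as in the paper's \cref{lemma:ironing}, not against the value $v_i$: the contact set of the envelope, hence the ironed intervals, depends on the abscissa, and only the quantile parameterization makes the constant block value the $f_i$-weighted average of the virtual values, which is what both the identity $f_i\tilde\varphi_i=f_i\varphi_i+\lambda_i-\lambda_i(\cdot^+)$ and the preservation of the objective under ironing require. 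With these two corrections your certificate coincides with the paper's (unique) $\tau$'s and the remainder of your argument — primal feasibility of the greedy rule, tightness on ironed intervals, determinism, DSIC via \cref{lemma:local-DSIC}, and the coincidence of the optimal DSIC and BIC values — goes through.
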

\end{mdframed}
\footnotetext{In order to maintain determinism, this can be any fixed
deterministic tie-breaking rule; e.g., allocating the bidder with the smallest
index $i$. Fractionally splitting the item among bidders that tie would still
ensure revenue optimality (and DSIC), but the mechanism would be
randomized.}

Point~\ref{item:optimal-myerson-discrete-explicit} of
\cref{th:discrete-optimal-myerson} is essentially a discrete analogue of
Myerson's optimal auction for the continuous case. As we mentioned in our
introduction (see~\cref{sec:related-work,sec:results}), this result can be
already derived by readily combining prior work on discrete auctions (see, e.g.,
\parencite{Cai2012b,Elkind2007}); our contribution here is not the result
itself, but the proof technique, which makes use of classical LP duality theory.
This allows us to make use of powerful and transparent results from polyhedral
combinatorics, to structurally characterize optimal auctions. In particular, we
establish the optimality of DISC mechanisms, in a very general sense
(see Point~\ref{item:optimal-single_item-BICvsDIC}), which to the best of our
knowledge was not known before. This is also enabled by our discrete
optimization view of the problem, through the use of polyhedral properties
(see~\cref{sec:DSIC-vs-BIC-Myerson}). Finally, observe that
Point~\ref{item:optimal-single_item-determinism} can be derived directly as a
corollary of Point~\ref{item:optimal-myerson-discrete-explicit}; nevertheless,
we choose to state it independently, in order to reflect the logical progression
of our derivation in this paper, which actually allows us to establish
Point~\ref{item:optimal-single_item-determinism} more generally, as a result of
the polyhedral structure of our problem (see
\cref{sec:determinism-vs-randomization-Myerson}), \emph{before} we determine the
actual optimal solution in Point~\ref{item:optimal-myerson-discrete-explicit}. 

We start our presentation by considering the revenue-maximization problem under
the more restricted DSIC truthfulness notion. We do this for reasons of clarity
of exposition, and then in~\cref{sec:DSIC-vs-BIC-Myerson} we carefully discuss
how our formulations adapt for the more relaxed \eqref{eq:BIC-def} constraints,
and the relation between the two notions with respect to optimality, completing
the picture for~\cref{th:discrete-optimal-myerson}.

\subsection{A Chain of Dual Linear Programs}
\label{sec:duality-chain-myerson}

In this section we develop the skeleton of our approach for
proving~\cref{th:discrete-optimal-myerson}. It consists of a sequence of LPs, as
summarized in~\cref{fig:chain_tikz}. We start by formulating the single-item,
revenue-maximization problem as an LP in~\eqref{LP:LP1}. Next, we dualize it
in~\eqref{LP:DP1}, and then restrict the program to derive~\eqref{LP:DP2} that
can only have a worse (i.e., higher) optimal objective. Then, we dualize again,
deriving a maximization program in~\eqref{LP:LP2}. Finally, we prove
(see~\cref{lemma:DSIC_opt}) that our original maximization
program~\eqref{LP:LP1} is a relaxation of~\eqref{LP:LP2}, thus establishing a
collapse of the entire duality chain, and the equivalence of all involved LPs.
This closure of the chain is exactly from where virtual
values~\eqref{eq:virtual_values-def}, virtual welfare
maximization~\eqref{LP:LP2}, optimality of determinism
(see~\cref{lemma:determinism_TU}), and the optimal payment rule~\eqref{LP:LP2}
naturally emerge.

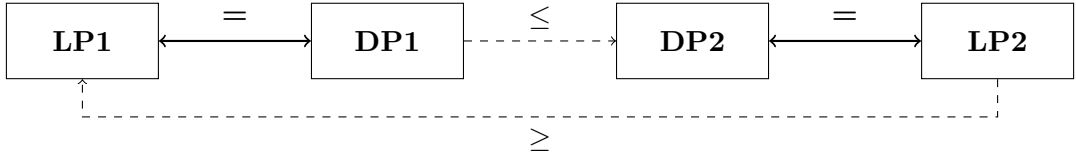
\begin{figure}
    \centering
    	\begin{tikzpicture}[ node distance=1.9cm and 1.9cm, every node/.style={draw,
			minimum width=2cm, minimum height=1cm, align=center, font=\bfseries}
			]
			
			\node (rect1) {LP1};
			\node[right=of rect1] (rect2) {DP1};
			\node[right=of rect2] (rect3) {DP2};
			\node[right=of rect3] (rect4) {LP2};
			
			\draw[->, dashed]
			(rect4.south) -- ++(0,-.5) coordinate (midpoint1)
			-- (midpoint1 -| rect1.south) coordinate (midpoint2)
			node[midway, below=-0.2, draw=none] {$\geq$} -- (rect1.south);
			\draw[<->, thick]
				(rect1) -- node[midway, above=-0.2, draw=none] {=} (rect2);
			\draw[->, dashed]
				(rect2) -- node[midway, above=-0.2, draw=none] {$\leq$} (rect3);
			\draw[<->, thick]
				(rect3) -- node[midway, above=-0.2, draw=none] {=} (rect4);
		\end{tikzpicture}
    \Description{A chain of dual programs that closes ensures equality of all objective values}
    \caption{An overview of the linear programs used in our derivation
    throughout~\cref{sec:Myerson} and the relations between their optimal
    values.}
    \label{fig:chain_tikz}
\end{figure}

Before we formally present and start working within the LPs, we need to fix
some notation.
\paragraph{LP notation} Since our value sets are finite, for each player $i$ we
can enumerate their support as $V_i=\ssets{v_{i,1},v_{i,2},\dots,v_{i,K_i}}$,
for some positive integer $K_i$. For notational convenience, we denote $\vec{K}
\coloneqq [K_1]\times [K_2] \times\dots\times [K_n]$ and $\vec{K}_{-i} \coloneqq
[K_1]\times \dots \times [K_{i-1}] \times [K_{i-1}] \times \dots\times [K_n]$.
To keep our LP formulations below as clean as possible, we will feel free to
abuse notation and use the support indices $k \in [K_i]$ instead of the actual
values $v_{i,k} \in V_i$, as arguments for the allocations $a_i$, payments
$p_i$, and prior cdf's $F_i$ and pmf's $f_i$. That is, e.g., we will denote
$a_i(k,\vec{k}_{-i})$, $p_i(k,\vec{k}_{-i})$, $f_i(k)$, and $F_i(k)$, instead of
$a_i(v_i,\vec{v}_{-i})$, $p_i(v_i,\vec{v}_{-i})$, $f_i(v_i)$, and $F_i(v_i)$,
respectively, when the valuation profile $\vec{v}$ is such that $v_{i} =
v_{i,k_i}$ for $i\in[n]$. As all values are independently drawn from
distributions $F_i$, the probability of a bid profile $\vec{v} \in \vec{V}$
being realized is given by the pmf of their product distribution $\vec{F}$,
denoted by $f(\vec{k})=f(\vec{v}) = \prod_{i\in[n]} f_i(v_{i,k})$. Analogously,
we denote $f(\vec{k}_{-i}) = f(\vec{v}_{-i}) = \prod_{j\in[n]\setminus\ssets{i}}
f_j(v_{j,k_j})$. Finally, given that we make heavy use of duality, we choose to
label each constraint of our LPs with the name of its corresponding dual
variable, using \textcolor{blue}{blue colour} (see, e.g., \eqref{LP:LP1}).

\smallskip

For our starting~\eqref{LP:LP1}, we want to formulate an LP maximizing expected
revenue~\eqref{eq:revenue-def}, under the single-item allocation
constraints~\eqref{eq:feasibility_unity-simplex} of our current section, and
\ref{eq:DSIC-def} truthfulness, through its equivalent formulation
via~\cref{lemma:local-DSIC}. Since we want to optimize over the space of all
feasible auctions, the real-valued variables of our LP are the allocation and
payment rules of the auction, over all possible bidding profiles, namely
$\ssets{a_i(\vec{v}),p_i(\vec{v})}_{\vec{v}\in\vec{V}}$. Putting everything
together, we derive the following LP:
\begin{align*}
	 \text{max}& \quad  \sum_{\vec{v} \in \vec{V}} \sum_{i=1}^{n} p_i( \vec{v}) f(\vec{v}) \tag{\textbf{LP1}}\label{LP:LP1} \\  
    \text{s.t.} \quad &  v_{i,k} a_i(k, \vec{k}_{-i}) - p_i(k, \vec{k}_{-i}) \geq v_{i,k} a_i(k-1, \vec{k}_{-i}) - p_i(k-1, \vec{k}_{-i}) ,&& \textcolor{blue}{[\lambda_i(k,k-1, \vec{k}_{-i})]}\\
	&\quad \text{for} \; i \in [n], k \in [K_i], \vec{k}_{-i} \in \vec{K}_{-i}, \\
	&  v_{i,k} a_i(k, \vec{k}_{-i}) - p_i(k, \vec{k}_{-i}) \geq v_{i,k} a_i(k+1, \vec{k}_{-i}) - p_i(k+1, \vec{k}_{-i}) ,&& \textcolor{blue}{[\lambda_i(k,k+1, \vec{k}_{-i})]}\\
	&\quad \text{for} \; i \in [n], k \in [K_i], \vec{k}_{-i} \in \vec{K}_{-i}, \\
	& a_i(k, \vec{k}_{-i}) \geq a_i(k-1, \vec{k}_{-i}), && \textcolor{blue}{[\tau_i(k,k-1, \vec{k}_{-i})]}\\
	&\quad \text{for} \; i \in [n], k \in [K_i], \vec{k}_{-i} \in \vec{K}_{-i}, \\
	&\sum_{i=1}^{n} a_i( \vec{v}) \leq 1, && \textcolor{blue}{[\psi(\vec{v})]} \\
	&\quad \text{for} \; \vec{v} \in \vec{V}.
\end{align*}
Notice how our LP can readily incorporate the no-participation IR
constraints~\eqref{eq:no-participation-border}, by fixing the
under-/overflowing corner cases as constants 
\begin{equation}
    \label{eq:borderline-LP1}
    a_i(0,\vec{k}_{-i})=p_i(0,\vec{k}_{-i})=a_i(K_i+1,\vec{k}_{-i})=p_i(K_i+1,\vec{k}_{-i})=0
\end{equation}
for all bidders $i$, on any bidding profile $\vec{k}_{-i}$ of the other bidders.

According to this we formulate the dual LP \eqref{LP:DP1}. Similar to the
borderline cases~\eqref{eq:borderline-LP1} in the primal LP some restrictions on
the dual variables are necessary to obtain a correct dual problem formulation.
There we have 
\begin{equation}\label{eq:borderline-DP1}
    \lambda_i(K_i,K_i+1, \vec{k}_{-i}) = \lambda_i(K_i+1,K_i,
\vec{k}_{-i}) = \lambda_i(0,1,\vec{k}_{-i}) = \tau_i(K_i+1,K_i, \vec{k}_{-i}) = 0
\end{equation}
for all bidders $i$, on any bidding profile $\vec{k}_{-i}$ of the other bidders,
for constraints that do not exist in~\eqref{LP:LP1}. To ensure dual feasibility,
all dual variables corresponding to inequality constraints in the primal have to
be non-negative, thus all $\lambda,\psi,\tau \geq 0$. It is worth pointing out
that $\lambda_i(1,0,\vec{k}_{-i})$ and $\tau_i(1,0,\vec{k}_{-i})$ are explicitly
not fixed to zero as the corresponding constraints, the local downward DSIC
constraint that ensures IR, $v_{i,1} a_i(1, \vec{k}_{-i}) - p_i(1, \vec{k}_{-i})
\geq 0$, as well as the monotonicity constraint that ensures non-negativity of
the allocation variables, $a_i(1, \vec{k}_{-i}) \geq 0$, are crucial for the
problem. By that we write the dual LP as

\begin{align*}
	\min \quad & \sum_{\vec{v} \in \vec{V}} \psi(\vec{v}) \tag{\textbf{DP1}}\label{LP:DP1}\\
	\text{s.t.} \quad& \psi(k,\vec{k}_{-i}) \geq  v_{i,k} \lambda_i(k,k-1,\vec{k}_{-i}) + v_{i,k} \lambda_i(k,k+1,\vec{k}_{-i}) \\ & \qquad -  v_{i,k+1} \lambda_i(k+1,k,\vec{k}_{-i}) - v_{i,k-1} \lambda_i(k-1,k,\vec{k}_{-i})\\ & \qquad  + \tau_i(k,k-1,\vec{k}_{-i}) - \tau_i(k+1,k,\vec{k}_{-i}), && \textcolor{blue}{ [a_i(k,\vec{k}_{-i})]}\\
	&\quad \text{for \;} i \in [n], k \in [K_i],\vec{k}_{-i} \in \vec{K}_{-i}, \\
	& \lambda_i(k,k-1,\vec{k}_{-i}) + \lambda_i(k,k+1,\vec{k}_{-i}) \\ & \qquad -   \lambda_i(k+1,k,\vec{k}_{-i}) -  \lambda_i(k-1,k,\vec{k}_{-i}) = f(\vec{v}), &&\textcolor{blue}{[p_i(k,\vec{k}_{-i})]} \\
	&\quad \text{for \;} i \in [n], k \in [K_i],\vec{k}_{-i} \in \vec{K}_{-i}.
\end{align*}	

As $p$ are free variables in the primal problem, the corresponding dual
constraints are equations, while as $a$ are required to be non-negative, the
corresponding dual constraints are inequalities. In the same spirit as denoting
the local DSIC constraints, that consider a deviation to the lower value, as
\emph{downwards constraint}~\eqref{eq:local-IC-down}, we call the corresponding
dual variables $\lambda_i(k,k-1,\vec{k}_{-i})$ where the index in the first
argument is greater than in the second \emph{downward $\lambda$ variables}. The
dual variables $\lambda_i(k,k+1,\vec{k}_{-i})$ corresponding to the upwards DSIC
constraints~\eqref{eq:local-IC-up} are the \emph{upward $\lambda$ variables}.
Putting together the dual borderline variables~\eqref{eq:borderline-DP1} and the
set of equations in~\eqref{LP:DP1} we can state the following lemma, whose proof
can be found in~\cref{append:lemma:lambda_positive-proof}.

\begin{lemma}\label{lemma:lambda_positive} In any feasible solution of
    \eqref{LP:DP1} all downward $\lambda$ variables are strictly positive, i.e.,
    $$\lambda_i(k,k-1,\vec{k}_{-i}) > 0,$$
    for all $i \in [n], k\in [K_i], \vec{k}_{-i} \in \vec{K}_{-i}.$
\end{lemma}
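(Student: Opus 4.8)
The plan is to fix a bidder $i\in[n]$ and a profile $\vec{k}_{-i}\in\vec{K}_{-i}$ of the other bidders, and to work only with the $\lambda$ variables living on the corresponding ``line'' of values $k=1,\dots,K_i$. Abbreviating $d_k\coloneqq\lambda_i(k,k-1,\vec{k}_{-i})$ for the downward variables and $u_k\coloneqq\lambda_i(k,k+1,\vec{k}_{-i})$ for the upward ones, the equality constraint of~\eqref{LP:DP1} labelled by $p_i(k,\vec{k}_{-i})$ reads $d_k+u_k-d_{k+1}-u_{k-1}=f(k,\vec{k}_{-i})$, which I would rewrite as
\[
(d_k-d_{k+1})+(u_k-u_{k-1})=f(k,\vec{k}_{-i}),\qquad k=1,\dots,K_i.
\]
Note that the right-hand side is strictly positive, since $f(k,\vec{k}_{-i})$ is a product of pmf values evaluated at points of the respective finite supports, each of which carries positive probability mass.

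Next I would fix an arbitrary $j\in[K_i]$ and sum the displayed identity over $k=j,j+1,\dots,K_i$. Both sums on the left telescope, and using the borderline conditions~\eqref{eq:borderline-DP1}, namely $d_{K_i+1}=\lambda_i(K_i+1,K_i,\vec{k}_{-i})=0$ and $u_{K_i}=\lambda_i(K_i,K_i+1,\vec{k}_{-i})=0$, the whole thing collapses to
\[
d_j-u_{j-1}=\sum_{k=j}^{K_i} f(k,\vec{k}_{-i}).
\]
Since dual feasibility of~\eqref{LP:DP1} forces all $\lambda\ge 0$, we have $u_{j-1}=\lambda_i(j-1,j,\vec{k}_{-i})\ge 0$ (and $u_0=0$ when $j=1$, again by~\eqref{eq:borderline-DP1}), while the sum on the right is over the nonempty index range $\{j,\dots,K_i\}$ and hence strictly positive. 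Therefore
\[
\lambda_i(j,j-1,\vec{k}_{-i})=d_j=u_{j-1}+\sum_{k=j}^{K_i} f(k,\vec{k}_{-i})>0,
\]
and since $i$, $j$, and $\vec{k}_{-i}$ were arbitrary, this is exactly the statement of the lemma.

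I do not expect a genuine obstacle here: the argument is essentially a one-line telescoping identity, once the $p$-constraints are expressed in the $(d,u)$ form above. The only point requiring care is the bookkeeping of the borderline variables~\eqref{eq:borderline-DP1} — in particular, choosing to telescope \emph{from the top} ($k=K_i$ downwards), so that precisely the vanishing variables $\lambda_i(K_i+1,K_i,\vec{k}_{-i})$ and $\lambda_i(K_i,K_i+1,\vec{k}_{-i})$ are the ones eliminated; telescoping in the other direction would leave the expression entangled with variables that are not pinned down to $0$, and would not yield the sign one wants.
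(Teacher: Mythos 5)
Your proof is correct and is essentially the paper's argument in closed form: the paper proves the same identity $\lambda_i(k,k-1,\vec{k}_{-i})-\lambda_i(k-1,k,\vec{k}_{-i})=f(k,\vec{k}_{-i})+\bigl(\lambda_i(k+1,k,\vec{k}_{-i})-\lambda_i(k,k+1,\vec{k}_{-i})\bigr)$ by a downward induction starting at $k=K_i$, which unrolls to exactly your telescoped sum $d_j-u_{j-1}=\sum_{k=j}^{K_i}f(k,\vec{k}_{-i})>0$. Your handling of the borderline variables~\eqref{eq:borderline-DP1} and of dual non-negativity matches the paper's, so there is nothing to fix.
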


This motivates us to reformulate the dual program in a certain way. Recall, that
any dual solution has to satisfy the set of equations
\begin{equation}\label{eq:dual_equations}
    \lambda_i(k,k-1,\vec{k}_{-i}) + \lambda_i(k,k+1,\vec{k}_{-i}) = f(\vec{v}) + \lambda_i(k+1,k,\vec{k}_{-i}) +  \lambda_i(k-1,k,\vec{k}_{-i}).
\end{equation}
Using this we reformulate the dual inequality constraints
\begin{align*}
    \psi(\vec{v}) &\stackrel{\eqref{eq:dual_equations}}{\geq}  v_{i,k} f(\vec{v}) - ( v_{i,k+1} - v_{i,k}) \lambda_i(k+1,k,\vec{k}_{-i}) + (v_{i,k} - v_{i,k-1}) \lambda_i(k-1,k,\vec{k}_{-i})\\
    & \qquad  + \tau_i(k,k-1,\vec{k}_{-i}) - \tau_i(k+1,k,\vec{k}_{-i})
\end{align*}
Note, that by the use of~\eqref{eq:dual_equations}, i.e. exclusively equations,
this is only a reformulation and does not affect the set of feasible dual
solutions of \eqref{LP:DP1}. Now we unconventionally fix specific values of the
$\lambda$ variables. As the dual's objective aims to minimize the sum of the
$\psi$ variables, according to the reformulated inequality constraints it seems
convenient to choose all upward $\lambda$ as small and all downward $\lambda$ as
large as possible. To do so we set $\lambda_i(k,k+1,\vec{k}_{-i}) = 0$, for all
$k \in [K_i], i \in [n]$ and $\vec{k}_{-i}\in \vec{K}_{-i}$. Fixing variables,
essentially adding equality constraints, can only increase the optimal value of
\eqref{LP:DP1} in terms of minimization. As a next critical step, we introduce
\emph{free} variables $\rho$ and substitute the expression
$$\rho_i(k,\vec{k}_{-i})\coloneqq\lambda_i(k,k-1,\vec{k}_{-i}) -
\lambda_i(k+1,k,\vec{k}_{-i})$$ for all bidders $i$ with value index
$k\in[K_i]$, and any bidding profile $\vec{k}_{-i}$ of the other bidders. These
variables are all bound to fixed values and by dropping the $\lambda$ variables
from the problem formulation we do not lose any information about feasible dual
solutions as by ${\lambda_i(K+1,K,\vec{k}_{-i})=0}$ we keep track of all fixed
values. The reformulated dual LP then is

\begin{align*}
	\min \quad & \sum_{\vec{v} \in \vec{V}} \psi(\vec{v}) \tag{\textbf{DP2}}\label{LP:DP2} \\
	\text{s.t.} \quad& \psi(k,\vec{k}_{-i}) \geq  v_{i,k} \rho_i(k,\vec{k}_{-i}) - ( v_{i,k+1} - v_{i,k}) \sum_{l=k+1}^{K_i} \rho_i(l,\vec{k}_{-i}) \\ & \qquad  + \tau_i(k,k-1,\vec{k}_{-i}) - \tau_i(k+1,k,\vec{k}_{-i}), && \textcolor{blue}{ [a_i(k,\vec{k}_{-i})]}\\
	&\quad \text{for \;} i \in [n], k \in [K_i],\vec{k}_{-i} \in \vec{K}_{-i}, \\
	& \rho_i(k,\vec{k}_{-i}) = f(\vec{v}) , &&\textcolor{blue}{[p_i(k,\vec{k}_{-i})]}\\
	&\quad \text{for \;} i \in [n], k \in [K_i],\vec{k}_{-i} \in \vec{K}_{-i}. 
\end{align*}
The inequality constraints now can also be written with all explicit values of
$\rho$ inserted. By that we obtain for a fixed bidder $i$ and bids
$\vec{v}_{-i}$
\begin{equation}\label{eq:virtual_values_appear}
    \psi(k,\vec{k}_{-i}) \geq  f(\vec{v}) \left[ v_{i,k} - ( v_{i,k+1} - v_{i,k})  \frac{1-F_i(k)}{f_i(k)}  + \frac{\tau_i(k,k-1,\vec{k}_{-i})}{f(\vec{v})} - \frac{\tau_i(k+1,k,\vec{k}_{-i})}{f(\vec{v})} \right].
\end{equation}

This gives rise to the well known definition of a sequence of values for player
$i$ which is independent of all other bidders' values $\vec{v}_{-i}$.

\begin{definition}[Virtual Values]\label{definition:virtual_values}
    The \emph{virtual values} of bidder $i \in [n]$ are defined as
    \begin{equation}
        \label{eq:virtual_values-def}
        \varphi_i(k) = \varphi_i(v_{i,k}) \coloneqq v_{i,k} - ( v_{i,k+1} - v_{i,k}) \frac{1-F_i(v_{i,k})}{f_i(v_{i,k})} \qquad\text{for}\;\;k\in [K_i].
    \end{equation}
\end{definition}
We return to the primal setting of allocation and payment variables by now
taking \emph{the dual of the dual}. To get the full transparency of the gained
insights within the reformulation to~\eqref{LP:DP2} we do two things at the same
time: We insert the true values of all $\rho$ in the inequalities and obtain the
virtual values as the coefficients of the allocation variables in the new primal
objective. At the same time, we stick with $\rho$ as free variables in the dual
inequalities and obtain the payment formula in~\eqref{LP:LP2} as the
coefficients of $\rho$ in the dual become the coefficients of the allocation
variables in the primal payment formula. Note that equivalently, we could still
maximize the expected payments in the new primal LP without using the explicit
values for $\rho$.

\begin{align*}
	\max \quad &   \sum_{\vec{v} \in \vec{V}} \sum_{i=1}^{n} a_i( \vec{v}) \varphi_i(k) f(\vec{v}) \tag{\textbf{LP2}}\label{LP:LP2} \\ 
	\text{s.t.} \quad&  p_i(k, \vec{k}_{-i}) = v_{i,k} a_i(k, \vec{k}_{-i}) - \sum_{l=1}^{k-1} (v_{i,l+1} - v_{i,l} ) a_i(l, \vec{k}_{-i}),&& \textcolor{blue}{[\rho_i(k, \vec{k}_{-i})]}\\
	&\quad \text{for} \; i \in [n], k \in [K_i], \vec{k}_{-i} \in \vec{K}_{-i}, \\
	& a_i(k, \vec{k}_{-i}) \geq a_i(k-1, \vec{k}_{-i}), && \textcolor{blue}{[\tau_i(k,k-1, \vec{k}_{-i})]}\\
	&\quad \text{for} \; i \in [n], k \in [K_i], \vec{k}_{-i} \in \vec{K}_{-i}, \\
	&\sum_{i=1}^{n} a_i( \vec{v}) \leq 1, && \textcolor{blue}{[\psi(\vec{v})]} \\
	&\quad \text{for} \; \vec{v} \in \vec{V}.
\end{align*}

As our interest lies in optimal auctions, we \emph{close the chain} of LPs using
\cref{lemma:lambda_positive} and strong LP duality to verify that the sets of
optimal solutions of~\eqref{LP:LP1} and of~\eqref{LP:LP2} are equivalent.

\begin{lemma} \label{lemma:DSIC_opt} Any optimal solution of \eqref{LP:LP2}
	represents an optimal DSIC auction, i.e.\ an optimal solution of
	\eqref{LP:LP1} and vice versa.
\end{lemma}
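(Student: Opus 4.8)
The plan is to exploit the duality chain of Figure \ref{fig:chain_tikz} together with Lemma \ref{lemma:lambda_positive} to show that all four linear programs share the same optimal value, and then argue that \eqref{LP:LP1} and \eqref{LP:LP2} in fact have the same optimal solution set. First I would record the easy direction of the chain: \eqref{LP:LP1} and its dual \eqref{LP:DP1} have equal optimal values by strong LP duality (feasibility on both sides is easy — the null auction is primal feasible, and Lemma \ref{lemma:lambda_positive}'s proof content shows \eqref{LP:DP1} is feasible). Passing from \eqref{LP:DP1} to \eqref{LP:DP2} only adds equality constraints (fixing the upward $\lambda$ to zero and re-parametrizing via $\rho$), so the \eqref{LP:DP2}-optimum is at least the \eqref{LP:DP1}-optimum; and taking the dual of \eqref{LP:DP2} gives \eqref{LP:LP2}, again with equal optimal values by strong duality. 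This establishes $\mathrm{val}(\text{\ref{LP:LP1}}) = \mathrm{val}(\text{\ref{LP:DP1}}) \le \mathrm{val}(\text{\ref{LP:DP2}}) = \mathrm{val}(\text{\ref{LP:LP2}})$.

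The crux is the reverse inequality $\mathrm{val}(\text{\ref{LP:LP2}}) \le \mathrm{val}(\text{\ref{LP:LP1}})$, which I would obtain by showing that \eqref{LP:LP1} is a relaxation of \eqref{LP:LP2} in the sense that every feasible solution of \eqref{LP:LP2} yields a feasible solution of \eqref{LP:LP1} with the same objective value. Given a feasible $(\vec a, \vec p)$ for \eqref{LP:LP2}, the allocation $\vec a$ already satisfies the monotonicity constraints \eqref{eq:local-IC-monotonicity} and the simplex constraints \eqref{eq:feasibility_unity-simplex}, which reappear verbatim in \eqref{LP:LP1}. It remains to check the local DSIC inequalities \eqref{eq:local-IC-down} and \eqref{eq:local-IC-up}: the payment equation of \eqref{LP:LP2}, namely $p_i(k,\vec k_{-i}) = v_{i,k} a_i(k,\vec k_{-i}) - \sum_{l=1}^{k-1}(v_{i,l+1}-v_{i,l}) a_i(l,\vec k_{-i})$, is exactly the telescoped Myerson payment, so I would verify by a one-line difference computation that $u_i(k,\vec k_{-i};v_{i,k}) - u_i(k-1,\vec k_{-i};v_{i,k}) = (v_{i,k}-v_{i,k-1})(a_i(k,\vec k_{-i}) - a_i(k-1,\vec k_{-i})) \ge 0$ using monotonicity, and the analogous computation for the upward direction, where $u_i(k,\vec k_{-i};v_{i,k}) - u_i(k+1,\vec k_{-i};v_{i,k}) = (v_{i,k+1}-v_{i,k})(a_i(k+1,\vec k_{-i}) - a_i(k,\vec k_{-i})) \le 0$ flipped in sign — so the upward constraint also reduces to monotonicity. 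The border cases \eqref{eq:borderline-LP1} make the $k=1$ IR inequality and the $k=K_i$ case go through. Finally, the objective value matches: substituting the payment formula into $\sum_{\vec v} \sum_i p_i(\vec v) f(\vec v)$ and rearranging the double sum (a standard Abel/summation-by-parts rearrangement, using $\sum_{l \ge k} f_i(l) = 1 - F_i(k)$ for the marginals of the other coordinates) reproduces exactly $\sum_{\vec v} \sum_i a_i(\vec v) \varphi_i(k) f(\vec v)$, which is the objective of \eqref{LP:LP2}.

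Having shown all four values coincide, I would then upgrade ``same value'' to ``same optimal set.'' The map $(\vec a, \vec p) \mapsto (\vec a, \vec p)$ from \eqref{LP:LP2}-feasible to \eqref{LP:LP1}-feasible points is objective-preserving, so it sends \eqref{LP:LP2}-optimal points to \eqref{LP:LP1}-optimal points. Conversely, given an optimal $(\vec a, \vec p)$ for \eqref{LP:LP1}, I would invoke the standard single-parameter payment-uniqueness argument: Lemma \ref{lemma:lambda_positive} tells us that in the dual \emph{every} downward $\lambda_i(k,k-1,\vec k_{-i})$ is strictly positive, hence by complementary slackness the corresponding primal downward DSIC constraints \eqref{eq:local-IC-down} hold with equality at optimality; together with the border condition these equalities telescope to force $p_i(k,\vec k_{-i})$ to equal precisely the Myerson payment formula appearing as the constraint of \eqref{LP:LP2}. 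Thus any \eqref{LP:LP1}-optimal solution already satisfies the defining equality constraint of \eqref{LP:LP2}, and has matching objective value, so it is \eqref{LP:LP2}-optimal. The main obstacle is bookkeeping: getting the summation-by-parts rearrangement of the revenue objective exactly right (including the factorization of $f(\vec v)$ through the $\vec k_{-i}$ coordinates) and handling the $k=1$ and $k=K_i$ boundary terms consistently with \eqref{eq:borderline-LP1} and \eqref{eq:borderline-DP1}; the complementary-slackness step itself is routine once Lemma \ref{lemma:lambda_positive} is in hand.
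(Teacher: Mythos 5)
Your proposal is correct and follows essentially the same route as the paper: show that every \eqref{LP:LP2}-feasible point is \eqref{LP:LP1}-feasible with equal objective (the payment identity turning revenue into virtual welfare), which closes the duality chain, and then use \cref{lemma:lambda_positive} together with complementary slackness to force all downward constraints to bind at any \eqref{LP:LP1}-optimum, thereby recovering the payment formula of \eqref{LP:LP2}. One cosmetic slip: under the payment rule \eqref{eq:first_payment_rule} the downward utility difference is exactly $0$ (the constraint binds), not $(v_{i,k}-v_{i,k-1})\bigl(a_i(k,\vec{k}_{-i})-a_i(k-1,\vec{k}_{-i})\bigr)$, but since both quantities are nonnegative under monotonicity this does not affect your feasibility argument.
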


\begin{proof}
    We first show that any solution of \eqref{LP:LP2} is also feasible for
    \eqref{LP:LP1}. This closes the chain in terms of objective values.
    Secondly, we use complementary slackness to prove that any optimal solution
    of \eqref{LP:LP1} has to be feasible for \eqref{LP:LP2}. By that, both linear
    programs have the exact same set of optimal solutions.

    Let $(\vec{a},\vec{p})$ be an optimal solution of \eqref{LP:LP2}. We insert
    the payment rule into the local DSIC constraints of \eqref{LP:LP1} to verify
    that they are satisfied. For that fix player $i$, player $i$'s value
    $v_{i,k}$, and all other players' values $\vec{v}_{-i}$ and consider the
    downward constraint

    \begin{align*}
				v_{i,k} a_i(k, \vec{k}_{-i}) - p_i(k, \vec{k}_{-i}) &\geq v_{i,k} a_i(k-1, \vec{k}_{-i}) - p_i(k-1, \vec{k}_{-i})\\
				 \sum_{l=1}^{k-1} (v_{i,l+1} - v_{i,l} ) a_i(l, \vec{k}_{-i}) &\geq v_{i,k} a_i(k-1, \vec{k}_{-i}) - v_{i,k-1} a_i(k-1, \vec{k}_{-i}) + \sum_{l=1}^{k-2} (v_{i,l+1} - v_{i,l} ) a_i(l, \vec{k}_{-i})\\
				(v_{i,k} - v_{i,k-1} ) a_i(k-1, \vec{k}_{-i}) &\geq (v_{i,k}  - v_{i,k-1} ) a_i(k-1, \vec{k}_{-i}).
		\end{align*}
  The equality in the last line is no coincidence, and we have a closer look at
  it in the last part of the proof.

   For the upward constraint we have
    \begin{align*}
				v_{i,k} a_i(k, \vec{k}_{-i}) - p_i(k, \vec{k}_{-i}) &\geq v_{i,k} a_i(k+1, \vec{k}_{-i}) - p_i(k+1, \vec{k}_{-i})\\
				\sum_{l=1}^{k-1} (v_{i,l+1} - v_{i,l} ) a_i(l, \vec{k}_{-i})  &\geq v_{i,k} a_i(k+1, \vec{k}_{-i}) - v_{i,k+1} a_i(k+1, \vec{k}_{-i}) + \sum_{l=1}^{k} (v_{i,l+1} - v_{i,l} ) a_i(l, \vec{k}_{-i})\\
				(v_{i,k+1} - v_{i,k}) a_i(k+1, \vec{k}_{-i}) &\geq (v_{i,k+1} - v_{i,k} ) a_i(k, \vec{k}_{-i})
			\end{align*}
   which always holds by monotonicity.

   Finally, for the other direction, let $(\vec{a},\vec{p})$ be an optimal
   solution of \eqref{LP:LP1}. Any optimal solution of this linear program by
   strong duality has to satisfy complementary slackness: If any dual variable
   is strictly positive, the corresponding primal constraint has to bind. By
   \cref{lemma:lambda_positive} we know that all feasible, thus, all optimal
   downward $\lambda$ variables, are positive. This implies that in any optimal
   solution of \eqref{LP:LP1} all local downward constraints have to bind. The
   payment rule of \eqref{LP:LP2} is only the result of the successive
   application of the binding constraints. The upward constraints are then also
   satisfied, as this follows directly from the first part of the proof.
\end{proof}

The immediate result is that the problem of finding an optimal \ref{eq:DSIC-def}
auction reduces to finding an optimal solution of~\eqref{LP:LP2}, i.e. a
feasible, virtual welfare maximizing, monotone allocation rule $\vec{a}$. The
optimal payments are computed afterwards as a linear function of the allocations
according to the payment rule 
\begin{equation}\label{eq:first_payment_rule}
    p_i(k, \vec{k}_{-i}) = v_{i,k} a_i(k, \vec{k}_{-i}) - \sum_{l=1}^{k-1} (v_{i,l+1} - v_{i,l} ) a_i(l, \vec{k}_{-i}).
\end{equation}

If for player $i$ and fixed $\vec{v}_{-i}$ the allocation variables $a_i(k,
\vec{k}_{-i})\in \{0,1\}$ for $k\in [K_i]$, the payment rule
\eqref{eq:first_payment_rule} simplifies: If player $i$ wins, the payment
$p_i(k, \vec{k}_{-i})$ is the \emph{critical bid}, i.e. the minimum value such
that player $i$ still wins, and zero if player $i$ does not win. Therefore, we
want to examine the potential of the allocation variables being binary in an
optimal solution in the following.

\subsection{Deterministic vs Randomized Auctions}
\label{sec:determinism-vs-randomization-Myerson}

In this section we essentially establish the foundation for
Point~\ref{item:optimal-single_item-determinism} of
\cref{th:discrete-optimal-myerson}. We are using the property of \emph{total
unimodularity} \cite{Hoffman2010} of the constraint matrix of \eqref{LP:LP2}.
This is enough to show that the optimal allocations of \eqref{LP:LP1} and
\eqref{LP:LP2} are the convex hull of optimal binary solutions.

\begin{lemma}[Optimality of determinism]\label{lemma:determinism_TU} The
    vertices of the polyhedron of feasible allocations for \eqref{LP:LP2} are
    integral, hence, binary.
\end{lemma}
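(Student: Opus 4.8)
The statement to prove is that the polyhedron $Q$ of feasible allocation vectors $\vec a = \{a_i(k,\vec k_{-i})\}$ for~\eqref{LP:LP2} — defined by the monotonicity constraints $a_i(k,\vec k_{-i}) \geq a_i(k-1,\vec k_{-i})$, the simplex constraints $\sum_i a_i(\vec v) \leq 1$, and (via the borderline conventions~\eqref{eq:borderline-LP1}) the non-negativity constraints $a_i(1,\vec k_{-i}) \geq 0$ — has only integral vertices; since all constraints force the coordinates into $[0,1]$, integral means binary. The plan is to invoke the classical theorem (Hoffman--Kruskal, see~\parencite{Hoffman2010}) that if $\vec A$ is a totally unimodular (TU) matrix and $\vec b$ is an integral vector, then every vertex of $\{\vec x : \vec A \vec x \leq \vec b\}$ is integral. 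So it suffices to rewrite the defining system of $Q$ in the form $\vec A \vec a \leq \vec b$ with $\vec b$ integral (here $\vec b$ has entries $0$ and $1$) and show that the coefficient matrix $\vec A$ is TU.

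First I would write out $\vec A$ explicitly. After moving everything to the ``$\leq$'' form, each monotonicity constraint contributes a row with a single $+1$ (on $a_i(k-1,\vec k_{-i})$) and a single $-1$ (on $a_i(k,\vec k_{-i})$) — except at the borderline $k=1$, where by~\eqref{eq:borderline-LP1} the term $a_i(0,\vec k_{-i})$ vanishes and the row degenerates to $-a_i(1,\vec k_{-i}) \leq 0$, a row with a single nonzero entry. Each simplex constraint $\sum_{i=1}^n a_i(\vec v) \leq 1$ contributes a row of $n$ ones, with support exactly the coordinates $\{a_i(\vec v)\}_{i\in[n]}$ for a fixed profile $\vec v$. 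So every row of $\vec A$ has entries in $\{-1,0,+1\}$, and every row is either (a) a $\{0,\pm1\}$ ``difference'' row with one $+1$ and one $-1$, (b) a single-entry $-1$ row, or (c) a $\{0,1\}$ row.

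The key step is establishing total unimodularity. I would use the Ghouila-Houri characterization: $\vec A$ is TU iff every subset $R$ of its rows can be partitioned into $R_1 \cup R_2$ so that, for each column, the sum of the entries in $R_1$ minus the sum in $R_2$ lies in $\{-1,0,+1\}$. The structure that makes this work is that the two \emph{families} of constraints involve disjoint ``directions'': the monotonicity rows for a fixed bidder $i$ and fixed $\vec k_{-i}$ form a path/interval structure (a consecutive-ones / network-matrix pattern in the index $k$), which is classically TU; and the simplex rows, one per profile $\vec v$, have pairwise disjoint supports, so they form an identity-like block that is trivially TU and, crucially, each column $a_i(\vec v)$ is hit by \emph{exactly one} simplex row. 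Given a row subset $R$, put all simplex rows into $R_1$; for the monotonicity rows, within each fixed $(i,\vec k_{-i})$ ``chain'' assign rows alternately to $R_1$ and $R_2$ walking along increasing $k$ (the standard interval-matrix 2-coloring). Then in any column $a_i(k,\vec k_{-i})$: the monotonicity contribution is a telescoping $\pm1$-or-$0$ by the alternation argument, and the at-most-one simplex row contributes $0$ or $+1$; one checks the total stays within $\{-1,0,1\}$ — the only case needing care is a column that is the ``top'' or ``bottom'' of its chain and also appears in a simplex row, which is handled by choosing the alternation's starting parity appropriately (or, more cleanly, by first doing the chains and then noting each column lies in at most two monotonicity rows with opposite signs plus at most one simplex row). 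Alternatively, and perhaps more transparently, I would recognize $\vec A$ as an \emph{interval matrix} / network matrix directly: the monotonicity constraints per chain are exactly incidence rows of a directed path, the simplex rows can be appended, and one appeals to the standard fact that such matrices are TU.

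The main obstacle I anticipate is purely bookkeeping: handling the borderline conventions~\eqref{eq:borderline-LP1} cleanly so that the degenerate rows (single $-1$, encoding $a_i(1,\vec k_{-i}) \geq 0$) and the possibly-degenerate top-of-chain rows do not break the 2-coloring argument, and making sure that the simplex rows and the monotonicity chains genuinely interact only through columns that each simplex row touches once. Once the TU of $\vec A$ is in hand, the conclusion is immediate: $Q = \{\vec a : \vec A\vec a \leq \vec b,\ \vec b \in \{0,1\}^{\text{rows}}\}$ is an integral polyhedron, so its vertices are integral; and since every feasible allocation satisfies $0 \leq a_i(k,\vec k_{-i}) \leq 1$ (non-negativity from the bottom of each chain, the upper bound $\leq 1$ from any simplex row containing that coordinate), integral forces the vertices to be $\{0,1\}$-valued.
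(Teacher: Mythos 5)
Your skeleton --- write the feasible allocations of \eqref{LP:LP2} as $\{\vec{a} : A\vec{a}\le b\}$ with $b$ integral, prove $A$ totally unimodular, and invoke Hoffman--Kruskal --- is exactly the paper's route, and your description of the rows (difference rows, degenerate non-negativity rows from \eqref{eq:borderline-LP1}, disjoint-support simplex rows) is accurate. The gap is in the TU verification itself, which is the substantive part. Your Ghouila--Houri partition (all simplex rows in $R_1$, monotonicity rows \emph{alternating} along each chain) fails for this sign pattern: two consecutive monotonicity rows $a_i(k-1,\vec{k}_{-i})-a_i(k,\vec{k}_{-i})\le 0$ and $a_i(k,\vec{k}_{-i})-a_i(k+1,\vec{k}_{-i})\le 0$ share the column $a_i(k,\vec{k}_{-i})$ with entries $-1$ and $+1$, so putting them in \emph{opposite} classes yields a signed column sum of $\pm 2$ already for $R$ consisting of just these two rows (and up to $\pm 3$ once the column's simplex row is added to $R_1$). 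Alternation is the correct recipe when shared entries have equal sign (interval matrices of ones); here consecutive rows must go to the \emph{same} class so that the $\pm 1$'s cancel, and no choice of ``starting parity'' repairs alternation on runs of length $\ge 3$.

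Moreover, the case you flag as the only delicate one (a column at the top or bottom of its chain that also meets a simplex row) is not where the difficulty lies --- every column meets exactly one simplex row. If you instead place all monotonicity rows of $R$ in one class, so their contribution to each column is in $\{-1,0,+1\}$, then a column at the upper end of a run forces its simplex row into $R_2$, while a column at the lower end forces its simplex row into $R_1$; since one simplex row touches one column in each of the $n$ bidders' chains, these requirements can conflict (e.g.\ $R$ = \{monotonicity row $(1,k_1{+}1)$, monotonicity row $(2,k_2)$, the simplex row of the profile $(k_1,k_2)$\}), so monotonicity rows must be re-flipped and the choices propagate across chains through other simplex rows. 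Establishing that a consistent assignment always exists is precisely the nontrivial interaction between the monotonicity blocks and the side-by-side unit blocks, which the paper settles by a case analysis of determinants of mixed submatrices via Laplace expansion; your ``one checks,'' the parenthetical local observation (each column in at most two opposite-sign monotonicity rows plus one simplex row --- a property that by itself does not imply TU), and the appeal to a ``standard fact'' that simplex rows can be appended to a network/path-incidence matrix (appending arbitrary $0/1$ rows does not preserve TU in general) do not close this. So the statement and strategy are right, but the key step --- TU of the combined matrix --- is not established by your argument.
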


\begin{proof}
	This proof is based on the matrix property of total unimodularity (TU). More
	specific, a matrix $A$ is called TU, if any quadratic submatrix has determinant $-1,0$ or $1$. We will make use of the following well-known properties (see,
	e.g., \cite[Ch.~13]{schrijver2003co}): for any matrix
	$A\in\ssets{-1,0,1}^{M\times N}$, (i) $A$ is TU if and only if
	$(A\,\unitmatrix)$ is TU, where $\unitmatrix$ is an $M\times M$ unit matrix, (ii) $A$ is TU if and only if $A^\top$ is TU, where $A^\top$ denotes the transpose of matrix $A$, and (iii) the adjacency matrix of a directed graph is TU.
	
	To argue this coherently we have to put some effort in understanding the
	structure of the linear constraints that can be expressed as matrix vector
	inequalities. The set of feasible allocations for \eqref{LP:LP2} are all
	non-negative $\vec{a}$ satisfying
	\begin{align}
		& a_i(k, \vec{k}_{-i}) \geq a_i(k-1, \vec{k}_{-i})
		&&\text{for} \; i \in [n], k \in [K_i], \vec{k}_{-i} \in \vec{K}_{-i}, \label{eq:monotonicity_lemma}\\
		&\sum_{i=1}^{n} a_i( \vec{v}) \leq 1 &&\text{for} \; \vec{v} \in \vec{V}.\label{eq:feasibility_lemma}
	\end{align}
	Therefore, we consider the allocation polyhedron $\{\vec{a} \, | \,
	a_i(k, \vec{k}_{-i}) \geq a_i(k-1, \vec{k}_{-i}) \, , i \in [n], k \in
	[K_i], \vec{k}_{-i} \in \vec{K}_{-i},\, \text{and}, \, \sum_{i=1}^{n} a_i(
	\vec{v}) \leq 1 \, ,\vec{v} \in \vec{V}\}$ as $\{\vec{a} \, | \, A \vec{a}
	\leq b \}$. We see the allocations $\vec{a}$ as a vector of dimension
	$\mathcal{N} \coloneqq n \cdot |\vec{V}| = n \cdot |V_1| \cdots |V_n| = n
	\cdot K_1 \cdots K_n$. We order them such that the first $|\vec{V}|$ entries
	are all allocations of player $1$ varying over the bid profiles $\vec{v}$,
	then the second player and so on. The order of the $\vec{v}$ is consistent
	over all players. By that we can write the constraint matrix $A$ and the
	right-hand side $b$ as
	\begin{equation}\label{eq:TU_matrix}
		A = \left(\begin{array}{ccc|c|ccc}
			M_1(\vec{v}_{-1}) &  & 0 &  \\
			& \ddots & & \zeromatrix & & \zeromatrix \\
			0 &  & M_1(\vec{v}^\prime_{-1}) &  \\\hline
			& \zeromatrix && \ddots & &  \zeromatrix \\\hline
			&&&& M_n(\vec{v}_{-n}) &  & 0 \\
			& \zeromatrix && \zeromatrix & & \ddots & \\
			&&&& 0 &  & M_n(\vec{v}^\prime_{-n}) \\\hline
			1 &  & 0 &  & 1 & &0 \\
			& \ddots &  & \cdots && \ddots  \\
			0 &  & 1 & & 0 && 1 
		\end{array}\right), \quad 
		b = \begin{pmatrix}
			0 \\
			\vdots \\
			0 \\
            0 \\
			0 \\
			\vdots \\
			0 \\
			1 \\
			\vdots \\
			1 
		\end{pmatrix}
	\end{equation}
	Here, $M_i(\vec{v}_{-i})$ is a $|V_i|\times|V_i|$ matrix that contains the
	coefficients of the monotonicity constraints~\eqref{eq:monotonicity_lemma}
	of player $i$ fixing all other players' values to $\vec{v}_{-i}$. We do not
	keep explicit track of the order within the monotonicity block matrices, but
	note that each block is the transposed adjacency matrix of a directed graph.
	By that, the monotonicity part is TU. Note, that the monotonicity constraints $a_i(1,
	\vec{k}_{-i}) \geq a_i(0, \vec{k}_{-i})=0$ are not considered in the
	monotonicity, as they can be seen as general non-negativity constraints. These constraints, however, are not decisive for $A$ being TU as they extend the constraint matrix only in the sense of $(A\,\unitmatrix)$. Therefore, we do not consider them.
	
	The side by side unity matrices in the last row of blocks represent the feasibility constraints~\eqref{eq:feasibility_lemma}
	and are also clearly TU. It remains to show that any sub-matrix of $A$ consisting of a mixture of the
	different parts of monotonicity blocks and the side by side unit
	matrices is TU. 
	
	For that, we consider the determinant of an arbitrary submatrix of $A$ that only contains elements of one row of the side by side unit matrices as the last row of this submatrix. If only elements corresponding to the columns of a single player are chosen, the problem reduces to (i) and (iii). Therefore, we assume to have entries corresponding to different players. We first consider the case where all elements in this row are $1$ entries and call this the \textit{simple case}. As each of those corresponds to a different player and we can only choose players$-1$ rows of the upper part for the submatrix, due to the block structure there has to exist a column with a single $1$ entry in the last row and only $0$ entries above. The Laplace expansion along this column clearly leads to the desired result.
	
	Next, we add an arbitrary $0$ element to the last row of the submatrix, i.e., we add a column above this zero and we add a row of elements from the upper block part. Here, we assume that the submatrix did not have a row of zero entries before adding the new elements, otherwise the Laplace expansion along this row immediately would reduce to the simple case. Now only three cases can occur: (a) the new $0$ element does not correspond to any player for which a $1$ entry in the last row exists, (b) the new $0$ element is added to a player for which a $1$ entry exists in the last row and the column above contains two non-zero entries, or (c) the new $0$ element is added to a player for which a $1$ entry exists in the last row and the column above contains exactly one non-zero entry. If (a) applies, there can be at most one non-zero entry exactly where the new row is added. The Laplace expansion along this new column reduces to the simple case, the same holds for case (c). In the case of (b) the new row can only contain at most a single non-zero entry. The Laplace expansion along this new row reduces to the simple case again.
	
	Lastly, a submatrix with multiple rows from the side-by-side unit matrices easily reduces to the case of a single row of this lower part: As this part never has more than a single $1$ entry in each column the Laplace expansion along any column reduces to a single row again. 
	
	Therefore, $A$ is totally unimodular and as $b$ has only integer entries, the proof is concluded.
\end{proof}

Thus, determinism of optimal DSIC auctions is without loss. Also, since the set
of optimal solutions is convex, any fractional optimal solution is only a convex
combination of multiple integer solutions and for given $\vec{v} \in \vec{V}$
represents a probability distribution.

\subsection{Dominant-strategy vs Bayesian Truthfulness}
\label{sec:DSIC-vs-BIC-Myerson}

The optimal auction problem typically is considered in a setting where
truthfulness constraints are a relaxed version of \eqref{eq:DSIC-def} and a
bidder's truthfulness only has to hold in expectation over all other bidders'
distributions, i.e., in the \eqref{eq:BIC-def} sense
(see~\cref{sec:truthfulness}). In this section we essentially perform the same
steps as in \cref{sec:duality-chain-myerson} where we considered DSIC
truthfulness, but unlike before under the constraints that an auction has to be
BIC. Now, by the expectations in the Bayesian setting, we have a drastically
reduced number of constraints. Due to the feasibility constraints
\eqref{eq:feasibility_unity-simplex} that do not change between the two
settings, we still consider the same number of primal variables. Maintaining the
amount of dual constraints ultimately yields the same payment formula as
in~\eqref{LP:LP2} for any optimal BIC auction.

The LP to find the optimal auction under Bayesian truthfulness is
\eqref{LP:BLP1}. \cref{lemma:local-DSIC} allows us to restrict ourselves to
local truthfulness without loss as well. Also note, that we fix the same
borderline variables to the same values as in the DSIC setting,
see~\eqref{eq:borderline-LP1}.

\begin{align*}
        \max \quad &   \sum_{\vec{v} \in \vec{V}} \sum_{i=1}^{n} p_i( \vec{v}) f(\vec{v}) \tag{\textbf{BLP1}}\label{LP:BLP1} \\ 
		\text{s.t.} \quad&  \sum_{\vec{v}_{-i}} \bigl[ v_{i,k} a_i(k, \vec{k}_{-i}) - p_i(k, \vec{k}_{-i}) - v_{i,k} a_i(k-1, \vec{k}_{-i}) + p_i(k-1, \vec{k}_{-i}) \bigr] f_{-i}(\vec{v}_{-i}) \geq 0  , && \textcolor{blue}{[\lambda_i(k,k-1)]}\\
        &\quad \text{for} \; i \in [n], k \in [K_i], \\
        &  \sum_{\vec{v}_{-i}} \bigl[ v_{i,k} a_i(k, \vec{k}_{-i}) - p_i(k, \vec{k}_{-i}) - v_{i,k} a_i(k+1, \vec{k}_{-i}) + p_i(k+1, \vec{k}_{-i}) \bigr] f_{-i}(\vec{v}_{-i}) \geq 0  ,&& \textcolor{blue}{[\lambda_i(k,k+1)]}\\
		&\quad \text{for} \; i \in [n], k \in [K_i],  \\
		&\sum_{\vec{v}_{-i}} \bigl[  a_i(k, \vec{k}_{-i}) - a_i(k-1, \vec{k}_{-i})  \bigr] f_{-i}(\vec{v}_{-i}) \geq 0  , && \textcolor{blue}{[\tau_i(k,k-1)]}\\
		&\quad \text{for} \; i \in [n], k \in [K_i],  \\
		&\sum_{i=1}^{n} a_i( \vec{v}) \leq 1, && \textcolor{blue}{[\psi(\vec{v})]} \\
		&\quad \text{for} \; \vec{v} \in \vec{V}.
	\end{align*}
The dual LP \eqref{LP:BDP1} now has a reduced number of variables but the same
number of constraints as \eqref{LP:DP1}.

\begin{align*}
		\min \quad & \sum_{\vec{v} \in \vec{V}} \psi(\vec{v}) \tag{\textbf{BDP1}}\label{LP:BDP1}\\
		\text{s.t.} \quad& \psi(\vec{v}) \geq f(\vec{v}_{-i}) \bigl[ v_{i,k} \lambda_i(k,k-1) + v_{i,k} \lambda_i(k,k+1) \\ & \qquad -  v_{i,k+1} \lambda_i(k+1,k) - v_{i,k-1} \lambda_i(k-1,k)\\ & \qquad  + \tau_i(k,k-1) - \tau_i(k+1,k) \bigr], && \textcolor{blue}{[a_i(k,\vec{k}_{-i})]}\\
		&\quad \text{for} \; i \in [n], k \in [K_i], \vec{k}_{-i} \in \vec{K}_{-i} ,\\
		&  \lambda_i(k,k-1) + \lambda_i(k,k+1) \\ & \qquad -   \lambda_i(k+1,k) -  \lambda_i(k-1,k) =   f_i(v_{i,k}),  && \textcolor{blue}{[p_i(k,\vec{k}_{-i})]} \\
		&\quad \text{for} \; i \in [n], k \in [K_i], \vec{k}_{-i} \in \vec{K}_{-i}.
\end{align*}

By the same argument as in the DSIC setting, \cref{lemma:lambda_positive} still
applies, and all feasible downward $\lambda$ are strictly positive. Again
we fix all upward $\lambda$ to zero and after the similar substitution by the
free variable $\rho$, we obtain the same results for the dual constraints,
namely,

\begin{align*}
    &\psi(\vec{v}) \geq f(\vec{v}_{-i}) \Bigg[ v_{i,k} \rho_i(k, \vec{k}_{-i}) - (v_{i,k+1} - v_{i,k}) \sum_{l=k+1}^{K_i} \rho_i(l, \vec{k}_{-i})  + \tau_i(k,k-1) - \tau_i(k+1,k) \Bigg], \\
    &\rho_i(k, \vec{k}_{-i}) \coloneqq \lambda_i(k,k-1) - \lambda_i(k+1,k) =   f_i(k).
\end{align*}
Inserting the fixed values of $\rho$ yields

\begin{equation}\label{eq:BIC_virtual_values}
     \psi(\vec{v}) \geq f(\vec{v}) \Bigg[ \varphi_i (k) +\frac{ \tau_i(k,k-1)}{f_i(k)} -  \frac{\tau_i(k+1,k)}{f_i(k)} \Bigg].
\end{equation}
Perhaps surprisingly, although we consider BIC truthfulness, this yields the
exact same virtual values as in the DSIC setting, but due to the fewer
monotonicity constraints there is also a reduced number of $\tau$ variables in
the dual. We dualize once more to return to the primal setting again and
essentially obtain a discrete version of Myerson's famous Lemma
\cite[Lem.~3]{Myerson1981a} in LP form:

\begin{align*}
	\max \quad &   \sum_{\vec{v} \in \vec{V}} \sum_{i=1}^{n} a_i( \vec{v}) \varphi_i(k) f(\vec{v}) \tag{\textbf{BLP2}}\label{LP:BLP2} \\ 
	\text{s.t.} \quad&  p_i(k, \vec{k}_{-i}) = v_{i,k} a_i(k, \vec{k}_{-i}) - \sum_{l=1}^{k-1} (v_{i,l+1} - v_{i,l} ) a_i(l, \vec{k}_{-i}),&& \textcolor{blue}{[\rho_i(k, \vec{k}_{-i})]}\\
	&\; \text{for} \; i \in [n], k \in [K_i], \vec{k}_{-i} \in \vec{K}_{-i}, \\
	&\sum_{\vec{v}_{-i}} \bigl[  a_i(k, \vec{k}_{-i}) - a_i(k-1, \vec{k}_{-i})  \bigr] f_{-i}(\vec{v}_{-i}) \geq 0  ,&& \textcolor{blue}{[\tau_i(k,k-1)]}\\
	&\; \text{for} \; i \in [n], k \in [K_i], \vec{k}_{-i} \in \vec{K}_{-i}, \\
	&\sum_{i=1}^{n} a_i( \vec{v}) \leq 1, && \textcolor{blue}{[\psi(\vec{v})]} \\
	&\; \text{for} \; \vec{v} \in \vec{V}.
\end{align*}
Clearly, any feasible solution of \eqref{LP:BLP2} is feasible for
\eqref{LP:BLP1} and as fixing the upward $\lambda$ variables in the dual to zero
can only increase the optimal objective value, optimality transfers from
\eqref{LP:BLP2} to \eqref{LP:BLP1} as well. Furthermore, as BIC truthfulness is
a relaxation of DSIC, any feasible solution of \eqref{LP:LP2} is feasible for
\eqref{LP:BLP2}. Whether optimality transfers as well is not obvious but can be
verified if the pair of optimal primal and dual solutions of \eqref{LP:LP2} and
\eqref{LP:DP2} is feasible for \eqref{LP:BLP2} and \eqref{LP:BDP1}. Therefore,
we will now focus on the dual problems and the $\tau$ variables, to which we
have paid little attention so far.

\subsection{Ironing}
\label{sec:ironing}

In this section we examine the connection between the primal and dual via strong
duality and what insights this can give us about optimal solutions, i.e.,
optimal auctions. The parallels between the dominant and the Bayesian setting
are striking. The primal LPs \eqref{LP:LP2} and \eqref{LP:BLP2} only differ in
the monotonicity constraints. Analogously, in the dual programs the corresponding
$\tau$ variables differ accordingly. 

\paragraph{Thought experiment}\label{page:thought-experiment} Let's temporarily fix all $\tau$ variables to zero such that both settings have the same
dual inequalities. Then the optimal dual solution would clearly be to point-wise
set
$$\psi(\vec{v}) = f(\vec{v}) \max_{i \in [n]} \varphi_i^+ (k),$$ where
$\varphi_i^+ (k)\coloneqq \max\ssets{0,\varphi_i (k)}$ is the non-negative part
of $\varphi_i (k)$. In words, given a bid profile $\vec{v}$, $\psi$ would be set
equal to the highest non-negative virtual value among all players multiplied
with the probability that this bid profile is realized. As all players share the
same $f(\vec{v})$ a winning bidder has to have the highest non-negative virtual
value. More precise, we draw the connection to the primal. By complementary
slackness, we would have the following implications for any optimal solution:

\begin{itemize}
    \item $a_i(k,\vec{k}_{-i}) > 0 \then  \psi(\vec{v}) = f(\vec{v}) \varphi_i
    (k)$, i.e. if a player has positive probability to win, then the dual
    inequality has to be met with equality, and
    \item $\psi(\vec{v}) > 0 \then  \sum_{i=1}^{n} a_i( \vec{v}) = 1$, if at
    least one player has a positive virtual value, the item is allocated with
    full probability, potentially distributed over multiple players.
\end{itemize}

Now we consider a specific bid profile $\vec{v}$ and a player's monotonicity
constraint for the variables $a_i(k,\vec{k}_{-i})$ and $a_i(k+1,\vec{k}_{-i})$.
Assume that $a_i(k,\vec{k}_{-i})>0$, then the same has to hold for
$a_i(k+1,\vec{k}_{-i})$. By complementary slackness for $a_i(k,\vec{k}_{-i})>0$
player $i$ has to be the player with the highest non-negative virtual value in
$\vec{v}$, potentially among others. Now keep all other players fixed but
increase the value of player $i$ to the next value. Although,
$a_i(k+1,\vec{k}_{-i})>0$ implies that player $i$ still has to have the highest
non-negative virtual value, if $\varphi_i (k) > \varphi_i (k+1)$ this can no
longer be ensured in general. Thus, decreasing virtual values can cause a
contradiction and the $\tau$ variables cannot be set to zero in general.

\smallskip

To resolve this problem, we find values for $\tau$ that absorb any decrease of
the virtual values, i.e., \emph{iron out} such intervals. If $\varphi$ and
$\tau$ combined are non-decreasing, we show in
\cref{lemma:optimal_deterministic_auction} that the optimal auction can easily
be found via complementary slackness. Note, that the discussion of this problem
is independent of DSIC or BIC truthfulness. Its solution, i.e. the assigned
values to $\tau$ may differ in the two cases, but we show in
\cref{lemma:ironing_equivalence} that the ironed virtual values are equivalent.
For now, we use the notion of BIC and translate this to the DSIC setting later
on. In the following, we show the existence of these dual variables that ensure
monotonicity as a solution of a system of linear equations. Furthermore, we show
that this choice is unique and, therefore, the ironing is equivalent for the
DSIC and the BIC setting. Unsurprisingly, we obtain the same values as in the
ironing algorithm of~\cite{Cai2019}. 

\begin{lemma}[Ironing]\label{lemma:ironing} Let $\varphi_i (k)$ for $k \in
    [K_i]$ be a player $i$'s virtual values. Then there exist \emph{unique}
    values $\tau \in \R_+^{K_i}$ such that 
    \begin{equation}\label{eq:ironed_virtual_values}
        \tilde\varphi_i(k)\coloneqq\varphi_i (k) +\frac{ \tau_i(k,k-1)}{f_i(k)} -  \frac{\tau_i(k+1,k)}{f_i(k)}
    \end{equation}
    is non-decreasing. We call this monotone sequence \emph{ironed virtual
    values}.
\end{lemma}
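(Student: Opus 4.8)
The plan is to exhibit the vector $\tau$ as the solution of an explicit, combinatorially-described linear system and then verify both the non-negativity and the monotonicity it induces. First I would rewrite the defining relation \eqref{eq:ironed_virtual_values} in cumulative form: multiplying by $f_i(k)$ and summing over $k=1,\dots,m$ gives a telescoping sum on the right, so that $\sum_{k=1}^m f_i(k)\tilde\varphi_i(k) = \sum_{k=1}^m f_i(k)\varphi_i(k) - \tau_i(m+1,m)$ (using the borderline convention $\tau_i(1,0)$ is free and $\tau_i(K_i+1,K_i)=0$). Equivalently, writing $\Phi(m)\coloneqq\sum_{k=1}^m f_i(k)\varphi_i(k)$ for the partial sums of the ``unironed'' weighted virtual values and $\tilde\Phi(m)$ for the corresponding partial sums of the ironed ones, the relation is precisely $\tau_i(m+1,m) = \Phi(m) - \tilde\Phi(m)$ for all $m$, with $\tau_i(K_i,K_i-1)$ determined by $\Phi(K_i)=\tilde\Phi(K_i)$. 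So specifying $\tau$ is the same as specifying the sequence $\tilde\Phi$, subject to three requirements: (a) $\tilde\Phi(0)=0$ and $\tilde\Phi(K_i)=\Phi(K_i)$; (b) the increments $\tilde\varphi_i(k) = (\tilde\Phi(k)-\tilde\Phi(k-1))/f_i(k)$ are non-decreasing in $k$; (c) $\tau_i(m+1,m)=\Phi(m)-\tilde\Phi(m)\ge 0$, i.e. $\tilde\Phi$ lies weakly below $\Phi$ at every point.

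The key step is to recognize that these conditions pin down $\tilde\Phi$ uniquely and constructively: $\tilde\Phi$ is the \emph{lower convex envelope}, in the appropriate reparametrized sense, of the points $\bigl(F_i(m),\Phi(m)\bigr)_{m=0}^{K_i}$. Concretely, condition (b) says the piecewise-linear interpolant of these points, when plotted against the cumulative probability $F_i(\cdot)$ on the horizontal axis, has non-decreasing slopes, i.e.\ is convex; condition (a) fixes its endpoints to agree with $\Phi$; and the largest convex function dominated by the $\Phi$-polygon is exactly its lower convex hull, which automatically satisfies (c). Thus I would \emph{define} $\tilde\Phi$ to be this convex hull and then read off $\tau$ from $\tau_i(m+1,m)=\Phi(m)-\tilde\Phi(m)$; non-negativity is immediate since the convex hull lies below, and $\tilde\varphi_i$ is non-decreasing by convexity. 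For \textbf{uniqueness}, I would argue that any admissible $\tilde\Phi$ is a convex function (by (b)) dominated by $\Phi$ at the breakpoints (by (c)) and agreeing with it at the two endpoints (by (a)); the lower convex hull is the pointwise-largest such function, and any strictly smaller choice at some interior breakpoint $m$ would force, by convexity together with the endpoint constraints, a slope pattern that is not monotone — more carefully, if $\tilde\Phi$ dips strictly below the hull at some point it must cross back up to meet the fixed endpoint value at $K_i$, and the convexity of $\tilde\Phi$ then contradicts it having been below the hull. Hence $\tilde\Phi$, and therefore $\tau$, is forced.

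The main obstacle I anticipate is handling the horizontal axis correctly: the natural convexity statement is with respect to the cumulative-probability parametrization $F_i(\cdot)$, not with respect to the index $k$, because the increments are divided by $f_i(k)$ in \eqref{eq:ironed_virtual_values}; getting this bookkeeping exactly right — especially the borderline terms $\tau_i(1,0)$ and $\tau_i(K_i+1,K_i)$ and the corner value of $\tilde\Phi$ at $m=0$ — is where a careless argument could go wrong. A secondary subtlety is the uniqueness direction: it is tempting to think many choices of $\tau$ could work, and one must use the conjunction of all three constraints (a)–(c), not just monotonicity, to rule them out; the endpoint constraint $\tilde\Phi(K_i)=\Phi(K_i)$ is doing essential work here and must not be dropped. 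Once the convex-hull characterization is in place, the remaining verifications are routine, and I would close by noting that this construction reproduces the ironing of \textcite{Cai2019}, as claimed in the text preceding the lemma.
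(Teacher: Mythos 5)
Your construction is essentially the paper's: the paper also forms the cumulative points $(F_k,S_k)$ with $S_k=\sum_{j\le k}\varphi_i(j)f_i(j)$, interpolates piecewise linearly against the cumulative probability, and irons exactly on the intervals where the lower convex hull lies strictly below, the ironed value being the hull's slope (the local average virtual value). Your closed-form $\tau_i(m+1,m)=\Phi(m)-\tilde\Phi(m)$ with $\tilde\Phi$ the hull is consistent with the paper's per-interval linear system, and for existence, monotonicity and non-negativity it is arguably cleaner than the paper's appeal to strong duality for $\tau\ge 0$.

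The genuine gap is in your uniqueness step. Conditions (a)--(c) do not force $\tilde\Phi$ to be the convex hull: a convex function that lies weakly below $\Phi$ at the breakpoints and agrees with it at the two global endpoints can perfectly well dip strictly below the hull and come back up --- convexity bounds a function above by its chords, it does not prevent such a dip, so the contradiction you invoke (``it must cross back up \dots and convexity then contradicts it having been below the hull'') never materializes. Concretely, take $K_i=2$, $f_i(1)=f_i(2)=1/2$, $\varphi_i(1)=0$, $\varphi_i(2)=1$ (realized, e.g., by the uniform prior on $\{1/2,1\}$): the cumulative points $(0,0),(1/2,0),(1,1/2)$ are already convex, so the hull gives $\tau=0$; but $\tau_i(1,0)=0$, $\tau_i(2,1)=1$ gives $\tilde\varphi_i=(-2,3)$, which is non-decreasing with $\tau\ge 0$, and in your picture corresponds to the convex $\tilde\Phi$ through $(0,0),(1/2,-1),(1,1/2)$ --- below $\Phi$, same endpoints, yet not the hull. (Your telescoping also silently sets $\tau_i(1,0)=0$; since it is a free non-negative component of $\tau\in\R_+^{K_i}$, it adds even more freedom.) What actually pins the ironing down --- and what the paper's proof uses, leaving it somewhat implicit in the statement --- is the further stipulation that the ironed values are \emph{unchanged} wherever the hull touches $\Phi$ and \emph{constant} on each hull-determined interval; equivalently, $\tilde\Phi$ must coincide with $\Phi$ at every contact point of the hull, not only at the two global endpoints. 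Under that requirement the paper obtains uniqueness by solving, on each ironed interval $[l,r]$, a linear system with nonzero determinant $\bigl(\sum_{j=l}^r f(j)\bigr)/\bigl(\prod_{j=l}^r f(j)\bigr)$, forcing the constant to be the average virtual value. Your argument needs this extra condition to be stated and used; as written, the uniqueness part fails.
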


\begin{proof}
    Abusing notation for the rest of the proof we fix player $i$ and drop
    the index. This is because the ironing of \eqref{eq:ironed_virtual_values}
    is independent of all other players.
    
     To check whether the sequence $\varphi(k)$ is monotone, we construct a
     piecewise linear function whose derivative assumes the values $\varphi(k)$.
     Let $S_0 \coloneqq 0$ and $S_k \coloneqq \sum_{j=1}^k \varphi(j)f(j)$ for
     $k\in [K]$, as well as $F_0 \coloneqq 0$ and $F_k \coloneqq F(k) =
     \sum_{j=1}^k f(j)$ for $k\in [K]$. We then simply connect all $(F_{k},S_k)$
     to construct the piecewise linear function $S(x): [0,1]\map \R$, and obtain
     $\varphi(k) = \frac{S_k - S_{k-1}}{F_k - F_{k-1}}$, i.e. the slope of
     $S(x)$ for $x\in (F_{k-1},F_k)$. If this function is convex, $\varphi(k)$
     is non-decreasing and nothing needs to be done, i.e. all $\tau(k,k-1)$ can
     be set to zero. Otherwise, we construct the convex hull $H$ of this
     function. If the convex hull connects two points $(F_{l-1},S_{l-1})$ and
     $(F_{r},S_r)$ with $l<r$ and $H(x)<S(x)$ for all $x\in(F_{l-1},F_r)$, the
     interval $[F_{l},F_r]$ has to be ironed. For that we need to choose the
     $\tau$ variables such that $\tilde{\varphi}(k)$ is constant for $k \in
     [l,r]$ and leave all other virtual values unchanged. This means that, for
     all $k \in [l,r]$, it has to hold
    $$\varphi(k) +\frac{ \tau(k,k-1)}{f(k)} -  \frac{\tau(k+1,k)}{f(k)} = c$$
    for some constant $c\in\R$. This constant as well as the values of $\tau$
    can be computed via the system of linear equations
    $$\begin{pmatrix}
    1       & \frac{1}{f({l})}         & 0                      &\cdots &0\\
    \vdots  & -\frac{1}{f({l+1})}      & \frac{1}{f({l+1})}   &\ddots &\vdots\\
    \vdots  &0                           & \ddots                 &\ddots &0\\
    \vdots  &\vdots                      & \ddots    &-\frac{1}{f({r-1})}    &\frac{1}{f({r-1})}\\
    1       & 0                          & \cdots    &0      &-\frac{1}{f({r})}
    \end{pmatrix} 
    \begin{pmatrix}
    c\\
    \tau({l+1},l)\\
    \vdots\\
    \tau({r},{r-1})
    \end{pmatrix} =
    \begin{pmatrix}
    \varphi ({l}) \\
    \vdots\\
    \vdots\\
    \vdots\\
    \varphi ({r}) \end{pmatrix}.$$ Note, that as we have to leave all other
    virtual values unchanged, we cannot choose $\tau(l,l-1)$ or $\tau(r+1,r)$
    non-zero. Furthermore, if a variable $\tau(k,k-1)$ for some $k\in [l+1,r]$
    would assume the value zero, this variable separates two adjacent intervals
    $[l,k-1]$ and $[k,r]$, and these have to be ironed individually. Thus, all
    $\tau$ variables can be assumed to be positive. The first observation is
    that the square matrix has absolute determinant value $\frac{\sum _{j=l}^r
    f(j)}{\prod _{j=l}^r f(j)}$. Therefore, the solution of the system is always
    unique. Hence, there is only one option on how to choose the $\tau$
    variables in order to obtain constant $\tilde{\varphi}(k)$ for $k\in [l,r]$.
    We can also compute the value of $c$ by Cramer's rule and obtain
    $c=\frac{\sum_{j=l}^r \varphi(j) f(j)}{\sum_{j=l}^r f(j)}$, the average
    virtual value. This is the same as $c=\frac{S_r-S_{l-1}}{F_r-F_{l-1}}$, i.e.
    the slope of the convex hull within the ironed interval. The average virtual
    value clearly yields the same expected virtual welfare objective function
    that could have been realized without the ironing, thus, the ironing does
    not affect optimality.

    Lastly, we have to ensure that the unique values of $\tau$ over an ironed
    interval are indeed positive. This follows directly by strong duality as
    this gives us the existence of a dual solution with non-negative $\tau$, the
    system of linear equations then additionally yields their uniqueness.
\end{proof}

As the virtual values are identical under DSIC and BIC truthfulness, their convex
hulls are the same as well. Hence, we can show that by the uniqueness of $\tau$
in \cref{lemma:ironing} the ironing has to be equivalent in both settings.

\begin{lemma}\label{lemma:ironing_equivalence} The ironed virtual values under
    DSIC truthfulness are equivalent to the BIC ironed virtual values.
    Furthermore, as the choice of $\tau$ is unique in both cases, there is a one
    to one identification between the DSIC and BIC $\tau$ variables.
\end{lemma}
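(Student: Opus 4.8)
The plan is to exploit the two structural facts already on the table: the virtual value sequences $\varphi_i(1),\dots,\varphi_i(K_i)$ are \emph{literally identical} in the DSIC and the Bayesian formulations (compare the bracket in \eqref{eq:virtual_values_appear} with \eqref{eq:BIC_virtual_values}), and the only difference between the two dual systems is that the per-profile multipliers $\tau_i(k,k-1,\vec k_{-i})$ of \eqref{LP:DP2} are replaced by their aggregated versions $\tau_i(k,k-1)$ in the Bayesian dual. First I would set up the DSIC analogue of \cref{lemma:ironing}: in the DSIC dual constraint \eqref{eq:virtual_values_appear} the expression attached to $a_i(k,\vec k_{-i})$ is $\varphi_i(k)+\tau_i(k,k-1,\vec k_{-i})/f(\vec v)-\tau_i(k+1,k,\vec k_{-i})/f(\vec v)$ with $f(\vec v)=f_i(k)f(\vec v_{-i})$, both $\tau$-terms carrying the \emph{same} denominator. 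Substituting $\tau_i(k,k-1,\vec k_{-i})=f(\vec v_{-i})\,\sigma_i(k,k-1,\vec k_{-i})$ (legitimate since $f(\vec v_{-i})>0$) turns this bracket into $\varphi_i(k)+\sigma_i(k,k-1,\vec k_{-i})/f_i(k)-\sigma_i(k+1,k,\vec k_{-i})/f_i(k)$, which is exactly the shape of the Bayesian ironed virtual value \eqref{eq:ironed_virtual_values}. Since the DSIC monotonicity constraints decouple across $\vec k_{-i}$, for each fixed $\vec k_{-i}$ choosing $\sigma_i(\cdot,\cdot,\vec k_{-i})$ so that this bracket is non-decreasing in $k$ is an independent copy of the single-player problem solved in the proof of \cref{lemma:ironing}, with no residual dependence on $\vec v_{-i}$.

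Next I would invoke uniqueness. By \cref{lemma:ironing} there is a unique $\tau^{\mathrm B}_i\in\R_+^{K_i}$ making \eqref{eq:ironed_virtual_values} non-decreasing; moreover the ironed intervals are read off the convex hull of the piecewise-linear function $S$ built in the proof of \cref{lemma:ironing} from $(\varphi_i(k))_k$ and $f_i$, which is the \emph{same} object in both settings because $\varphi_i$ and $f_i$ are. Expressed in the rescaled variables $\sigma_i(\cdot,\cdot,\vec k_{-i})$, the per-profile DSIC ironing subproblem is verbatim the one in the proof of \cref{lemma:ironing}, so its unique non-negative solution is $\sigma_i(k,k-1,\vec k_{-i})=\tau^{\mathrm B}_i(k,k-1)$ for every $\vec k_{-i}$; equivalently $\tau^{\mathrm{DSIC}}_i(k,k-1,\vec k_{-i})=f(\vec v_{-i})\,\tau^{\mathrm B}_i(k,k-1)$. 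This bijective scaling is the claimed one-to-one identification of the $\tau$ variables, and plugging it back into the DSIC bracket recovers precisely $\tilde\varphi_i(k)$ of \eqref{eq:ironed_virtual_values}, so the DSIC ironed virtual values are independent of $\vec k_{-i}$ and coincide with the Bayesian ones.

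The only genuinely delicate point is the bookkeeping of normalizing factors: inside the $a_i(k,\vec k_{-i})$-constraint of \eqref{LP:DP2} both $\tau_i(k,k-1,\vec k_{-i})$ and $\tau_i(k+1,k,\vec k_{-i})$ are divided by the \emph{same} quantity $f(\vec v)=f_i(k)f(\vec v_{-i})$, mirroring the fact that in \eqref{eq:ironed_virtual_values} both $\tau_i(k,k-1)$ and $\tau_i(k+1,k)$ are divided by $f_i(k)$; one must keep this straight so that the common factor $f(\vec v_{-i})$ pulls cleanly out of the whole subproblem. Everything else transfers for free: existence of a non-negative DSIC $\tau$ follows at once from the existence of $\tau^{\mathrm B}_i$ together with $f(\vec v_{-i})>0$ (or, directly, from strong duality for \eqref{LP:LP2}/\eqref{LP:DP2}, as in \cref{lemma:ironing}), and uniqueness is then delivered by the rescaled linear system.
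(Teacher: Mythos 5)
Your proposal is correct and follows essentially the same route as the paper: both arguments rest on the facts that the virtual values (and hence the convex hull driving the ironing) coincide in the two settings, and that the uniqueness from \cref{lemma:ironing} forces the per-profile DSIC multipliers to be the BIC ones rescaled by $f(\vec{k}_{-i})$, which is exactly the identification the paper records in \eqref{eq:unique_tau}. Your explicit substitution $\tau_i(\cdot,\cdot,\vec{k}_{-i})=f(\vec{v}_{-i})\,\sigma_i(\cdot,\cdot,\vec{k}_{-i})$ merely spells out this scaling a bit more carefully than the paper does.
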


\begin{proof}
    We start with the crucial observation that, as the virtual values are the
    same in both settings, the $\tau$ variables in the DSIC ironing have to
    construct the exact same convex hull. Therefore, for any player $i$, the following equality
    must hold for all $k\in [K_i]$:
    $$\varphi_i (k) +\frac{ \tau_i(k,k-1)}{f_i(k)} -
    \frac{\tau_i(k+1,k)}{f_i(k)} = \varphi_i (k)
    +\frac{\tau_i(k,k-1,\vec{k}_{-i})}{f(\vec{v})} -
    \frac{\tau_i(k+1,k,\vec{k}_{-i})}{f(\vec{v})}.$$ \cref{lemma:ironing} gives
    us the uniqueness of the $\tau$ variables for the BIC ironing. To obtain the
    equality of the ironed values we have to define uniquely
    \begin{equation}\label{eq:unique_tau}
        \tau_i(k+1,k,\vec{k}_{-i}) \coloneqq f(\vec{k}_{-i}) \tau_i(k,k-1) \quad \text{or} \quad \tau_i(k,k-1) \coloneqq \frac{\tau_i(k+1,k,\vec{k}_{-i})}{f(\vec{k}_{-i})} .
    \end{equation}
    Note, that the latter is still valid since all
    $\frac{\tau_i(k+1,k,\vec{k}_{-i})}{f(\vec{k}_{-i})}$ have to be the same
    regardless of $\vec{v}_{-i}.$ The uniqueness follows by the non-negativity
    of the $\tau$ variables.
\end{proof}

\cref{lemma:ironing_equivalence} establishes an equivalence involving the $\tau$
variables in the ironing procedure. These are the only variables that differ
between the DSIC and BIC truthfulness settings. This allows us to connect the
two perspectives, proving Point~\ref{item:optimal-single_item-BICvsDIC}
of\cref{th:discrete-optimal-myerson}.

\begin{lemma}[DSIC Optimality]\label{lemma:DSIC_without_loss} Let
	$(\vec{a},\vec{p})$ be an optimal DSIC auction, i.e. an optimal solution of
	\eqref{LP:LP2}. Then $(\vec{a},\vec{p})$ is an optimal BIC auction, i.e.
	optimal for \eqref{LP:BLP2}.
\end{lemma}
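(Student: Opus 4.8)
The plan is to transport an optimal primal--dual pair of the dominant-strategy programs \eqref{LP:LP2}/\eqref{LP:DP2} into a feasible primal--dual pair of the Bayesian programs \eqref{LP:BLP2}/\eqref{LP:BDP1} of the \emph{same} objective value, with the given $(\vec a,\vec p)$ as its primal component; weak LP duality will then force $(\vec a,\vec p)$ to be optimal for \eqref{LP:BLP2}.

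First I would handle the primal side, which is immediate. As $(\vec a,\vec p)$ is feasible for \eqref{LP:LP2}, its pointwise monotonicity $a_i(k,\vec k_{-i})\ge a_i(k-1,\vec k_{-i})$ implies a fortiori the averaged monotonicity constraints of \eqref{LP:BLP2}, while the payment identity \eqref{eq:first_payment_rule} and the feasibility constraints are literally the same in both programs; hence $(\vec a,\vec p)$ is feasible for \eqref{LP:BLP2}. Since the two programs carry the identical virtual-welfare objective, the value of $(\vec a,\vec p)$ there equals its value in \eqref{LP:LP2}, which by \cref{lemma:DSIC_opt} is the optimal DSIC revenue $R^\star$. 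Thus the optimal value of \eqref{LP:BLP2} is at least $R^\star$, and it remains to prove the converse inequality.

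Next I would build the Bayesian dual. By strong duality the optimal value of \eqref{LP:DP2} is $R^\star$; and, as the ironing analysis of \cref{lemma:ironing,lemma:ironing_equivalence} shows, we may select an optimal dual solution of \eqref{LP:DP2} whose $\tau$-variables are the (separable, non-negative) ironing variables, so that the reformulated constraint \eqref{eq:virtual_values_appear} collapses to $\psi(\vec v)\ge f(\vec v)\,\tilde\varphi_i(k)$ for all $i,k,\vec k_{-i}$, with $\sum_{\vec v}\psi(\vec v)=R^\star$. Now define a candidate solution of \eqref{LP:BDP1}: keep the same $\psi$; set all upward $\lambda$ to $0$ and the downward ones to the unique values forced by the equality constraints of \eqref{LP:BDP1} (namely $\lambda_i(k,k-1)=1-F_i(v_{i,k-1})$, since $\bigl(1-F_i(v_{i,k-1})\bigr)-\bigl(1-F_i(v_{i,k})\bigr)=f_i(v_{i,k})$); and take the $\tau$-variables to be the ironing variables of the \emph{same} virtual values $\varphi_i$, which by \cref{lemma:ironing_equivalence} yield the same ironed sequence $\tilde\varphi_i$, so that the reformulated constraint \eqref{eq:BIC_virtual_values} likewise reads $\psi(\vec v)\ge f(\vec v)\,\tilde\varphi_i(k)$. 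All variables are non-negative ($0\le 1-F_i\le 1$; $\tau\ge 0$ by \cref{lemma:ironing}; $\psi\ge 0$ as the multiplier of an inequality), this last inequality is exactly the one already verified for $\psi$, and the dual objective is $\sum_{\vec v}\psi(\vec v)=R^\star$.

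Finally I would close the loop. Any feasible solution of \eqref{LP:BLP2} is feasible for \eqref{LP:BLP1}, so the optimal value of \eqref{LP:BLP2} is at most that of \eqref{LP:BLP1}; and weak LP duality between \eqref{LP:BLP1} and \eqref{LP:BDP1} bounds the latter by the objective of the dual solution just constructed, i.e.\ by $R^\star$. Hence the optimal value of \eqref{LP:BLP2} equals $R^\star$ and is attained by $(\vec a,\vec p)$, which is therefore an optimal BIC auction --- proving Point~\ref{item:optimal-single_item-BICvsDIC} of \cref{th:discrete-optimal-myerson}. The only genuinely delicate step is the bookkeeping that matches the profile-indexed dual variables of \eqref{LP:DP2} with the own-value-indexed ones of \eqref{LP:BDP1}; this becomes painless exactly because \cref{lemma:ironing_equivalence} forces the ironed virtual values to coincide across the two truthfulness notions, so the decisive dual inequality is the very same $\psi(\vec v)\ge f(\vec v)\,\tilde\varphi_i(k)$ in both --- everything else (the form of the $\lambda$'s, non-negativity, the objective count) is routine.
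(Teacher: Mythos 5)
Your proposal is correct and follows essentially the same route as the paper: it transfers the DSIC dual to a BIC dual by keeping $\psi$, setting all upward $\lambda$ to zero (forcing $\lambda_i(k,k-1)=1-F_i(v_{i,k-1})$), and invoking the ironing uniqueness/equivalence (\cref{lemma:ironing,lemma:ironing_equivalence}) so that both dual systems reduce to $\psi(\vec v)\geq f(\vec v)\,\tilde\varphi_i(k)$. The only difference is the final certificate --- you match primal and dual objective values via weak duality through \eqref{LP:BLP1}, whereas the paper verifies complementary slackness --- which are equivalent LP optimality certificates, so no gap.
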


\begin{proof}
    Let $(\vec{a},\vec{p})$ be an optimal solution of \eqref{LP:LP2}. Then there
    exists a corresponding dual solution $(\lambda,\psi,\tau)$ with values for
    $\tau$ such that the ironed virtual values in the dual are non-decreasing
    and $\lambda$ is fixed by setting all upward variables equal to zero. As the
    BIC truthfulness constraints are a relaxation of DSIC truthfulness,
    $(\vec{a},\vec{p})$ clearly is a BIC auction (see \cref{sec:truthfulness}),
    i.e., feasible for \eqref{LP:BLP2}. 
    
    If we find dual variables that are feasible for~\eqref{LP:BDP1} and satisfy
    complementary slackness, we prove that $(\vec{a},\vec{p})$ is also an optimal
    BIC auction. This can be done straightforward: The dual variables $\psi$ are
    chosen to be the same, the $\tau$ variables according to
    \eqref{eq:unique_tau} and the $\lambda$ variables are uniquely determined by
    setting all upward $\lambda$ equal to zero as well. Since all dual
    constraints reduce to being exactly the same, the complementary slackness
    immediately holds.
\end{proof}

Without loss of generality, by \cref{lemma:determinism_TU} we can assume, that in
an optimal DSIC auction $\vec{a}$ is binary since any extreme point of the set
of feasible solutions for \eqref{LP:LP2} is integer in the allocation
components. By that, any fractional optimal solution is only a convex
combination of such extreme points. The same transfers to the BIC setting: If
there are multiple integer solution optimal DSIC auctions, each of them is also
an optimal BIC auction and so is any convex combination.

\begin{lemma}[Optimality of
    Determinism]\label{lemma:optimal_deterministic_auction} The set of optimal
    BIC auctions always contains a deterministic auction, and it can be computed
    by solving the linear program \eqref{LP:LP2}.
\end{lemma}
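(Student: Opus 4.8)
The plan is to chain together the results already established. First I would invoke \cref{lemma:DSIC_opt} to reduce the problem of finding an optimal DSIC auction to solving \eqref{LP:LP2}. Then, by \cref{lemma:determinism_TU}, the polyhedron of feasible allocations for \eqref{LP:LP2} has only integral vertices, which are necessarily binary since the feasibility constraint $\sum_i a_i(\vec v)\le 1$ together with non-negativity forces each coordinate into $\{0,1\}$. Since \eqref{LP:LP2} is a linear program with a bounded feasible region (allocations live in $\mathcal S_n$ on each profile), its optimal value is attained at a vertex; hence there is an optimal solution of \eqref{LP:LP2} with binary allocation rule. Computing the induced payments via \eqref{eq:first_payment_rule} then yields a deterministic auction that is an optimal DSIC auction.

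The second ingredient is to transfer this from the DSIC world to the BIC world. Here I would apply \cref{lemma:DSIC_without_loss}: any optimal DSIC auction is also an optimal BIC auction. In particular the deterministic optimal DSIC auction constructed above is an optimal BIC auction. This establishes that the set of optimal BIC auctions contains a deterministic one, and that it is obtained algorithmically by solving \eqref{LP:LP2} (taking a vertex optimum) and reading off payments from \eqref{eq:first_payment_rule}. Strictly speaking one should note that \eqref{LP:LP2} and \eqref{LP:BLP2} need not have the same set of optima, but that is irrelevant: the claim is only that \emph{some} optimal BIC auction is deterministic, and the one produced by \eqref{LP:LP2} works.

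The main subtlety — really the only thing that needs care — is the passage from ``vertex of the allocation polyhedron'' to ``optimal solution of \eqref{LP:LP2}''. The program \eqref{LP:LP2} has the payments $p_i$ as additional free variables, but these are pinned down by the equality constraints \eqref{eq:first_payment_rule} as linear functions of the allocations, so the feasible region projects bijectively onto the allocation polyhedron, and optimizing over it is equivalent to optimizing the virtual-welfare objective $\sum_{\vec v}\sum_i a_i(\vec v)\varphi_i(k)f(\vec v)$ over that polyhedron. Since this objective is linear and the polyhedron is a bounded (hence a polytope, being a subset of $\prod_{\vec v}\mathcal S_n$), an optimum is attained at a vertex, which by \cref{lemma:determinism_TU} is binary. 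I do not anticipate any genuine obstacle here; the lemma is essentially a bookkeeping corollary of \cref{lemma:DSIC_opt}, \cref{lemma:determinism_TU}, and \cref{lemma:DSIC_without_loss}, and I would keep the proof to a few lines stating exactly this chain.
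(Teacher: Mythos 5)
Your proposal is correct and follows essentially the same route as the paper: the lemma is obtained there by combining \cref{lemma:DSIC_opt} (optimal DSIC auctions are exactly the optima of \eqref{LP:LP2}), \cref{lemma:determinism_TU} (integral, hence binary, vertices of the allocation polytope, at which the linear virtual-welfare objective attains its optimum), and \cref{lemma:DSIC_without_loss} (any optimal DSIC auction is an optimal BIC auction). Your extra remark on projecting out the payment variables via \eqref{eq:first_payment_rule} is a correct piece of bookkeeping that the paper leaves implicit.
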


Beyond formally ensuring the existence of a deterministic optimal solution, we
want to derive the explicit auction when we are given a bid profile $\vec{v}\in
\vec{V}$. By the complementary slackness condition 
$$a_i(k,\vec{k}_{-i}) > 0 \quad\then\quad  \psi(\vec{v}) = f(\vec{v})
\tilde{\varphi}_i (k) = f(\vec{v}) \max_{i \in [n]} \tilde{\varphi}_i (k),$$ and
the existence of an integral solution, we essentially have shown
Point~\ref{item:optimal-myerson-discrete-explicit} of
\cref{th:discrete-optimal-myerson}, i.e., receiving a bid profile $\vec{v} \in
\vec{V}$ the item is allocated to the highest non-negative ironed virtual
bidder. The corresponding payments are computed
via~\eqref{eq:first_payment_rule} which by determinism reduces to the critical
bid, i.e., the threshold value of such that the player still wins.
\begin{remark}
    \label{remark:monotonicity-free-myerson}
    Although, we can describe the optimal single-item auction clear and explicit
    via complementary slackness, an interesting insight is worth to be
    mentioned: By the ironing, i.e. the monotone $\tilde{\varphi}$, in
    combination with a deterministic tie-breaking rule we get the monotonicity
    of the primal program \emph{for free}. This is because a player with
    non-decreasing ironed virtual values can only be allocated more when
    increasing the own value while all other players stay the same, as long as
    ties are broken consistently. We can then state the optimal single-item
    auction problem as finding an allocation from
    \begin{align*}
	\max \quad &   \sum_{\vec{v} \in \vec{V}} \sum_{i=1}^{n} a_i( \vec{v}) \tilde{\varphi}_i(k) f(\vec{v}) \\ 
	\text{s.t.} \quad&  \sum_{i=1}^{n} a_i( \vec{v}) \leq 1, \\
	&\; \text{for} \; \vec{v} \in \vec{V}.
    \end{align*}
    and only have to ensure the deterministic tie-breaking rule and that the
    payments are computed via the formula \eqref{eq:first_payment_rule}.  
\end{remark}

\section{General Single-Parameter Auction Design: a KKT Approach}
\label{sec:KKT}

In general, single-parameter auctions go far beyond the single-item case. In
this section we generalize our formulation from the previous section and present
a framework for a wider range of feasibility spaces. In fact, the specialization
on the single-item setting emerges solely from the feasibility
constraints~\eqref{eq:feasibility_unity-simplex}. In a more general
single-parameter setting, we want to relax feasibility while still holding on to
truthfulness, i.e., that the players have no incentive to misreport their true
values. We maintain the linearity of the truthfulness constraints that arises
from the definition of a player's utility~\eqref{eq:utility_ex-post_definition},
which is natural for the single-parameter auction design. Our framework which
unites the techniques from the single-item setting, i.e., the duality approach
connected by complementary slackness, is a KKT system formulation
\cite{kkt1951}. 

Again we summarize our results of this section in a main theorem:

\begin{mdframed}[backgroundcolor=gray!30]
    \vspace{-8pt}
\begin{theorem}[Optimal Single-Parameter Auction]
    \label{th:discrete-optimal-general}
    For any discrete convex single-parame- ter auction setting, under the
    objective of maximizing a linear combination of revenue and social welfare
    (see~\ref{OP:general_model}), the following hold:
    \begin{enumerate}
        \item\label{item:optimal-single_parameter-general-integrality} If our
        setting is TDI, then there exists an optimal auction which is
        integral.\footnotemark 
        \item \label{item:optimal-single_parameter-general-BICvsDIC} \emph{Any}
        optimal DSIC auction is an optimal BIC auction.
        \item\label{item:optimal-single_parameter-general-explicit} The
        following DSIC auction is optimal (even within the class of BIC
        auctions):
        \begin{itemize}
        \item Pointwise choose an allocation that maximizes the generalized
        ironed virtual welfare~\eqref{OP:general_virtual_welfare_maximization},
        breaking ties arbitrarily.
        \item Collect from the allocated bidders a payment equal to their
        critical bids~\eqref{eq:payment_rule_matrix}.
        \end{itemize}
    \end{enumerate}
\end{theorem}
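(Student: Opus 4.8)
The plan is to generalize the LP-duality chain of \cref{sec:Myerson} to a KKT-duality chain, replacing the single-item feasibility constraints~\eqref{eq:feasibility_unity-simplex} by the general convex constraints defining $\mathcal A$, and replacing the pure revenue objective by the linear combination of revenue and social welfare. Concretely, I would first write the optimization problem~\ref{OP:general_model}: maximize the weighted objective subject to (i) the local \eqref{eq:DSIC-def} constraints~\eqref{eq:local-IC-down}--\eqref{eq:local-IC-up} (via \cref{lemma:local-DSIC}), (ii) the monotonicity constraints~\eqref{eq:local-IC-monotonicity} that they imply, and (iii) the pointwise feasibility $\vec a(\vec v)\in\mathcal A$. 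Since the objective is linear and the constraints are convex, this is a convex program with linear objective, so the KKT conditions are necessary and sufficient for optimality, provided a suitable constraint qualification holds; I would note that the affine truthfulness/monotonicity constraints never violate a qualification, and for the convex part one argues feasibility of the trivial ``no participation'' auction gives a Slater-type point, or falls back on the fact that all active constraints can be taken affine after linearizing at the optimum.

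The core of the argument is then the three points of \cref{th:discrete-optimal-general}, handled exactly as in \cref{sec:Myerson}. For Point~\ref{item:optimal-single_parameter-general-explicit}: stationarity of the Lagrangian eliminates the payment variables (they appear linearly and are unconstrained), forcing the downward local constraints to bind and yielding \emph{generalized virtual values} as coefficients of the allocation variables — this is the analogue of the passage from~\eqref{LP:DP1} to~\eqref{LP:LP2} — together with Myerson's payment formula~\eqref{eq:payment_rule_matrix} as a byproduct of the substitution. The $\tau$-multipliers on the monotonicity constraints must then be chosen to iron the virtual values into a monotone (``ironed'') sequence, reusing \cref{lemma:ironing} verbatim since ironing is per-bidder and independent of $\mathcal A$; complementary slackness on the feasibility multipliers then forces any optimal allocation to pointwise maximize the generalized ironed virtual welfare~\eqref{OP:general_virtual_welfare_maximization}. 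For Point~\ref{item:optimal-single_parameter-general-BICvsDIC}, the proof mirrors \cref{lemma:DSIC_without_loss,lemma:ironing_equivalence}: the BIC relaxation has the same (expected) virtual values, only fewer monotonicity constraints, so the unique ironing $\tau$'s of the DSIC problem can be scaled into feasible BIC multipliers satisfying complementary slackness, hence a DSIC optimum is a BIC optimum. For Point~\ref{item:optimal-single_parameter-general-integrality}, I would replace the ad hoc total-unimodularity determinant computation of \cref{lemma:determinism_TU} with the hypothesis that the feasibility system is TDI: after ironing gives monotonicity ``for free'' (\cref{remark:monotonicity-free-myerson}), the optimal allocation is a vertex of $\{\vec a : A\vec a \le \vec b\}$ sliced by the virtual-welfare objective, and TDI of $A\vec b$ guarantees an integral optimal vertex for the integral right-hand side, exactly as integrality of the polyhedron followed from total unimodularity before.

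The main obstacle I anticipate is \textbf{not} the mechanical KKT bookkeeping but two subtler points. First, the constraint qualification / strong duality step: in \cref{sec:Myerson} strong LP duality was automatic, but for a general convex $\mathcal A$ one must be careful that KKT is both necessary \emph{and} sufficient — I expect this is handled by observing the objective is linear and $\mathcal A$ is a (closed) convex set described by constraints to which one can apply a standard qualification, or by working with the support function / normal cone of $\mathcal A$ directly rather than an explicit constraint representation. Second, the interaction between ironing and the feasibility constraints: in the single-item case \cref{remark:monotonicity-free-myerson} showed monotonicity comes ``for free'' from ironed monotone virtual values plus consistent tie-breaking, because the simplex is a ``downward-friendly'' polytope; for general $\mathcal A$ one has to verify that a pointwise generalized-ironed-virtual-welfare maximizer can still be made monotone (and, under TDI, integral) simultaneously — this is where I would expect the argument to need the extra structural hypotheses and the most care, and it is presumably why the paper introduces the TDI condition precisely here rather than the weaker total unimodularity of a single matrix.
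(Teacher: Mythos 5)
Your proposal follows essentially the same route as the paper: formulate the convex program \eqref{OP:general_model}, use the KKT system, force the downward constraints to bind (via the strictly positive downward $\lambda$'s of \cref{lemma:lambda_positive}) to recover the payment rule \eqref{eq:payment_rule_matrix} and generalized virtual values, reuse the ironing of \cref{lemma:ironing}, transfer optimality from DSIC to BIC through the unique $\tau$-correspondence, and invoke TDI (with integral right-hand side) for integrality after the monotonicity constraints are dropped. The only small deltas are that the paper simply asserts KKT equivalence from convexity rather than discussing a constraint qualification, and that the payment-determination step formally needs $\alpha>0$, which you leave implicit — neither changes the substance of the argument.
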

\end{mdframed}
\footnotetext{Recall the definition of an integral auction
from~\cref{sec:model-notation}, Page~\pageref{page:integral-auction}. The
definition of a totally dual integral (single-parameter) auction setting can be
found in~\cref{def:TU_auction_setting}.}

The framework we will present in \cref{sec:KKT-model} allows us to assume that
any feasible solution of the KKT system is also an optimal solution. Within this
rather abstract formulation, we are free to leave the ambiguity whether to
interpret the truthfulness constraint as DSIC or BIC. This not only reveals the
strong similarity of the two interpretations, but also allows us great clearness
when investigating their connection for
Point~\ref{item:optimal-single_parameter-general-BICvsDIC}. Motivated by this, in
\cref{sec:integral-auctions-KKT} we establish a setting where we can guarantee
that the optimal auction is integral and randomization or fractional allocation
is not necessary. Even in the very general case of \cref{sec:KKT},
Point~\ref{item:optimal-single_parameter-general-explicit} gives a description
of the optimal auction. We not only are able to maintain the transition to
welfare maximization (see \cref{sec:general-virtual-welfare-maximization}), but
also derive the identical payment rule as in the single-item setting. Although,
complementary slackness cannot guarantee such a clear optimal auction as in
\cref{th:discrete-optimal-myerson} in \cref{sec:application} we present an
application to show that even in a general case with combinatorial feasibility
constraints the auction can be described nicely.
 
\subsection{Notation}
\label{sec:notation-KKT}

For the general model formulation, we want to use a notation that provides
simplicity while at the same time allows modelling very general settings. Still,
we frequently draw the connection to the single-item LP formulation such that
the reader can always recall this as a special case. In the following we will
use a unified notation: In both settings of truthfulness, DSIC and BIC, each
allocation and payment variable represents an outcome per given bid profile
$\vec{v} \in \vec{V}$ and per player $i \in [n].$ We write the allocations
$\vec{a}$ and payments $\vec{p}$ as vectors of dimension $\mathcal{N} \coloneqq
n |\vec{V}| = n \cdot K_1 \cdots K_n$. One entry is a single variable, e.g.,
$a_i(k,\vec{k}_{-i})$. We further define $\vec{f}$ as a vector of the same
dimension. Each entry is the probability that a specific bid profile $\vec{v}$
is realized, i.e., $f(\vec{v})$ corresponding to the respective allocation or
payment variables $a_i(k,\vec{k}_{-i})$ or $p_i(k,\vec{k}_{-i})$ for all players
$i\in [n]$. To remain accurate with the dimensions of the objects that represent
social and virtual welfare, we also define $\vec{\nu}$ as the quadratic
$\mathcal{N} \times \mathcal{N}$ matrix with all values and similarly
$\vec{\varphi}$ with all virtual values corresponding to player $i$'s value of
the respective allocation on the diagonal and zero elsewhere. 

\paragraph{Objective function}
With this notation, we formulate the generalized objective as a linear combination of
expected revenue and expected social welfare. Equivalently, up to scaling\footnote{A linear and a convex combination are not equivalent in terms of the objective value but in terms of the optimal solution.}, we can write the objective as a convex combination of those as
$$ \alpha \; \revenue(M) + (1-\alpha) \; \welfare(M) = \alpha \vec{f}^{\top} \vec{p}
+ (1 - \alpha) \vec{f}^{\top} \vec{\nu} \vec{a},$$ with $\vec{f} ,
\vec{a}\in\R_+^{\mathcal{N}}, \vec{p}\in\R^{\mathcal{N}}, \vec{\nu} \in
\R_+^{\mathcal{N} \times \mathcal{N}}$ and $\alpha \in \R_+$.

\paragraph{Truthfulness}
Despite the more general feasibility space, the locality of the linear
truthfulness constraints, as in \eqref{LP:LP1} and \eqref{LP:BLP1}, is maintained.
They can be expressed by matrix vector notation: Matrix $A$ contains the
coefficients of the allocation variables $\vec{a}$ and $B$ the coefficients of
the payment variables $\vec{p}$ of the upward and downward truthfulness
constraints. Matrix $M$ contains the coefficients required to model the
monotonicity constraints. Whether we consider DSIC or BIC truthfulness then
depends on the coefficients and dimensions of the matrices $A, B$ and $M$, and
we do not restrict ourselves to only one of the settings. When directly
comparing the two cases, we will use $\bar{M}$ later on to describe the
monotonicity constraints in expectation. However, we differentiate between the
downward and upward constraints using, $\downward{A}$ and $\upward{A}$. The
split constraints are then
$$ \downward{A} \vec{a} + \downward{B} \vec{p} \leq 0, \quad \upward{A} \vec{a}
+ \upward{B} \vec{p} \leq 0 \quad \text{and} \quad M \vec{a} \leq 0.$$

\paragraph{Feasibility space}
Besides the truthfulness conditions, the allocations' feasibility space
$\mathcal{A}$ is represented by a finite set of convex and continuously
differentiable constraints. We assume that for each bid profile $\vec{v}\in
\vec{V}$, there are $m \in \N$ constraints. Each constraint $g_j(\vec{a}):
\vec{V} \map \R_+ $ involves \emph{only} allocation variables corresponding to
this very bid profile. That is,
$g_j(\vec{a})=g_j(a_1(\vec{v}),a_2(\vec{v}),\dots,a_n(\vec{v}))$ for $j\in [m]$
and some $\vec{v}\in\vec{V}$. To maintain ex-post feasibility the constraints
are copied for each bid profile varying over the $\vec{v} \in \vec{V}$ such that
the total number of constraints then is $\mathcal{M} := m|\vec{V}|$. E.g., in
the single-item case $\mathcal{M} = |\vec{V}|$ and each $g_j(\vec{a})$
represents \emph{the one} feasibility constraint per fixed bid profile,
see~\eqref{eq:feasibility_unity-simplex}.

Hence, an allocation $\vec{a}$ is feasible, i.e. $\vec{a} \in \mathcal{A}$, if
and only if $g_j(\vec{a}) \leq 0$ for all $j\in [\mathcal{M}].$ In our framework
we use the notion $G$ which can be seen as a vector of the $g_j$ functions,
$$
G(\vec{a}) = \begin{pmatrix}
    g_1 (\vec{a}) \\ g_2 (\vec{a}) \\ \vdots \\g_\mathcal{M} (\vec{a})
\end{pmatrix} , \quad (G(\vec{a}))^{\top} \psi = 0 \; \Longleftrightarrow \; g_1(\vec{a}) \psi_1 = 0 , \dots, g_\mathcal{M}(\vec{a}) \psi_\mathcal{M} = 0.
$$
$\nabla G(\vec{a})$ is the corresponding Jacobian matrix of $G(\vec{a})$ where
column $i$ contains all functions' derivatives with respect to the allocation
variable of player $i$ for a given bid profile $\vec{v}$. E.g. in the
single-item case we can write the linear feasibility constraints in matrix
vector notation $G \vec{a} \leq 1$ and $\nabla G(\vec{a}) = G^\top$. Note, that
we can always hide the non-negativity of the allocations within these
constraints.

\subsection{The General KKT Formulation}
\label{sec:KKT-model}

Putting the generalized objective function, truthfulness, and the feasibility
constraints together, we derive the following General
Model~\eqref{OP:general_model}:

\begin{align*}
	\max \quad &  \alpha  \vec{f}^{\top} \vec{p} + (1 - \alpha) \vec{f}^{\top} \vec{\nu} \vec{a}\tag{\textbf{GM1}} \label{OP:general_model} \\ 
	\text{s.t.} \quad&  \downward{A} \vec{a} + \downward{B} \vec{p} \leq 0, \\
    & \upward{A} \vec{a} + \upward{B} \vec{p} \leq 0, \\
    & M \vec{a} \leq 0, \\
	& G(\vec{a}) \leq 0.
\end{align*}

According to this optimization problem the KKT system covering
\ref{KKT:primal_feasibility}, the \ref{eq:KKT_dual_constraints},
\ref{eq:KKT_dual_feasibility} and the \ref{eq:KKT_complementary_slackness}
conditions then is

\begin{align*}
	\downward{A} \vec{a} + \downward{B} \vec{p} &\leq 0 , \tag{Primal feasibility} \label{KKT:primal_feasibility}\\
	\upward{A} \vec{a} + \upward{B} \vec{p} &\leq 0 ,\\
	M \vec{a} &\leq 0 ,\\
	G(\vec{a}) &\leq 0 , \\  
 \begin{pmatrix}\downward{A}^{\top}\\\downward{B}^{\top}\end{pmatrix} \downward{\lambda} + \begin{pmatrix}\upward{A}^{\top}\\ \upward{B}^{\top}\end{pmatrix} \upward{\lambda} + \begin{pmatrix} M^{\top} \\ 0 \end{pmatrix} \tau + \begin{pmatrix} (\nabla G(\vec{a}))^{\top} \\ 0 \end{pmatrix} \psi &= \begin{pmatrix} (1 - \alpha) \vec{\nu} \vec{f}  \\ \alpha \vec{f} \end{pmatrix},\tag{Dual constraints} \label{eq:KKT_dual_constraints}\\ 
    \lambda,\tau,\psi &\geq 0,\tag{Dual feasibility}\label{eq:KKT_dual_feasibility} \\
    (\downward{A} \vec{a} + \downward{B} \vec{p})^{\top} \downward{\lambda} &= 0 ,\tag{Complementary slackness} \label{eq:KKT_complementary_slackness}\\
	(\upward{A} \vec{a} + \upward{B} \vec{p})^{\top} \upward{\lambda} &= 0 ,\\
	(M \vec{a})^{\top} \tau &= 0,\\
	(G(\vec{a}))^{\top} \psi &= 0.
\end{align*}

 Due to the convexity of all components of~\eqref{OP:general_model} any feasible
 solution of the KKT system is also optimal and vice versa. Observe, that in the
 KKT system we have equations for the~\ref{eq:KKT_dual_constraints} only instead
 of inequalities as in the LP formulation. This is due to the fact that we
 consider $\vec{a}$ and $\vec{p}$ as free variables but ensure the
 non-negativity of the allocations within the feasibility constraints $G$. The
 corresponding dual variables can always ensure equality then. However, if we
 set such a dual variable positive by complementary slackness this leads to the
 primal, i.e. the allocation variable, to be zero. E.g., in the single-item case,
 not having the highest ironed virtual value of even a negative one always
 implies zero probability of allocation. We now use this system to derive
 similar results as in \cref{sec:duality-chain-myerson}.

\begin{lemma}
    If $\alpha > 0$, the payments of the optimal auction are completely
    determined by the allocation variables and the payment rule then is exactly
    the same as in the single-item setting~\eqref{eq:first_payment_rule}.
\end{lemma}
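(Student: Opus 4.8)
The plan is to replay the complementary-slackness half of the proof of \cref{lemma:DSIC_opt}, now inside the KKT system of \cref{sec:KKT-model}. First I would isolate the block of the \ref{eq:KKT_dual_constraints} indexed by the payment variables $\vec p$. Since the monotonicity matrix $M$ and the Jacobian $\nabla G(\vec a)$ act on allocation variables only, they contribute nothing there, and this block reads $\downward{B}^{\top}\downward{\lambda}+\upward{B}^{\top}\upward{\lambda}=\alpha\vec f$. The key observation is that $\downward{B}$ and $\upward{B}$ — the payment coefficients of the local up/down truthfulness constraints — are completely independent of the feasibility space $\mathcal A$: they are exactly the matrices occurring in \eqref{LP:DP1} (and, in the Bayesian reading, \eqref{LP:BDP1}). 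Hence this subsystem, coordinate by coordinate, coincides with the single-item $\lambda$-equations of \cref{sec:Myerson}, except that its right-hand side $\vec f$ is scaled by the positive constant $\alpha$.

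Next I would invoke \cref{lemma:lambda_positive}. Its proof is a telescoping of the $\lambda$-equations against dual feasibility $\lambda\ge 0$, and it is insensitive to a strictly positive rescaling of the right-hand side: with each $f(\vec v)>0$ replaced by $\alpha f(\vec v)$, the same telescoped sums remain strictly positive. This is precisely where the hypothesis $\alpha>0$ is used — for $\alpha=0$ the right-hand side vanishes, $\lambda\equiv 0$ is feasible, and no truthfulness constraint is forced to bind (which matches the well-known fact that pure welfare maximization does not pin down payments). Thus, for $\alpha>0$, every feasible — hence every optimal — solution of the KKT system has all downward $\lambda$ variables strictly positive.

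Then I would apply \ref{eq:KKT_complementary_slackness}: from $(\downward{A}\vec a+\downward{B}\vec p)^{\top}\downward{\lambda}=0$ together with $\downward{A}\vec a+\downward{B}\vec p\le 0$ (\ref{KKT:primal_feasibility}) and $\downward{\lambda}>0$ coordinatewise, every local downward truthfulness constraint must hold with equality. Reading off these equalities yields the recursion $p_i(k,\vec k_{-i})-p_i(k-1,\vec k_{-i})=v_{i,k}\bigl(a_i(k,\vec k_{-i})-a_i(k-1,\vec k_{-i})\bigr)$; telescoping it down to the borderline values $a_i(0,\vec k_{-i})=p_i(0,\vec k_{-i})=0$ and then summing by parts produces exactly \eqref{eq:first_payment_rule}, so that $\vec p$ is the claimed fixed linear function of $\vec a$. (Under the BIC reading the analogous steps give the interim identity $P_i(k)-P_i(k-1)=v_{i,k}(A_i(k)-A_i(k-1))$; combined with the coincidence of optimal DSIC and BIC allocations — Point~\ref{item:optimal-single_parameter-general-BICvsDIC} — this shows the canonical optimal auction again uses the ex-post rule \eqref{eq:first_payment_rule}.)

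The only genuine obstacle I anticipate is the bookkeeping in the first step: making fully precise that, regardless of how intricate $\mathcal A$ (hence $G$ and $\nabla G$) is, the payment-variable block of the KKT dual constraints is verbatim the single-item $\lambda$-system, so that \cref{lemma:lambda_positive} transfers with no additional argument. Everything after that is the already-established complementary-slackness telescoping of \cref{lemma:DSIC_opt}, followed by a routine summation by parts.
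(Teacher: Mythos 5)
Your proposal is correct and follows essentially the same route as the paper: extract the payment block of the \ref{eq:KKT_dual_constraints}, note it is the single-item $\lambda$-system of \eqref{LP:DP1}/\eqref{LP:BDP1} (here scaled by $\alpha$), invoke \cref{lemma:lambda_positive}, and use \ref{eq:KKT_complementary_slackness} to force all downward constraints to bind, which telescopes to \eqref{eq:first_payment_rule}. Your explicit remark that the positivity argument survives rescaling by $\alpha>0$ (and fails at $\alpha=0$) makes precise a point the paper leaves implicit, but it is the same argument.
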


\begin{proof}
    If $\alpha > 0$, the dual constraints corresponding to the payment variables
    are
    $${\downward{B^\top}}  \downward{\lambda} + {\upward{B^\top}}
    \upward{\lambda} = \alpha \vec{f}.$$ These are exactly the equations
    considered in \eqref{LP:DP1} and \eqref{LP:BDP1}. By
    \cref{lemma:lambda_positive} we know that all $\downward{\lambda}$ are
    strictly positive. Applying the complementary slackness condition 
    $${(\downward{A} \vec{a} + \downward{B} \vec{p})}^{\top} \downward{\lambda}
    = 0$$ we see that in any feasible KKT solution, i.e., any optimal solution,
    all downward incentive compatible conditions have to bind. This gives us the
    unique single-item payment rule~\eqref{eq:first_payment_rule}, which we also
    write in matrix vector notation as
    $$\vec{p} = C \vec{a}.$$
\end{proof}

As the payment rule in the single-item setting is a linear combination of
allocation variables, we use matrix $C$ such that the payments defined by 
    \begin{equation} \label{eq:payment_rule_matrix}
        \vec{p} = C \vec{a}
    \end{equation}
are the same as~\eqref{eq:first_payment_rule}. The fixed payments arise
regardless of DSIC or BIC truthfulness. Combining the monotonicity constraints
with the payment rule~\eqref{eq:payment_rule_matrix}, we know by
\cref{lemma:DSIC_opt} for DSIC truthfulness that the local truthfulness
constraints are redundant now. This clearly transfers to BIC as well, thus, the
local upward and downward constraints can be dropped completely from the system,
as long as we hold on the \eqref{eq:payment_rule_matrix} and add a dual variable
according to the payments. We follow the notation from the single-item case and
call this free variable $\rho$. The induced the dual constraints are then
$$\begin{pmatrix} C^{\top}\\ \unitmatrix \end{pmatrix} \rho + \begin{pmatrix}
M^{\top} \\ 0 \end{pmatrix} \tau + \begin{pmatrix} (\nabla G(\vec{a}))^{\top} \\
0 \end{pmatrix} \psi = \begin{pmatrix} (1 - \alpha) \vec{\nu} \vec{f}  \\ \alpha
\vec{f} \end{pmatrix}.$$ As $\rho$ is a free variable corresponding to equality
constraints, it does not appear in the complementary slackness conditions.
Therefore, we can use its fixed values $\rho = \alpha \vec{f}$ and insert them
explicitly. Having the identical payment rule~\eqref{eq:payment_rule_matrix} to
the single-item case we know that $C^{\top} \vec{f} = \vec{\varphi} \vec{f}$
(compare\eqref{LP:DP2} and~\eqref{eq:virtual_values_appear}), and since all
elements are linear this transfers when we multiply the equation with the scalar
$\alpha$. Note that the virtual values again are equivalent for DISC and BIC
truthfulness. This closes the analogue to our chain of dual programs in
\cref{sec:duality-chain-myerson}. 

Now we can write an equivalent KKT system for the general setting, even before
essentially having to differentiate between DSIC and BIC truthfulness.
    \begin{align*}
	\vec{p} & = C \vec{a} , \tag{General KKT}\label{eq:general_KKT}\\
	M \vec{a} &\leq 0 ,\\
	G(\vec{a}) &\leq 0 ,\\
    (\nabla G(\vec{a}))^{\top} \psi &= \alpha \vec{\varphi} \vec{f} + (1 - \alpha) \vec{\nu} \vec{f} - M^{\top} \tau,\\ 
    \tau,\psi &\geq 0 ,\\
	(M \vec{a})^{\top} \tau &= 0 ,\\
	(G(\vec{a}))^{\top} \psi &= 0.
\end{align*}

\subsection{Generalized Virtual Welfare Maximization}
\label{sec:general-virtual-welfare-maximization}

The right-hand side of the dual constraints in the~\ref{eq:general_KKT} system
is a modification of the virtual values we know from revenue maximization in
\cref{sec:Myerson}. E.g. $\alpha=1$ is pure expected revenue
maximization. However, we can combine these two objects nicely and obtain
what we call \emph{generalized virtual values}. We quickly reshape 
$$\vec{\varphi^\alpha} := \alpha \vec{\varphi} + (1 - \alpha)  \vec{\nu}$$
and obtain for player $i$ with value $v_{i,k}$ the generalized virtual value
\begin{equation}\label{eq:modified_virtual_values}
    \varphi^\alpha_i(v_{i,k}) := \alpha \, \varphi_i(v_{i,k}) + (1 - \alpha)  \, v_{i,k} = v_{i,k} - \alpha \, ( v_{i,k+1} - v_{i,k}) \frac{1-F_i(v_{i,k})}{f_i(v_{i,k})}.
\end{equation}
Now for smaller values of $\alpha$, e.g. close to 0, the generalized virtual values are more likely to be non-decreasing, still, we have to consider potential non-monotonicities in the general case as well.

Even though by complementary slackness we can not directly derive such a clear
auction as in \cref{th:discrete-optimal-myerson}, we are still interested in the
role of complementary slackness in the general case and whether it requires an
analogue of ironing. To investigate this, we carry out the same thought
experiment as in \cref{sec:ironing} (Page~\pageref{page:thought-experiment}):
Fix a value profile $\vec{v}\in \vec{V}$ as well as a player $i$ and assume that
the player has positive probability of winning, i.e. $a_i(k, \vec{k}_{-i}) > 0$,
thus, the corresponding dual constraint has to bind. If we increase the player's
value, while keeping all other players' values fixed, we run into the same problem as
in the single-item case: All other players' generalized virtual values remain
the same. If player $i$'s generalized virtual values are decreasing, in general,
this might lead to a strict inequality which possibly can only be fixed by
increasing the dual variable corresponding to the non-negativity. Due to
complementary slackness, this would lead to $a_i(k+1, \vec{k}_{-i}) = 0$,
contradicting the primal monotonicity constraints. Therefore, we also need an
analogue of ironing, i.e., a choice of values for $\tau$ such that for player
$i$ the generalized virtual values \eqref{eq:modified_virtual_values} are
non-decreasing in $k\in [K_i]$. The procedure including the uniqueness remains
the same as in \cref{lemma:ironing}, as we again flatten out certain intervals
to be constant by construction of a convex hull which is not specified for
virtual values only.

The result of inserting these unique values for $\tau$ is what we call the
\emph{ironed generalized virtual values} 
$$\tilde{\vec{\varphi}}^\alpha \vec{f} :=  \vec{\varphi^\alpha} \vec{f} - M^{\top} \tau .$$
This is, that the sequences
\begin{align*}
    &\varphi_i^\alpha(v_{i,k}) + \frac{\tau_i(k,k-1,\vec{k}_{-i})}{f(\vec{v})} - \frac{\tau_i(k+1,k,\vec{k}_{-i})}{f(\vec{v})} \tag{Generalized DSIC ironing} \label{eq:modified_DSIC_ironing}\\
    &\varphi_i^\alpha(v_{i,k}) + \frac{ \tau_i(k,k-1)}{f_i(k)} -  \frac{\tau_i(k+1,k)}{f_i(k)} \tag{Generalized BIC ironing} \label{eq:modified_BIC_ironing}
\end{align*}
not only are non-decreasing in both settings, but also assume the exact same
values for each $v_{i,k}$. This is due to the fact that the generalized virtual
values again are identical in both settings and the $\tau$ variables are unique,
see \cref{lemma:ironing}. 

The only difference between the DSIC and the BIC formulation of the
\ref{eq:general_KKT} system lies in the monotonicity constraints $M \vec{a} \leq
0$ and the corresponding dual variables $\tau$. To distinguish between the
settings we use $\bar{M}$ and $\bar{\tau}$ for the BIC case. As we again add the
identical values in~\eqref{eq:modified_DSIC_ironing}
and~\eqref{eq:modified_BIC_ironing} we can unite the generalization of
\cref{lemma:ironing_equivalence} and \cref{lemma:DSIC_without_loss}.

\begin{lemma}[DSIC Optimality] \label{lemma:DSIC_without_loss_general}
    Any feasible, hence optimal, solution of the \ref{eq:general_KKT} system under DSIC truthfulness is feasible, hence optimal, for the BIC setting.
\end{lemma}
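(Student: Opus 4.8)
The plan is to mirror the single-item argument of \cref{lemma:DSIC_without_loss}, now carried out at the level of the reformulated \ref{eq:general_KKT} system. Since \eqref{OP:general_model} is a convex program, every feasible point of its KKT system is automatically optimal, so it suffices to show that a DSIC-feasible KKT solution extends to a BIC-feasible one with the \emph{same} allocation and payment rule.

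First I would fix a feasible solution $(\vec{a},\vec{p},\tau,\psi)$ of the DSIC \ref{eq:general_KKT} system, so that $\vec{p}=C\vec{a}$, $M\vec{a}\leq 0$, $G(\vec{a})\leq 0$, $(\nabla G(\vec{a}))^{\top}\psi=\vec{\varphi^\alpha}\vec{f}-M^{\top}\tau$, $\tau,\psi\geq 0$, and the complementary slackness conditions $(M\vec{a})^{\top}\tau=0$ and $(G(\vec{a}))^{\top}\psi=0$ hold. As the candidate BIC solution I would keep $\vec{a},\vec{p},\psi$ unchanged and replace $\tau$ by the vector $\bar{\tau}$ determined componentwise through the ironing correspondence $\tau_i(k,k-1,\vec{k}_{-i})=f(\vec{k}_{-i})\,\bar{\tau}_i(k,k-1)$, i.e. the generalization of \eqref{eq:unique_tau}. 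This $\bar{\tau}$ is well defined and non-negative: by \cref{lemma:ironing} applied to the generalized virtual values, the ironed quantities in \eqref{eq:modified_DSIC_ironing} and \eqref{eq:modified_BIC_ironing} are uniquely determined and coincide, which (exactly as in \cref{lemma:ironing_equivalence}) forces the ratio $\tau_i(k,k-1,\vec{k}_{-i})/f(\vec{k}_{-i})$ to be independent of $\vec{k}_{-i}$, and non-negativity is inherited from $\tau\geq 0$ and $f(\vec{k}_{-i})>0$.

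Then I would verify the BIC KKT conditions for $(\vec{a},\vec{p},\bar{\tau},\psi)$. Primal feasibility: $\vec{p}=C\vec{a}$ and $G(\vec{a})\leq 0$ are literally unchanged, while $\bar{M}\vec{a}\leq 0$ follows from $M\vec{a}\leq 0$, since each row of $\bar{M}\vec{a}$ is a non-negative average (weights $f_{-i}(\vec{v}_{-i})$) of rows of $M\vec{a}$. Dual feasibility $\bar{\tau},\psi\geq 0$ was noted above. For the dual constraint and complementary slackness, the workhorse is the pair of identities $\bar{M}^{\top}\bar{\tau}=M^{\top}\tau$ and $(\bar{M}\vec{a})^{\top}\bar{\tau}=(M\vec{a})^{\top}\tau$: the entry of $M^{\top}\tau$ attached to $a_i(k,\vec{k}_{-i})$ is a difference of two consecutive $\tau_i(\cdot,\cdot,\vec{k}_{-i})$, which under the correspondence equals $f(\vec{k}_{-i})$ times the matching difference of $\bar{\tau}_i$'s, i.e. the corresponding entry of $\bar{M}^{\top}\bar{\tau}$; and evaluating $(M\vec{a})^{\top}\tau$ by summing first over $\vec{k}_{-i}$ collapses it to $(\bar{M}\vec{a})^{\top}\bar{\tau}$. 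Given these, $(\nabla G(\vec{a}))^{\top}\psi=\vec{\varphi^\alpha}\vec{f}-M^{\top}\tau=\vec{\varphi^\alpha}\vec{f}-\bar{M}^{\top}\bar{\tau}$, so the BIC dual constraint holds with the same $\psi$; and $(\bar{M}\vec{a})^{\top}\bar{\tau}=(M\vec{a})^{\top}\tau=0$ together with $(G(\vec{a}))^{\top}\psi=0$ gives BIC complementary slackness. Hence $(\vec{a},\vec{p},\bar{\tau},\psi)$ is feasible, thus optimal, for the BIC \ref{eq:general_KKT} system, proving the lemma.

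The only genuinely delicate point is the well-definedness of $\bar{\tau}$, i.e. that the DSIC ironing variables scale uniformly across the profiles $\vec{k}_{-i}$; but this is already secured, since the generalized virtual values and the marginals $f_i$ entering the ironing construction do not depend on $\vec{v}_{-i}$, so the uniqueness statement of \cref{lemma:ironing} and the argument of \cref{lemma:ironing_equivalence} carry over verbatim with $\varphi_i$ replaced by $\varphi^\alpha_i$. Everything else is the routine bookkeeping of transposing the monotonicity blocks $M$ and $\bar{M}$, set up precisely so that it need not be repeated here.
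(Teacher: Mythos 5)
Your proposal is correct and follows essentially the same route as the paper: keep $\vec{a},\vec{p},\psi$, use the uniqueness of the (generalized) ironing to identify the DSIC and BIC $\tau$ variables via the scaling correspondence, conclude $M^{\top}\tau=\bar{M}^{\top}\bar{\tau}$, and transfer complementary slackness through $(M\vec{a})^{\top}\tau=\vec{a}^{\top}M^{\top}\tau=\vec{a}^{\top}\bar{M}^{\top}\bar{\tau}=(\bar{M}\vec{a})^{\top}\bar{\tau}$. Your write-up is somewhat more explicit about primal BIC feasibility and the well-definedness of $\bar{\tau}$, but these are points the paper's argument implicitly relies on as well, so there is no substantive difference.
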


\begin{proof}
    Recall that any primal DSIC solution is feasible for the primal BIC
    constraints. The dual variables $\psi$ are chosen to be the same in both
    cases. Lastly, since the ironing is unique in both settings there is a
    one-to-one correspondence  and we can write by the
    \eqref{eq:modified_DSIC_ironing} and \eqref{eq:modified_BIC_ironing}
    $$\frac{\tau_i(k,k-1,\vec{k}_{-i})}{f(\vec{v})} -
    \frac{\tau_i(k+1,k,\vec{k}_{-i})}{f(\vec{v})} = \frac{
    \tau_i(k,k-1)}{f_i(k)} -  \frac{\tau_i(k+1,k)}{f_i(k)}$$ which is
    equivalent, using the matrix vector notation, to $M^\top \tau = \bar{M}^\top
    \bar{\tau}$. The complementary slackness transfers as well by $0 = (M
    \vec{a})^{\top} \tau = \vec{a}^{\top} M^{\top} \tau = \vec{a}^{\top}
    \bar{M}^{\top} \bar{\tau} = (\bar{M} \vec{a})^{\top} \bar{\tau}.$ 
\end{proof}

Putting everything together, we obtain an equivalent problem formulation
for~\eqref{OP:general_model}. This is, finding a generalized optimal
single-parameter auction reduced to solving the optimization problem

\begin{align*}
	\max \quad &  \vec{f}^{\top} \tilde{\vec{\varphi}}^\alpha \vec{a}  \tag{\textbf{GM2}} \label{OP:general_virtual_welfare_maximization}\\ 
	\text{s.t.} \quad&  \vec{p} = C \vec{a} ,\\
	& G(\vec{a}) \leq 0,
\end{align*}
and additionally fixing a deterministic tie-breaking rule. By this rule and the
flattening procedure we ultimately get monotonicity \emph{for free}. As the
generalized ironed virtual values are non-decreasing, increasing a player's
value while all other players stay the same ensures monotonicity of the
allocations as long as ties are broken consistently. Therefore, when solving for
an optimal \emph{generalized ironed virtual welfare maximizing} allocation, we
can restrict ourselves to pointwise finding such a maximizer under the
feasibility constraints then. The payments are computed afterwards as a function
of the allocations by the same rule as in the single-item
case~\eqref{eq:first_payment_rule}.

\subsection{Integral Auctions}
\label{sec:integral-auctions-KKT}

The generalized single-parameter auction optimization problem formulation
\eqref{OP:general_virtual_welfare_maximization} provides great transparency on
how to find an optimal auction in the general setting, that is, to focus on the
allocations and compute the payments afterwards. The only relevant constraints
for the allocations are $G(\vec{a}) \leq 0$. In this section we want to make use
of a geometrical property that also ensures determinism of the optimal
single-item auction: As the objective points into a certain direction, a polyhedral feasibility space with integral
extreme points ensures that an optimal solution has to be assumed in one of
these. To use this in the general case, we have to assume that $G(\vec{a}) \leq
0$ is linear; then, we can write the constraints in matrix vector notation as
$G\vec{a} \leq b$. By that, we define an auction setting with sufficient
conditions for obtaining an integral auction. 

\begin{definition}[TDI auction setting]\label{def:TU_auction_setting} A
    (single-parameter) auction setting will be called \emph{totally dual integral
    (TDI)}, if the allocation's feasibility constraints are given by a TDI system.
    More precisely, if there exists a system of inequalities $G\vec{a} \leq b$ such that for all integral $c$
    $$ \min_{\psi \geq 0} \{ \, \psi^\top b \; | \;  G^\top \psi = c \} $$
    is attained by an integral vector $\psi$.
\end{definition}

As we know by \cref{lemma:DSIC_without_loss_general} that optimality transfers
from the formulation under DSIC to BIC, the TDI auction setting within the DSIC
formulation ensures the existence of an integral solution under BIC truthfulness
as well. Hence, we can state the general counterpart of
\cref{lemma:determinism_TU} for the general case, that is
Point~\ref{item:optimal-single_parameter-general-integrality} of
\cref{th:discrete-optimal-general}.

\begin{lemma}[Integral optimality]
    \label{lemma:integral-optimality}
    If our (single-parameter) auction setting is TDI, then there exists an
    optimal BIC auction which is integral.
\end{lemma}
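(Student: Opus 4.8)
The plan is to reduce, via the machinery of \cref{sec:KKT-model,sec:general-virtual-welfare-maximization}, the search for an integral optimal BIC auction to a family of independent integer programs — one per bid profile — over the per-profile feasibility polytope, and then read off integrality from the classical polyhedral consequence of total dual integrality. By that reduction, an optimal auction is obtained by taking an allocation $\vec a$ that maximizes the generalized ironed virtual welfare $\vec f^\top\tilde{\vec{\varphi}}^\alpha\vec a$ over the feasibility constraints, fixing a deterministic tie-breaking rule, and setting the payments to $\vec p = C\vec a$ as in \eqref{eq:payment_rule_matrix}; and by \cref{lemma:DSIC_without_loss_general} it suffices to carry this out under the DSIC formulation, since optimality then transfers to BIC.

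First I would exploit the product structure. Since the setting is TDI, its feasibility is a linear system $G\vec a\le b$ with integral right-hand side, and because each $g_j$ involves only the allocation variables of a single bid profile, the polyhedron $\{\vec a : G\vec a\le b\}$ is the direct product, over $\vec v\in\vec V$, of copies of the per-profile polytope $\mathcal A=\{a\in\R^n_+ : G_0 a\le b_0\}$ (non-negativity absorbed into $G_0$); correspondingly $\vec f^\top\tilde{\vec{\varphi}}^\alpha\vec a=\sum_{\vec v\in\vec V} f(\vec v)\,\langle\tilde{\vec{\varphi}}^\alpha(\vec v),\vec a(\vec v)\rangle$, where $\tilde{\vec{\varphi}}^\alpha(\vec v)=(\tilde{\varphi}_i^\alpha(v_i))_{i\in[n]}$ is well defined (the generalized ironed virtual values depend on a bidder's own value only). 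By the Edmonds--Giles theorem on TDI systems (see \cite{schrijver2003co}), $\{\vec a:G\vec a\le b\}$ — hence each factor $\mathcal A$ — is an integral polyhedron, so all of its vertices lie in $\N^n$. Since $\mathcal A$ is pointed (it sits in the orthant) and the objective is bounded on it whenever an optimal auction exists, each per-profile program $\max_{a\in\mathcal A}\langle\tilde{\vec{\varphi}}^\alpha(\vec v),a\rangle$ attains its optimum at an integral vertex of $\mathcal A$. I would fix, once and for all, a deterministic tie-breaking rule that picks such a vertex as a function \emph{only} of the vector $\tilde{\vec{\varphi}}^\alpha(\vec v)$, and let $\vec a^\star$ be the resulting global allocation: it is integral, it is block-optimal hence optimal for \eqref{OP:general_virtual_welfare_maximization}, and, being a tuple of vertices of the factors, it is a vertex of $\{\vec a:G\vec a\le b\}$.

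The remaining and main step is to verify that $\vec a^\star$ satisfies the monotonicity constraints $M\vec a^\star\le0$, so that $(\vec a^\star,C\vec a^\star)$ is an actual DSIC auction rather than merely a solution of the relaxed program \eqref{OP:general_virtual_welfare_maximization}. Fix a bidder $i$, other bids $\vec v_{-i}$, and adjacent values $v_i<v_i^+$. If $\tilde{\varphi}_i^\alpha(v_i^+)>\tilde{\varphi}_i^\alpha(v_i)$, a one-line exchange argument suffices: writing $\varphi\coloneqq\tilde{\vec{\varphi}}^\alpha(v_i,\vec v_{-i})$ and $\varphi'\coloneqq\tilde{\vec{\varphi}}^\alpha(v_i^+,\vec v_{-i})=\varphi+\varepsilon e_i$ with $\varepsilon>0$, every maximizer $a$ of $\langle\varphi,\cdot\rangle$ and every maximizer $a'$ of $\langle\varphi',\cdot\rangle$ over $\mathcal A$ satisfy $\langle\varphi,a\rangle\ge\langle\varphi,a'\rangle$ and $\langle\varphi',a'\rangle\ge\langle\varphi',a\rangle$, which combine to give $a'_i\ge a_i$; in particular $a^\star_i(v_i^+,\vec v_{-i})\ge a^\star_i(v_i,\vec v_{-i})$, irrespective of how ties were broken. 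If instead $\tilde{\varphi}_i^\alpha(v_i^+)=\tilde{\varphi}_i^\alpha(v_i)$ (an ironed stretch), then the whole vectors coincide, $\tilde{\vec{\varphi}}^\alpha(v_i^+,\vec v_{-i})=\tilde{\vec{\varphi}}^\alpha(v_i,\vec v_{-i})$, so the tie-breaking rule — a function of this vector only — returns the same vertex at both profiles and $a^\star_i$ is unchanged. Since \cref{lemma:ironing} in its generalized form (producing \eqref{eq:modified_DSIC_ironing}) guarantees $\tilde{\varphi}_i^\alpha$ is non-decreasing in $v_i$, one of these two cases always applies, so $a^\star_i(\cdot,\vec v_{-i})$ is non-decreasing; the border inequality $a^\star_i(v_{i,1},\vec v_{-i})\ge0$ holds because $\mathcal A\subseteq\R^n_+$. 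Hence $M\vec a^\star\le0$.

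Putting it together, $(\vec a^\star,C\vec a^\star)$ is a feasible — therefore optimal — solution of the DSIC \eqref{eq:general_KKT} system, so by \cref{lemma:DSIC_without_loss_general} it is also an optimal BIC auction, and it is integral by construction. I expect the monotonicity step to be the main obstacle: arguing that the pointwise tie-broken maximizer is \emph{globally} monotone is precisely where the ironing of \cref{lemma:ironing} is used in an essential way (and is what forces the tie-breaking rule to be phrased in terms of ironed virtual values rather than raw bids), whereas the integrality proper is an off-the-shelf consequence of total dual integrality.
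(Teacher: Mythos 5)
Your proposal is correct and follows essentially the same route as the paper: reduce to the pointwise generalized ironed virtual welfare maximization~\eqref{OP:general_virtual_welfare_maximization} (dropping monotonicity, which the ironing plus a consistent tie-breaking rule gives for free), invoke the integrality of the TDI-described polyhedron, and transfer optimality from DSIC to BIC via \cref{lemma:DSIC_without_loss_general}. The only difference is one of detail: you spell out the per-profile product decomposition and prove the monotonicity-for-free claim explicitly (exchange argument in the strict case, identical tie-breaking on ironed stretches), whereas the paper asserts this from its discussion in \cref{sec:general-virtual-welfare-maximization}.
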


\begin{proof}
    The set of feasible allocations for the~\ref{eq:general_KKT} system is then given by the polyhedron
    $$\{ \, \vec{a} \;| \; G\vec{a} \leq b, \,  M \vec{a} \leq 0\}.$$
    As a result of our analysis we can reduce the optimization problem to~\eqref{OP:general_virtual_welfare_maximization} such that we can drop the monotonicity constraints and maximize the generalized ironed virtual welfare pointwise. 
    Therefore, the property of $G\vec{a} \leq b$ being a TDI system is sufficient for the feasibility space to be integral, which concludes the proof.
\end{proof}

This implies that several single-parameter auction settings have integral
solutions also in the BIC setting. Examples are, of course, the single-item
auction, where the TU constraint matrix even satisfies a stronger argument than TDI. But also the $k$-unit auction, the digital good auction, and in general combinatorial auctions where the constraints can be described via a TDI system $G\vec{a}\leq b$, or even stronger a TU matrix $G$ and an integral vector $b$.

We will dive deeper in a combinatorial auction in the application presented
in~\cref{sec:application}. 

\section{Application: Buying Flows on a Tree}\label{sec:application}

We consider a problem of $n$ requests to transport specific amounts of a good
through a capacitated network modelled by a directed graph. This problem is
motivated by the transport of gas where the pressure along a path is decreasing.
By this decrease of pressure, the assumption to have a directed graph without
circuits seems reasonable. In fact, a close real-world approximation of the
Greek gas network \cite{GasLib} underlying this topology. 

The network is therefore considered as the graph $G=(V,E,c)$ with nodes $V$,
directed edges $E$ and edge capacities $c_e \in \R_+$ with $c_e > 0$ attached to
each edge $e\in E$. Now there are $n$ bidders, each one having a request to send
demand $d_i$ of a good from a specific source to a sink node $s_i,t_i \in V$.
Each bidder has a private value $v_i \sim F_i$ for sending this amount over the
path $s_i\leadsto t_i$ though the network. We want to model this problem and
find an optimal mechanism for this setting.

\paragraph{Objective function} We allow a convex combination of revenue and
social welfare in the objective, i.e. for $\alpha \in (0,1)$ we maximize
$$\expect[\vec{v}\sim \vec{F}]{\alpha \sum_i p_i(\vec v) + (1 - \alpha) \sum_i v_i a_i(\vec v) }.$$

\paragraph{Feasibility space} The feasibility constraints of the network can be
modelled via the linear constraints $Ga \leq b$: Per edge $e \in E$ and per bid
profile $\vec{v} \in \vec{V}$ we have the capacity constraints
\begin{equation}\label{eq:application_constraints}
\sum_{i:\, e\in (s_i\leadsto t_i)} d_i a_i(\vec{v}) \leq c_e \quad \Longleftrightarrow \quad \sum_{i:\, e\in (s_i\leadsto t_i)} \frac{d_i}{c_e} a_i(\vec{v}) \leq 1.
\end{equation}
The corresponding non-negative dual variable for the edge capacity constraint
for $e\in E$ is denoted by $\eta_e(\vec{v})$ and we will interpret them as
\emph{edge costs}. For simplicity, we stick with the normalized
capacity constraints, i.e. the right-hand side formulation of
\eqref{eq:application_constraints}, and state this again as matrix vector
constraints by $E \vec{a} \leq 1$. Furthermore, we allow an allocation of at
most one, i.e., $a_i(\vec{v}) \leq 1$ and denote the corresponding non-negative
dual variable by $\psi_i(\vec{v}).$ Quickly observe, that the single-item
setting can be modelled by a single edge with capacity one and each player
having a demand of one to send over this edge.

\smallskip

For the remainder of this section we assume that the ironing has already been carried out, hence, the solution for
the flow auction can be found by computing an optimal solution of
\begin{align*}
	\max \quad &  \vec{f}^{\top} \vec{\tilde{\varphi}}^\alpha \vec{a}  \\ 
	\text{s.t.} \quad&  \vec{p} = C \vec{a} ,\\
	& \vec{a} \leq 1,\\
    & E \vec{a} \leq 1,
\end{align*}
including a consistent tie-breaking rule to ensure monotonicity. To actually
find the optimal auction, we investigate the complementary slackness condition of
the dual constraints associated with the allocation variables. The dual
constraint for variable $a_i(k, \vec{k}_{-i})$ with the already ironed (unique
values for $\tau$ inserted) is

\begin{equation}\label{eq:dual_flow_constraint}
    \psi_i (k, \vec{k}_{-i}) + \sum_{e\in (s_i\leadsto t_i)} \frac{d_i}{c_e} \eta_e(k, \vec{k}_{-i}) \geq f(\vec{v}) \tilde{\varphi}_i^\alpha (k).
\end{equation}
The direct implications of complementary slackness for optimal primal and dual values are:
\begin{itemize}
    \item $\tilde{\varphi}_i^\alpha (k) < 0 \then a_i(k, \vec{k}_{-i}) = 0$,
    i.e., a player with negative ironed $\alpha$-virtual value never wins,
    \item $\eta_e(\vec{v}) > 0 \then \sum_{i: e\in (s_i\leadsto t_i)} d_i a_i(\vec{v}) = c_e$, i.e., if an optimal solution requires costs on edge $e$, its capacity is fully utilized.
    \item $\psi_i(k, \vec{k}_{-i}) > 0 \then a_i(k, \vec{k}_{-i}) = 1$, i.e., there exists a condition that leads to full allocation.
\end{itemize}
If feasible, it would clearly be revenue maximizing to let all players with
non-negative ironed $\alpha$-virtual values win. We are therefore interested in
resolving conflicts in competition between bidders on an edge that would exceed
capacity when fully awarding all players. For this analysis, we need to define
and quantify the term of competition.

\begin{definition}[Competitive edge] An edge $e\in E$ will be called
\emph{competitive} with respect to a given subset of players $\mathcal{I}
\subseteq [n]$ if  
     \begin{equation}
         \label{eq:competitive-edge}
         \sum_{i \in \mathcal{I},\, e \in (s_i\leadsto t_i)} \frac{d_i}{c_e} > 1.
     \end{equation} 
     The quantity on the left-hand side of~\eqref{eq:competitive-edge} is called
     the \emph{competition} of $e$ (with respect to $\mathcal{I}$).
\end{definition}

We now present an algorithm that finds the optimal allocation, for which we
compute the payments afterwards. When fixing the received bid profile $\vec{v}$,
in the first step we will abuse the notation by dropping the values and indices
as they are consistent afterwards.

\subsection{A Combinatorial Algorithm}
\label{sec:combinatorial-algo-tree-flows}

In this section we present an algorithm that makes use of the KKT system by
using all its components, i.e. primal feasibility, the dual constraints, dual
feasibility, and complementary slackness condition and show how they interact
with each other when we are interested in finding an optimal auction. To do
this, we consider the ironed virtual values as the players' \emph{budgets} and
calculate \emph{buy-in} costs for the edges.
\begin{enumerate}
    \item Fix a $\vec{v}$ and set all corresponding $\psi_i = \eta_e = 0$ as
    well as a budget $b_i \coloneqq f(\vec{v}) \tilde{\varphi}_i^\alpha(k)$ for
    each player. Let $\mathcal{I} \subseteq [n]$ the set of all players with
    non-negative budget, i.e., non-negative ironed $\alpha$-virtual value (and
    already set for all other players $a_i(\vec{v})=0$ as they cannot afford the
    buy-in of zero). Furthermore, define the empty ordered set of players
    $\mathcal{J} \coloneqq \{\emptyset\}$.\label{alg:set}
    \item Repeat the following until there is no competitive edge left: \label{alg:competitive}
    \begin{enumerate}
        \item Find the edge $e$ with highest competition w.r.t. $\mathcal{I}$
        \item Define the buy-in for the edge as
        $$\min_{i \in \mathcal{I}, e\in (s_i\leadsto t_i)} b_i \frac{c_e}{d_i}$$
        and add this value to $\eta_e$.
        \item Charge every $i \in \mathcal{I}$ with $e\in (s_i\leadsto t_i)$ the
        buy-in $\frac{d_i}{c_e}\eta_e$ from their budget. Delete all players with $b_i=0$ from
        $\mathcal{I}$ and add them to the start of $\mathcal{J}$ by a
        deterministic rule, e.g., in lexicographical order.
    \end{enumerate}
    \item Set all $\psi_i \coloneqq \max \{ b_i , 0 \}$ and allocate in the following way:\label{alg:allocations}
    \begin{enumerate}
        \item We \emph{fully} allocate all players $i$ with $\psi_i > 0$, that
        is, we set $a_i(\vec{v})=1$. These players have endured all competition,
        and even have some of their budget left.
        \item Players in $\mathcal{J}$ \emph{fractionally} fill up the leftover
        capacities of the edge where they spent their \emph{last} budget,
        one-by-one, according to the order of $\mathcal{J}$. More formally, 
        \begin{enumerate}
            \item choose $i$, the first player of the ordered set $\mathcal{J}$,
            and $e$, the edge where $i$ spent the last budget and was added to
            $\mathcal{J}$, furthermore define $\delta_e$ to be the remaining
            capacity on edge $e$ after the prior full allocations,
            \item we set $a_i(\vec{v}) = \frac{\delta_e}{c_e}$, remove $i$ from
            $\mathcal{J}$, and go back to the previous step.
        \end{enumerate}
        \item All other players cannot afford the required edge prices and lose, i.e., $a_i(\vec{v})=0$.
    \end{enumerate}
    \item The payments are computed afterwards via the known payment rule~\eqref{eq:first_payment_rule}.
\end{enumerate}

\begin{theorem}
    The allocations and payments computed in the combinatorial algorithm are
    optimal for the flow auction problem.
\end{theorem}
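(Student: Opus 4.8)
The strategy is to show that the primal--dual data produced by the algorithm is feasible for the KKT conditions of the reduced problem~\eqref{OP:general_virtual_welfare_maximization} specialized to the flow setting; since that program is convex (in fact linear in $\vec a$), KKT-feasibility is equivalent to optimality, and then \cref{th:discrete-optimal-general}, together with the ``monotonicity for free'' observation, upgrades per-profile optimality of $\vec a$ and the payment rule~\eqref{eq:first_payment_rule} to optimality of the full auction. Since the objective $\vec f^{\top}\tilde{\vec\varphi}^\alpha\vec a$ and the feasibility constraints ($\vec a\le 1$ and~\eqref{eq:application_constraints}) both decompose over bid profiles, it suffices to fix $\vec v$, write $b_i \coloneqq f(\vec v)\tilde\varphi_i^\alpha(k)$, and prove that the allocation $\vec a(\vec v)$ returned in Step~\ref{alg:allocations} maximizes $\sum_i b_i a_i$ subject to $0\le a_i\le 1$ and~\eqref{eq:application_constraints}, with the accompanying $\psi_i,\eta_e$ serving as the certifying dual solution. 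We may carry this out under either truthfulness notion, since \cref{lemma:DSIC_without_loss_general} transfers optimality from DSIC to BIC, and the ironing variables $\tau$ have already been folded into $\tilde\varphi_i^\alpha$ via \cref{lemma:ironing}.

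First I would record the ``easy'' half: dual feasibility and the complementary-slackness conditions that hold essentially by construction. Each $\eta_e$ is a sum of non-negative buy-ins and $\psi_i=\max\{b_i,0\}$, so $\eta,\psi\ge 0$. The competitive-edge loop maintains the invariant that a player's current budget equals its initial budget minus the total buy-in charged to it along its path; reading this off at termination gives, for every $i$ never removed from $\mathcal I$ (the fully-allocated players), $\psi_i+\sum_{e\in(s_i\leadsto t_i)}\tfrac{d_i}{c_e}\eta_e=b_i$, i.e.\ the dual constraint~\eqref{eq:dual_flow_constraint} is tight, consistent with $a_i=1$; for $i\in\mathcal J$ the residual budget is $0$, so the constraint is again tight while $\psi_i=0$, consistent with $a_i\ge 0$; and for $i$ with $b_i<0$ one has $\psi_i+\sum\tfrac{d_i}{c_e}\eta_e=0>b_i$, consistent with $a_i=0$. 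The implication ``$\psi_i>0\Rightarrow a_i=1$'' is exactly Step~\ref{alg:allocations}(a). What remains to be checked is (1) primal feasibility, i.e.\ $0\le a_i\le 1$ together with~\eqref{eq:application_constraints}, and (2) the slackness condition $\eta_e>0\Rightarrow\sum_{i:\,e\in(s_i\leadsto t_i)}d_i a_i=c_e$.

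The crux — and the step I expect to be the main obstacle — is (1)+(2), and in particular the fractional fill-up in Step~\ref{alg:allocations}(b). One needs: (i) the loop terminates, because the competition~\eqref{eq:competitive-edge} strictly decreases whenever a player is removed and there are finitely many players; (ii) upon termination no edge is competitive with respect to the final set $\mathcal I$ of fully-allocated players, so full allocation of $\mathcal I$ leaves a non-negative residual capacity $\delta_e$ on every edge; and (iii) processing $\mathcal J$ in its fixed order, each player fractionally filling the residual of the edge on which it exhausted its budget, both exactly consumes the priced edges' residuals — yielding (2) — and never overflows any \emph{other} edge on that player's path. Point (iii) is genuinely delicate: it requires carefully relating, for a player $i\in\mathcal J$ and each edge $e'$ on its path, the amount $i$ paid into $e'$ during the loop, the final value of $\eta_{e'}$, and the residual capacity left on $e'$ when it becomes $i$'s turn, so that tightness of~\eqref{eq:dual_flow_constraint} at the allocated $\mathcal J$-players is preserved and no capacity is violated. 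I would establish this through an invariant on the set of edges already processed — exploiting the tree topology, which makes each path $s_i\leadsto t_i$ unique and constrains how different players' paths can interleave — by induction on the iterations of the loop and then on the order of $\mathcal J$. Once (i)--(iii) are in place, all KKT conditions hold, so $\vec a(\vec v)$ is optimal for the per-profile program; assembling over all $\vec v$, computing $\vec p$ from~\eqref{eq:first_payment_rule}, and invoking the consistent tie-breaking (which, with the monotone $\tilde\varphi_i^\alpha$, makes the allocation monotone), \cref{th:discrete-optimal-general} yields that $(\vec a,\vec p)$ is an optimal auction for the flow problem.
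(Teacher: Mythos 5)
Your overall strategy coincides with the paper's: verify that the algorithm's output, with the budgets and buy-ins read off as the multipliers $\psi$ and $\eta$ (the paper additionally introduces a multiplier $\kappa_i(\vec v)$ for the constraint $a_i(\vec v)\geq 0$), is feasible for the \ref{eq:general_KKT} system, decomposing per bid profile, with the ironing already folded into $\tilde{\varphi}^\alpha$ and monotonicity obtained for free from the deterministic tie-breaking; your handling of dual feasibility and of the condition $\psi_i(\vec v)>0 \Rightarrow a_i(\vec v)=1$ matches the paper's proof.

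The problem is that your argument stops at what you yourself call the crux. Point (iii) --- that the fill-up in Step~\ref{alg:allocations} saturates every edge with $\eta_e>0$, never overflows any other edge on a $\mathcal{J}$-player's path, and is compatible with complementary slackness for those players --- is precisely where the content of the theorem lies, and you only announce an invariant-plus-induction plan without executing it; the paper's proof, however tersely, does argue each of these verifications (primal feasibility via the termination condition ``no competitive edge w.r.t.\ the final $\mathcal{I}$'', saturation of priced edges via the fill-up, and the slackness implications for $\psi$, $\kappa$, $\eta$). Moreover, your ``easy half'' already contains a claim that this missing analysis must repair: for $i\in\mathcal{J}$ you assert that \eqref{eq:dual_flow_constraint} is tight because $i$'s residual budget is zero, but the charges $i$ actually paid were computed with the values of $\eta_e$ at the time of $i$'s removal, whereas tightness must hold for the \emph{final} values; if an edge on $i$'s path is priced again after $i$ leaves $\mathcal{I}$, the constraint is slack, and complementary slackness then forces $a_i(\vec v)=0$, so one must show either that this cannot occur for a $\mathcal{J}$-player receiving a positive fill-up allocation or absorb the slack with $\kappa_i(\vec v)>0$ and a zero allocation, as the paper does. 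Until these points are established (your proposed induction over the pricing rounds and the order of $\mathcal{J}$, exploiting the tree topology, is a plausible route), what you have is a correct plan rather than a proof.
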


\begin{proof}
    We prove optimality as the computed solution of the algorithm is feasible
    for the respective KKT system. To do so we have to verify that each
    component of the \ref{eq:general_KKT} system is satisfied. Note, that by the
    observations of~\cref{sec:general-virtual-welfare-maximization} using the
    ironed $\alpha$-virtual values and a deterministic tie-breaking rule, we get
    the monotonicity for free and do not have to consider these constraints nor
    the corresponding complementary slackness condition.
    \begin{itemize}
        \item Primal feasibility:\\
        The allocations computed in Step~\ref{alg:allocations} are clearly
        non-negative and at most one. This step also ensures the feasible flows,
        i.e. $E\vec{a} \leq 1$: As the algorithm eliminates players from
        $\mathcal{I}$ until no edge is competitive any more, i.e. until
        $$\sum_{i \in \mathcal{I},\, e \in (s_i\leadsto t_i)} \frac{d_i}{c_e}
        \leq 1$$ holds for all edges, the full allocation never exceeds
        capacity. Filling up the remaining capacities fractionally, leads to
        exploitation but never to an overflow.
        \item Dual constraints:\\
        The dual constraint corresponding to $a_i(k,\vec{k}_{-i})$ essentially
        has the form of \eqref{eq:dual_flow_constraint}. In the KKT formulation.
        there is an additional non-negative dual variable $\kappa_i (\vec{v})$
        corresponding to the non-negativity of the allocation variable.
        Furthermore, the constraint then is an equation, namely
        $$\psi_i (\vec{v}) + \sum_{e\in (s_i\leadsto t_i)} \frac{d_i}{c_e}
        \eta_e(\vec{v}) -\kappa_i(\vec{v}) = f(\vec{v}) \tilde{\varphi}_i^\alpha
        (k).$$ If a player's buy-in costs are too high, i.e., $\sum_{e\in
        (s_i\leadsto t_i)} \frac{d_i}{c_e} \eta_e(\vec{v}) > f(\vec{v})
        \tilde{\varphi}_i^\alpha (k)$ we can choose $\kappa_i(\vec{v})>0$ such
        that equality is ensured. The other way round, if there is budget left,
        we set $\psi_i(\vec{v})>0$ such that equality is ensured as well.
        \item Dual feasibility:\\
        In Step~\ref{alg:set}, $\psi_i (\vec{v})$ and $\eta_e(\vec{v})$ are
        initialized as zero. The edge prices $\eta_e(\vec{v})$ can only increase
        in Step~\ref{alg:competitive}, and in Step~\ref{alg:allocations} we set
        $\psi_i(\vec{v})$ equal to the remaining budget or zero. Hence, the dual
        variables are clearly non-negative.
        \item Complementary slackness: 
        \begin{itemize}
            \item $\psi_i(\vec{v})>0 \then a_i(\vec{v})=1$ and $a_i(\vec{v})<1
            \then \psi_i(\vec{v})=0$\\
            $\psi_i(\vec{v})$ is only positive, if player $i$ had some budget
            left in the end and is fully allocated. On the other hand, if player
            $i$ is not fully allocated, i.e. $a_i(\vec{v}) < 1$,
            $\psi_i(\vec{v})>0$ would be a contradiction to
            Step~\ref{alg:allocations}.
            
            \item $a_i(\vec{v})>0 \then \kappa_i(\vec{v})=0$ and
            $\kappa_i(\vec{v})>0 \then a_i(\vec{v})=0$

            A player with positive allocation $a_i(\vec{v}) > 0$, either is
            fully allocated and has positive $\psi_i(\vec{v})$ equal to the
            remaining budget or is fractionally allocated and has exactly zero
            budget left in the end, i.e. in both cases $$\psi_i (\vec{v}) +
            \sum_{e\in (s_i\leadsto t_i)} \frac{d_i}{c_e} \eta_e(\vec{v}) =
            f(\vec{v}) \tilde{\varphi}_i^\alpha (k)$$ and we can set
            $\kappa_i(\vec{v})=0$ then. If $\kappa_i(\vec{v})>0$ is necessary to
            ensure equality in the corresponding dual constraint, the player
            cannot afford the buy-in for the required edges and gets no positive
            allocation.
            \item $\eta_e(\vec{v}) > 0 \then \sum_{i,\, e \in (s_i\leadsto t_i)}
            \frac{d_i}{c_e} a_i(\vec{v}) = 1$ and $\sum_{i,\, e \in (s_i\leadsto
            t_i)} \frac{d_i}{c_e} a_i(\vec{v}) < 1 \then \eta_e(\vec{v}) =
            0$\\
            $\eta_e$ only increases in Step~\ref{alg:competitive} if $e$ is a
            competitive edge. By the fractional allocation in
            Step~\ref{alg:allocations} a player that would exceed the capacity
            when being fully allocated fills up the remaining capacity. Thus, a
            positive $\eta_e$ always corresponds to an edge with exploited
            capacity. An edge with leftover capacity was never the edge with
            highest competition in Step~\ref{alg:competitive}, thus, $\eta_e =
            0$ never increased from the initialization in Step~\ref{alg:set}.
            Otherwise, a player would have been eliminated from competition at
            one point and the remaining capacity would have been used by this
            player.
        \end{itemize}
    \end{itemize}
\end{proof}

\begin{acks}
We would like to thank Deutsche Forschungsgemeinschaft (DFG) for their support within subproject B07 of the Sonderforschungsbereich/Transregio 154 ``Mathematical Modelling, Simulation and Optimization using the Example of Gas Networks''. Also, we would like to thank an anonymous reviewer of the conference version of this paper for suggesting the generalization from TU to TDI for the characterization of integral auctions (\cref{lemma:integral-optimality}).
\end{acks}
\printbibliography

\appendix
\section*{Appendix}

\section{Local DSIC/BIC Characterization (Lemma~\ref{lemma:local-DSIC})}
\label{append:local-IC}
In this appendix we first give a formal proof of the alternative local
characterization of \ref{eq:DSIC-def} truthfulness presented
in~\cref{lemma:local-DSIC}, and then argue how it can readily be adapted to
\ref{eq:BIC-def} truthfulness as well.

\begin{proof}[Proof of~\cref{lemma:local-DSIC}]
    For the proof of \cref{lemma:local-DSIC} we first show the equivalence of the \eqref{eq:DSIC-def} conditions and the local DSIC
    constraints~\eqref{eq:local-IC-down} and~\eqref{eq:local-IC-up}. To do so, we first show that the local DSIC constraints imply monotonicity. Then the constraint that truthful bidding yields higher utility than deviating to the next but one value is implied by two local constraints. This can be done to the higher as well as to the lower values. In abuse of notation, we mainly drop the index of player $i$ and assume that any allocation of payment
    variable has as a second input argument the other players' fixed values
    $\vec{v}_{-i}$.

    \paragraph{$\then$)}
    Observe that the local DSIC constraints~\eqref{eq:local-IC-down}
    and~\eqref{eq:local-IC-up} are trivially implied by the \eqref{eq:DSIC-def}
    conditions, as they represent only are a reduced subset.

    \paragraph{$\Longleftarrow$)}
    Now assume that $(\vec a, \vec p)$ satisfies the local DSIC
    constraints~\eqref{eq:local-IC-down} and~\eqref{eq:local-IC-up}. We fix a
    player $i$, value $v \in V_i$ and some other players' values $\vec{v}_{-i}$.
    Adding up the two local constraints
    \begin{align*}
         v^+ a(v^+)-p(v^+) &\geq v^+ a(v)-p(v), \; \text{and}\\
        v a(v)-p(v) &\geq v a(v^+)-p(v^+)
    \end{align*}
    we obtain
    \begin{align}
        v^+ a(v^+) + v a(v) &\geq v^+ a(v) + v a(v^+)\\
        (v^+ - v) a(v^+) &\geq (v^+ - v) a(v).
    \end{align}
    Therefore, the local DSIC constraints~\eqref{eq:local-IC-down}
    and~\eqref{eq:local-IC-up} imply~\eqref{eq:local-IC-monotonicity}
    monotonicity.
    
    According to the downward deviation we consider the two local inequalities
    \begin{align*}
        v^+ a(v^+)-p(v^+) &\geq v^+ a(v)-p(v), \quad \text{and} \\
        v a(v)-p(v) &\geq v a(v^-)-p(v^-).
    \end{align*}
    Again we add them up and obtain by using monotonicity in the last inequality
    \begin{align*}
        v^+ a(v^+)-p(v^+) +v a(v)-p(v) &\geq v^+ a(v)-p(v) + v a(v^-)-p(v^-) \\
        v^+ a(v^+)-p(v^+) +v a(v) &\geq v^+ a(v) + v a(v^-)-p(v^-) \\
        v^+ a(v^+)-p(v^+)  &\geq v^+ a(v) + v a(v^-)-p(v^-) - v a(v) + v^+ a(v^-) - v^+ a(v^-)\\
        v^+ a(v^+)-p(v^+)  &\geq v^+ ( a(v) -a(v^-)) - v ( a(v) - a(v^-))+ v^+ a(v^-)  -p(v^-) \\
        v^+ a(v^+)-p(v^+)  &\geq v^+ a(v^-) - p(v^-) + (v^+ -v) ( a(v) -a(v^-))\\
        v^+ a(v^+)-p(v^+)  &\geq v^+ a(v^-) - p(v^-).
    \end{align*}
    Thus, the two local downward inequalities imply the constraint of the
    deviation to the next but one value. The analogue of the upward deviation
    follows directly when switching $v^+$ and $v^-$. This clearly can be
    extended to see that the deviation to any other value, hence all
    \eqref{eq:DSIC-def} constraints are implied by the local DSIC constraints. 
\end{proof}

Now we could rewrite the proof replacing each allocation and payment variable
with the interim versions 
$$A_i(v) \coloneqq \expect[\vec{v}_{-i}\sim \vec{F}_{-i}]{a_i(v,\vec{v}_{-i})}
\qquad \text{and} \qquad P_i(v) \coloneqq \expect[\vec{v}_{-i}\sim
\vec{F}_{-i}]{p_i(v,\vec{v}_{-i})}$$ and rederive the same result for the
\eqref{eq:BIC-def} conditions, only in expectation with respect to the other
players' values.

\section{Proof of Lemma~\ref{lemma:lambda_positive}}
\label{append:lemma:lambda_positive-proof}
    To prove this, we fix a player $i$ as well as the other players' bids
    $\vec{v}_{-i}$, and proceed inductively over $k\in[K_i]$ considering the
    equality constraints corresponding to the payment variables. We know by the
    dual borderline cases~\eqref{eq:borderline-DP1} that $\lambda_i(K,K+1,
    \vec{k}_{-i}) = \lambda_i(K+1,K, \vec{k}_{-i}) = 0$. 
    
    We start the induction for $k=K_i$ and obtain
    \begin{align*}
        &\lambda_i(K_i,K_i-1,\vec{k}_{-i}) + \lambda_i(K_i,K_i+1,\vec{k}_{-i}) -   \lambda_i(K_i+1,K,\vec{k}_{-i}) -  \lambda_i(K_i-1,K,\vec{k}_{-i}) = f(K_i,\vec{k}_{-i}) \\
        &\lambda_i(K_i,K_i-1,\vec{k}_{-i}) -  \lambda_i(K_i-1,K,\vec{k}_{-i}) = f(K_i,\vec{k}_{-i}).
    \end{align*}
    As all probabilities are strictly positive, the difference of the $\lambda$
    variables is positive as well. By the non-negativity, the same has to hold
    for the downward variable $\lambda_i(K_i,K_i-1,\vec{k}_{-i}).$

    Now assume that for some $k\in[K_i]$ we have the positive difference
    $\lambda_i(k,k-1,\vec{k}_{-i}) - \lambda_i(k-1,k,\vec{k}_{-i}) > 0$ and
    consider the equality constraint for $k-1$,
    \begin{align*}
        \lambda_i(k-1,k-2,\vec{k}_{-i}) + \lambda_i(k-1,k,\vec{k}_{-i})  -   \lambda_i(k,k-1,\vec{k}_{-i}) -  \lambda_i(k-2,k-1,\vec{k}_{-i}) = f(k-1,\vec{k}_{-i}) \\
        \lambda_i(k-1,k-2,\vec{k}_{-i}) -  \lambda_i(k-2,k-1,\vec{k}_{-i}) = f(k-1,\vec{k}_{-i}) +  \lambda_i(k,k-1,\vec{k}_{-i}) - \lambda_i(k-1,k,\vec{k}_{-i}).
    \end{align*}
    Thus, the difference $\lambda_i(k-1,k-2,\vec{k}_{-i}) -
    \lambda_i(k-2,k-1,\vec{k}_{-i})>0$ is strictly positive as well and by that
    the downward $\lambda_i(k-1,k-2,\vec{k}_{-i})>0.$ This follows for all
    downward $\lambda_i$ as long as the right-hand side is positive, i.e., until
    $\lambda_i(1,0,\vec{k}_{-i})$.

\end{document}